\newtheorem{Theo}{Theorem}
\newtheorem{Corol}{Corollary}
\newtheorem{Def}{Definition}
\newtheorem{Le}{Lemma}
\newtheorem{Rem}{Remark}
\newtheorem{Exp}{Example}
\let\pdfoutput=\undefined\fi
\chardef\@x10\chardef\@xv60
\def\tcitime{
\def\@time{%
  \@minute\time\@hour\@minute\divide\@hour\@xv
  \ifnum\@hour<\@x 0\fi\the\@hour:%
  \multiply\@hour\@xv\advance\@minute-\@hour
  \ifnum\@minute<\@x 0\fi\the\@minute
  }}%
\def\x@hyperref#1#2#3{%
   \catcode`\~ = 12
   \catcode`\$ = 12
   \catcode`\_ = 12
   \catcode`\# = 12
   \catcode`\& = 12
   \catcode`\% = 12
   \y@hyperref{#1}{#2}{#3}%
}
\def\y@hyperref#1#2#3#4{%
   #2\ref{#4}#3
   \catcode`\~ = 13
   \catcode`\$ = 3
   \catcode`\_ = 8
   \catcode`\# = 6
   \catcode`\& = 4
   \catcode`\% = 14
}
\def\QCTOpt[#1]#2{%
  \def\QCTOptB{#1}
  \def\QCTOptA{#2}
}
\def\QCTNOpt#1{%
  \def\QCTOptA{#1}
  \let\QCTOptB\empty
}
\def\Qct{%
  \@ifnextchar[{%
    \QCTOpt}{\QCTNOpt}
}
\def\QCBOpt[#1]#2{%
  \def\QCBOptB{#1}%
  \def\QCBOptA{#2}%
}
\def\QCBNOpt#1{%
  \def\QCBOptA{#1}%
  \let\QCBOptB\empty
}
\def\Qcb{%
  \@ifnextchar[{%
    \QCBOpt}{\QCBNOpt}%
}
\def\PrepCapArgs{%
  \ifx\QCBOptA\empty
    \ifx\QCTOptA\empty
      {}%
    \else
      \ifx\QCTOptB\empty
        {\QCTOptA}%
      \else
        [\QCTOptB]{\QCTOptA}%
      \fi
    \fi
  \else
    \ifx\QCBOptA\empty
      {}%
    \else
      \ifx\QCBOptB\empty
        {\QCBOptA}%
      \else
        [\QCBOptB]{\QCBOptA}%
      \fi
    \fi
  \fi
}
\def\GRAPHICSPS#1{%
 \ifcase\GRAPHICSTYPE
   \special{ps: #1}%
 \or
   \special{language "PS", include "#1"}%
 \fi
}%
\def\graffile#1#2#3#4{%
    \bgroup
	   \@inlabelfalse
       \leavevmode
       \@ifundefined{bbl@deactivate}{\def~{\string~}}{\activesoff}%
        \raise -#4 \BOXTHEFRAME{%
           \hbox to #2{\raise #3\hbox to #2{\null #1\hfil}}}%
    \egroup
}%
\def\draftbox#1#2#3#4{%
 \leavevmode\raise -#4 \hbox{%
  \frame{\rlap{\protect\tiny #1}\hbox to #2%
   {\vrule height#3 width\z@ depth\z@\hfil}%
  }%
 }%
}%
\let\nographics=\@msidraft
\newif\ifwasdraft
\def\GRAPHIC#1#2#3#4#5{%
   \ifnum\@msidraft=\@ne\draftbox{#2}{#3}{#4}{#5}%
   \else\graffile{#1}{#3}{#4}{#5}%
   \fi
}
\def\addtoLaTeXparams#1{%
    \edef\LaTeXparams{\LaTeXparams #1}}%
\newif\ifBoxFrame \BoxFramefalse
\newif\ifOverFrame \OverFramefalse
\newif\ifUnderFrame \UnderFramefalse
\def\BOXTHEFRAME#1{%
   \hbox{%
      \ifBoxFrame
         \frame{#1}%
      \else
         {#1}%
      \fi
   }%
}
\def\doFRAMEparams#1{\BoxFramefalse\OverFramefalse\UnderFramefalse\readFRAMEparams#1\end}%
\def\readFRAMEparams#1{%
 \ifx#1\end%
  \let\next=\relax
  \else
  \ifx#1i\dispkind=\z@\fi
  \ifx#1d\dispkind=\@ne\fi
  \ifx#1f\dispkind=\tw@\fi
  \ifx#1t\addtoLaTeXparams{t}\fi
  \ifx#1b\addtoLaTeXparams{b}\fi
  \ifx#1p\addtoLaTeXparams{p}\fi
  \ifx#1h\addtoLaTeXparams{h}\fi
  \ifx#1X\BoxFrametrue\fi
  \ifx#1O\OverFrametrue\fi
  \ifx#1U\UnderFrametrue\fi
  \ifx#1w
    \ifnum\@msidraft=1\wasdrafttrue\else\wasdraftfalse\fi
    \@msidraft=\@ne
  \fi
  \let\next=\readFRAMEparams
  \fi
 \next
 }%
\def\IFRAME#1#2#3#4#5#6{%
      \bgroup
      \let\QCTOptA\empty
      \let\QCTOptB\empty
      \let\QCBOptA\empty
      \let\QCBOptB\empty
      #6%
      \parindent=0pt
      \leftskip=0pt
      \rightskip=0pt
      \setbox0=\hbox{\QCBOptA}%
      \@tempdima=#1\relax
      \ifOverFrame
          \typeout{This is not implemented yet}%
          \show\HELP
      \else
         \ifdim\wd0>\@tempdima
            \advance\@tempdima by \@tempdima
            \ifdim\wd0 >\@tempdima
               \setbox1 =\vbox{%
                  \unskip\hbox to \@tempdima{\hfill\GRAPHIC{#5}{#4}{#1}{#2}{#3}\hfill}%
                  \unskip\hbox to \@tempdima{\parbox[b]{\@tempdima}{\QCBOptA}}%
               }%
               \wd1=\@tempdima
            \else
               \textwidth=\wd0
               \setbox1 =\vbox{%
                 \noindent\hbox to \wd0{\hfill\GRAPHIC{#5}{#4}{#1}{#2}{#3}\hfill}\\%
                 \noindent\hbox{\QCBOptA}%
               }%
               \wd1=\wd0
            \fi
         \else
            \ifdim\wd0>0pt
              \hsize=\@tempdima
              \setbox1=\vbox{%
                \unskip\GRAPHIC{#5}{#4}{#1}{#2}{0pt}%
                \break
                \unskip\hbox to \@tempdima{\hfill \QCBOptA\hfill}%
              }%
              \wd1=\@tempdima
           \else
              \hsize=\@tempdima
              \setbox1=\vbox{%
                \unskip\GRAPHIC{#5}{#4}{#1}{#2}{0pt}%
              }%
              \wd1=\@tempdima
           \fi
         \fi
         \@tempdimb=\ht1
         \advance\@tempdimb by -#2
         \advance\@tempdimb by #3
         \leavevmode
         \raise -\@tempdimb \hbox{\box1}%
      \fi
      \egroup%
}%
\def\DFRAME#1#2#3#4#5{%
  \vspace\topsep
  \hfil\break
  \bgroup
     \leftskip\@flushglue
	 \rightskip\@flushglue
	 \parindent\z@
	 \parfillskip\z@skip
     \let\QCTOptA\empty
     \let\QCTOptB\empty
     \let\QCBOptA\empty
     \let\QCBOptB\empty
	 \vbox\bgroup
        \ifOverFrame 
           #5\QCTOptA\par
        \fi
        \GRAPHIC{#4}{#3}{#1}{#2}{\z@}%
        \ifUnderFrame 
           \break#5\QCBOptA
        \fi
	 \egroup
  \egroup
  \vspace\topsep
  \break
}%
\def\FFRAME#1#2#3#4#5#6#7{%
  \@ifundefined{floatstyle}
    {
     \begin{figure}[#1]%
    }
    {
	 \ifx#1h
      \begin{figure}[H]%
	 \else
      \begin{figure}[#1]%
	 \fi
	}
  \let\QCTOptA\empty
  \let\QCTOptB\empty
  \let\QCBOptA\empty
  \let\QCBOptB\empty
  \ifOverFrame
    #4
    \ifx\QCTOptA\empty
    \else
      \ifx\QCTOptB\empty
        \caption{\QCTOptA}%
      \else
        \caption[\QCTOptB]{\QCTOptA}%
      \fi
    \fi
    \ifUnderFrame\else
      \label{#5}%
    \fi
  \else
    \UnderFrametrue%
  \fi
  \begin{center}\GRAPHIC{#7}{#6}{#2}{#3}{\z@}\end{center}%
  \vspace{-11pt}
  \ifUnderFrame
    #4
    \ifx\QCBOptA\empty
      \caption{}%
    \else
      \ifx\QCBOptB\empty
        \caption{\QCBOptA}%
      \else
        \caption[\QCBOptB]{\QCBOptA}%
      \fi
    \fi
    \label{#5}%
  \fi
  \end{figure}%
 }%
\def\makeactives{
  \catcode`\"=\active
  \catcode`\;=\active
  \catcode`\:=\active
  \catcode`\'=\active
  \catcode`\~=\active
}
   \gdef\activesoff{%
      \def"{\string"}%
      \def;{\string;}%
      \def:{\string:}%
      \def'{\string'}%
      \def~{\string~}%
    }
\def\FRAME#1#2#3#4#5#6#7#8{%
 \bgroup
 \ifnum\@msidraft=\@ne
   \wasdrafttrue
 \else
   \wasdraftfalse%
 \fi
 \def\LaTeXparams{}%
 \dispkind=\z@
 \def\LaTeXparams{}%
 \doFRAMEparams{#1}%
 \ifnum\dispkind=\z@\IFRAME{#2}{#3}{#4}{#7}{#8}{#5}\else
  \ifnum\dispkind=\@ne\DFRAME{#2}{#3}{#7}{#8}{#5}\else
   \ifnum\dispkind=\tw@
    \edef\@tempa{\noexpand\FFRAME{\LaTeXparams}}%
    \@tempa{#2}{#3}{#5}{#6}{#7}{#8}%
    \fi
   \fi
  \fi
  \ifwasdraft\@msidraft=1\else\@msidraft=0\fi{}%
  \egroup
 }%
\def\TEXUX#1{"texux"}
\long\def\QQQ#1#2{%
     \long\expandafter\def\csname#1\endcsname{#2}}%
\long\def\QQA#1#2{}%
\def\QTR#1#2{{\csname#1\endcsname {#2}}}%
\def\EXPAND#1[#2]#3{}%
\def\NOEXPAND#1[#2]#3{}%
\def\LaTeXparent#1{}%
\def\ChildStyles#1{}%
\def\ChildDefaults#1{}%
\def\QTagDef#1#2#3{}%
  \providecommand{\UNICODE}[2][]{\protect\rule{.1in}{.1in}}
  \providecommand{\U}[1]{\protect\rule{.1in}{.1in}}
\def\QQfnmark#1{\footnotemark}
 \def\abstract{%
  \if@twocolumn
   \section*{Abstract (Not appropriate in this style!)}%
   \else \small 
   \begin{center}{\bf Abstract\vspace{-.5em}\vspace{\z@}}\end{center}%
   \quotation 
   \fi
  }%
   \def\registered{\relax\ifmmode{}\r@gistered
                    \else$\m@th\r@gistered$\fi}%
 \def\r@gistered{^{\ooalign
  {\hfil\raise.07ex\hbox{$\scriptstyle\rm\text{R}$}\hfil\crcr
  \mathhexbox20D}}}}{}%
\newdimen\theight
\def\newfmtname{LaTeX2e}
  \DeclareOldFontCommand{\rm}{\normalfont\rmfamily}{\mathrm}
  \DeclareOldFontCommand{\sf}{\normalfont\sffamily}{\mathsf}
  \DeclareOldFontCommand{\tt}{\normalfont\ttfamily}{\mathtt}
  \DeclareOldFontCommand{\bf}{\normalfont\bfseries}{\mathbf}
  \DeclareOldFontCommand{\it}{\normalfont\itshape}{\mathit}
  \DeclareOldFontCommand{\sl}{\normalfont\slshape}{\@nomath\sl}
  \DeclareOldFontCommand{\sc}{\normalfont\scshape}{\@nomath\sc}
\def\alpha{{\Greekmath 010B}}%
\def\beta{{\Greekmath 010C}}%
\def\gamma{{\Greekmath 010D}}%
\def\delta{{\Greekmath 010E}}%
\def\epsilon{{\Greekmath 010F}}%
\def\zeta{{\Greekmath 0110}}%
\def\eta{{\Greekmath 0111}}%
\def\theta{{\Greekmath 0112}}%
\def\iota{{\Greekmath 0113}}%
\def\kappa{{\Greekmath 0114}}%
\def\lambda{{\Greekmath 0115}}%
\def\mu{{\Greekmath 0116}}%
\def\nu{{\Greekmath 0117}}%
\def\xi{{\Greekmath 0118}}%
\def\pi{{\Greekmath 0119}}%
\def\rho{{\Greekmath 011A}}%
\def\sigma{{\Greekmath 011B}}%
\def\tau{{\Greekmath 011C}}%
\def\upsilon{{\Greekmath 011D}}%
\def\phi{{\Greekmath 011E}}%
\def\chi{{\Greekmath 011F}}%
\def\psi{{\Greekmath 0120}}%
\def\omega{{\Greekmath 0121}}%
\def\varepsilon{{\Greekmath 0122}}%
\def\vartheta{{\Greekmath 0123}}%
\def\varpi{{\Greekmath 0124}}%
\def\varrho{{\Greekmath 0125}}%
\def\varsigma{{\Greekmath 0126}}%
\def\varphi{{\Greekmath 0127}}%
\def\nabla{{\Greekmath 0272}}
\def\FindBoldGroup{%
   {\setbox0=\hbox{$\mathbf{x\global\edef\theboldgroup{\the\mathgroup}}$}}%
}
\def\Greekmath#1#2#3#4{%
    \if@compatibility
        \ifnum\mathgroup=\symbold
           \mathchoice{\mbox{\boldmath$\displaystyle\mathchar"#1#2#3#4$}}%
                      {\mbox{\boldmath$\textstyle\mathchar"#1#2#3#4$}}%
                      {\mbox{\boldmath$\scriptstyle\mathchar"#1#2#3#4$}}%
                      {\mbox{\boldmath$\scriptscriptstyle\mathchar"#1#2#3#4$}}%
        \else
           \mathchar"#1#2#3#4%
        \fi 
    \else 
        \FindBoldGroup
        \ifnum\mathgroup=\theboldgroup 
           \mathchoice{\mbox{\boldmath$\displaystyle\mathchar"#1#2#3#4$}}%
                      {\mbox{\boldmath$\textstyle\mathchar"#1#2#3#4$}}%
                      {\mbox{\boldmath$\scriptstyle\mathchar"#1#2#3#4$}}%
                      {\mbox{\boldmath$\scriptscriptstyle\mathchar"#1#2#3#4$}}%
        \else
           \mathchar"#1#2#3#4%
        \fi     	    
	  \fi}
\newif\ifGreekBold  \GreekBoldfalse
\let\SAVEPBF=\pbf
\def\pbf{\GreekBoldtrue\SAVEPBF}%
  \newcounter{equationnumber}  
  \def\mathletters{%
     \addtocounter{equation}{1}
     \edef\@currentlabel{\theequation}%
     \setcounter{equationnumber}{\c@equation}
     \setcounter{equation}{0}%
     \edef\theequation{\@currentlabel\noexpand\alph{equation}}%
  }
    \def\BibTeX{{\rm B\kern-.05em{\sc i\kern-.025em b}\kern-.08em
                 T\kern-.1667em\lower.7ex\hbox{E}\kern-.125emX}}}{}%
\def\AmS{{\protect\usefont{OMS}{cmsy}{m}{n}%
                A\kern-.1667em\lower.5ex\hbox{M}\kern-.125emS}}}{}%
\def\@@eqncr{\let\@tempa\relax
    \ifcase\@eqcnt \def\@tempa{& & &}\or \def\@tempa{& &}%
      \else \def\@tempa{&}\fi
     \@tempa
     \if@eqnsw
        \iftag@
           \@taggnum
        \else
           \@eqnnum\stepcounter{equation}%
        \fi
     \fi
     \global\tag@false
     \global\@eqnswtrue
     \global\@eqcnt\z@\cr}
\def\TCItag{\@ifnextchar*{\@TCItagstar}{\@TCItag}}
\def\@TCItag#1{%
    \global\tag@true
    \global\def\@taggnum{(#1)}%
    \global\def\@currentlabel{#1}}
\def\@TCItagstar*#1{%
    \global\tag@true
    \global\def\@taggnum{#1}%
    \global\def\@currentlabel{#1}}
\def\tint{\msi@int\textstyle\int}%
\def\tiint{\msi@int\textstyle\iint}%
\def\tiiint{\msi@int\textstyle\iiint}%
\def\tiiiint{\msi@int\textstyle\iiiint}%
\def\tidotsint{\msi@int\textstyle\idotsint}%
\def\toint{\msi@int\textstyle\oint}%
\newtoks\temptoksa
\newtoks\temptoksb
\newtoks\temptoksc
\def\msi@int#1#2{%
 \def\@temp{{#1#2\the\temptoksc_{\the\temptoksa}^{\the\temptoksb}}}%
 \futurelet\@nextcs
 \@int
}
\def\@int{%
   \ifx\@nextcs\limits
      \typeout{Found limits}%
      \temptoksc={\limits}%
	  \let\@next\@intgobble%
   \else\ifx\@nextcs\nolimits
      \typeout{Found nolimits}%
      \temptoksc={\nolimits}%
	  \let\@next\@intgobble%
   \else
      \typeout{Did not find limits or no limits}%
      \temptoksc={}%
      \let\@next\msi@limits%
   \fi\fi
   \@next   
}%
\def\@intgobble#1{%
   \typeout{arg is #1}%
   \msi@limits
}
\def\msi@limits{%
   \temptoksa={}%
   \temptoksb={}%
   \@ifnextchar_{\@limitsa}{\@limitsb}%
}
\def\@limitsa_#1{%
   \temptoksa={#1}%
   \@ifnextchar^{\@limitsc}{\@temp}%
}
\def\@limitsb{%
   \@ifnextchar^{\@limitsc}{\@temp}%
}
\def\@limitsc^#1{%
   \temptoksb={#1}%
   \@ifnextchar_{\@limitsd}{\@temp}%
}
\def\@limitsd_#1{%
   \temptoksa={#1}%
   \@temp
}
\def\dint{\msi@int\displaystyle\int}%
\def\diint{\msi@int\displaystyle\iint}%
\def\diiint{\msi@int\displaystyle\iiint}%
\def\diiiint{\msi@int\displaystyle\iiiint}%
\def\didotsint{\msi@int\displaystyle\idotsint}%
\def\doint{\msi@int\displaystyle\oint}%
\def\ExitTCILatex{\makeatother }
\if@compatibility\message{amsmath already loaded}\fi\aftergroup\ExitTCILatex}
\if@compatibility\message{amstex already loaded}\fi\aftergroup\ExitTCILatex}
\if@compatibility\message{amsgen already loaded}\fi\aftergroup\ExitTCILatex}
\let\DOTSI\relax
\def\RIfM@{\relax\ifmmode}%
\def\FN@{\futurelet\next}%
\def\iint{\DOTSI\intno@\tw@\FN@\ints@}%
\def\iiint{\DOTSI\intno@\thr@@\FN@\ints@}%
\def\iiiint{\DOTSI\intno@4 \FN@\ints@}%
\def\idotsint{\DOTSI\intno@\z@\FN@\ints@}%
\def\ints@{\findlimits@\ints@@}%
\newif\iflimtoken@
\newif\iflimits@
\def\findlimits@{\limtoken@true\ifx\next\limits\limits@true
 \else\ifx\next\nolimits\limits@false\else
 \limtoken@false\ifx\ilimits@\nolimits\limits@false\else
 \ifinner\limits@false\else\limits@true\fi\fi\fi\fi}%
\def\multint@{\int\ifnum\intno@=\z@\intdots@                          
 \else\intkern@\fi                                                    
 \ifnum\intno@>\tw@\int\intkern@\fi                                   
 \ifnum\intno@>\thr@@\int\intkern@\fi                                 
 \int}
\def\multintlimits@{\intop\ifnum\intno@=\z@\intdots@\else\intkern@\fi
 \ifnum\intno@>\tw@\intop\intkern@\fi
 \ifnum\intno@>\thr@@\intop\intkern@\fi\intop}%
\def\intic@{%
    \mathchoice{\hskip.5em}{\hskip.4em}{\hskip.4em}{\hskip.4em}}%
\def\negintic@{\mathchoice
 {\hskip-.5em}{\hskip-.4em}{\hskip-.4em}{\hskip-.4em}}%
\def\ints@@{\iflimtoken@                                              
 \def\ints@@@{\iflimits@\negintic@
   \mathop{\intic@\multintlimits@}\limits                             
  \else\multint@\nolimits\fi                                          
  \eat@}
 \else                                                                
 \def\ints@@@{\iflimits@\negintic@
  \mathop{\intic@\multintlimits@}\limits\else
  \multint@\nolimits\fi}\fi\ints@@@}%
\def\intkern@{\mathchoice{\!\!\!}{\!\!}{\!\!}{\!\!}}%
\def\plaincdots@{\mathinner{\cdotp\cdotp\cdotp}}%
\def\intdots@{\mathchoice{\plaincdots@}%
 {{\cdotp}\mkern1.5mu{\cdotp}\mkern1.5mu{\cdotp}}%
 {{\cdotp}\mkern1mu{\cdotp}\mkern1mu{\cdotp}}%
 {{\cdotp}\mkern1mu{\cdotp}\mkern1mu{\cdotp}}}%
\def\RIfM@{\relax\protect\ifmmode}
\def\text{\RIfM@\expandafter\text@\else\expandafter\mbox\fi}
\let\nfss@text\text
\def\text@#1{\mathchoice
   {\textdef@\displaystyle\f@size{#1}}%
   {\textdef@\textstyle\tf@size{\firstchoice@false #1}}%
   {\textdef@\textstyle\sf@size{\firstchoice@false #1}}%
   {\textdef@\textstyle \ssf@size{\firstchoice@false #1}}%
   \glb@settings}
\def\textdef@#1#2#3{\hbox{{%
                    \everymath{#1}%
                    \let\f@size#2\selectfont
                    #3}}}
\newif\iffirstchoice@
\def\Let@{\relax\iffalse{\fi\let\\=\cr\iffalse}\fi}%
\def\vspace@{\def\vspace##1{\crcr\noalign{\vskip##1\relax}}}%
\def\multilimits@{\bgroup\vspace@\Let@
 \baselineskip\fontdimen10 \scriptfont\tw@
 \advance\baselineskip\fontdimen12 \scriptfont\tw@
 \lineskip\thr@@\fontdimen8 \scriptfont\thr@@
 \lineskiplimit\lineskip
 \vbox\bgroup\ialign\bgroup\hfil$\m@th\scriptstyle{##}$\hfil\crcr}%
\def\Sb{_\multilimits@}%
\def\endSb{\crcr\egroup\egroup\egroup}%
\def\Sp{^\multilimits@}%
\newdimen\ex@
\def\rightarrowfill@#1{$#1\m@th\mathord-\mkern-6mu\cleaders
 \hbox{$#1\mkern-2mu\mathord-\mkern-2mu$}\hfill
 \mkern-6mu\mathord\rightarrow$}%
\def\leftarrowfill@#1{$#1\m@th\mathord\leftarrow\mkern-6mu\cleaders
 \hbox{$#1\mkern-2mu\mathord-\mkern-2mu$}\hfill\mkern-6mu\mathord-$}%
\def\leftrightarrowfill@#1{$#1\m@th\mathord\leftarrow
\mkern-6mu\cleaders
 \hbox{$#1\mkern-2mu\mathord-\mkern-2mu$}\hfill
 \mkern-6mu\mathord\rightarrow$}%
\def\overrightarrow{\mathpalette\overrightarrow@}%
\def\overrightarrow@#1#2{\vbox{\ialign{##\crcr\rightarrowfill@#1\crcr
 \noalign{\kern-\ex@\nointerlineskip}$\m@th\hfil#1#2\hfil$\crcr}}}%
\def\overleftarrow{\mathpalette\overleftarrow@}%
\def\overleftarrow@#1#2{\vbox{\ialign{##\crcr\leftarrowfill@#1\crcr
 \noalign{\kern-\ex@\nointerlineskip}$\m@th\hfil#1#2\hfil$\crcr}}}%
\def\overleftrightarrow{\mathpalette\overleftrightarrow@}%
\def\overleftrightarrow@#1#2{\vbox{\ialign{##\crcr
   \leftrightarrowfill@#1\crcr
 \noalign{\kern-\ex@\nointerlineskip}$\m@th\hfil#1#2\hfil$\crcr}}}%
\def\underrightarrow{\mathpalette\underrightarrow@}%
\def\underrightarrow@#1#2{\vtop{\ialign{##\crcr$\m@th\hfil#1#2\hfil
  $\crcr\noalign{\nointerlineskip}\rightarrowfill@#1\crcr}}}%
\def\underleftarrow{\mathpalette\underleftarrow@}%
\def\underleftarrow@#1#2{\vtop{\ialign{##\crcr$\m@th\hfil#1#2\hfil
  $\crcr\noalign{\nointerlineskip}\leftarrowfill@#1\crcr}}}%
\def\underleftrightarrow{\mathpalette\underleftrightarrow@}%
\def\underleftrightarrow@#1#2{\vtop{\ialign{##\crcr$\m@th
  \hfil#1#2\hfil$\crcr
 \noalign{\nointerlineskip}\leftrightarrowfill@#1\crcr}}}%
\def\qopnamewl@#1{\mathop{\operator@font#1}\nlimits@}
\let\nlimits@\displaylimits
\def\setboxz@h{\setbox\z@\hbox}
\def\varlim@#1#2{\mathop{\vtop{\ialign{##\crcr
 \hfil$#1\m@th\operator@font lim$\hfil\crcr
 \noalign{\nointerlineskip}#2#1\crcr
 \noalign{\nointerlineskip\kern-\ex@}\crcr}}}}
 \def\rightarrowfill@#1{\m@th\setboxz@h{$#1-$}\ht\z@\z@
  $#1\copy\z@\mkern-6mu\cleaders
  \hbox{$#1\mkern-2mu\box\z@\mkern-2mu$}\hfill
  \mkern-6mu\mathord\rightarrow$}
\def\leftarrowfill@#1{\m@th\setboxz@h{$#1-$}\ht\z@\z@
  $#1\mathord\leftarrow\mkern-6mu\cleaders
  \hbox{$#1\mkern-2mu\copy\z@\mkern-2mu$}\hfill
  \mkern-6mu\box\z@$}
\def\projlim{\qopnamewl@{proj\,lim}}
\def\injlim{\qopnamewl@{inj\,lim}}
\def\varinjlim{\mathpalette\varlim@\rightarrowfill@}
\def\varprojlim{\mathpalette\varlim@\leftarrowfill@}
\def\varliminf{\mathpalette\varliminf@{}}
\def\varliminf@#1{\mathop{\underline{\vrule\@depth.2\ex@\@width\z@
   \hbox{$#1\m@th\operator@font lim$}}}}
\def\varlimsup{\mathpalette\varlimsup@{}}
\def\varlimsup@#1{\mathop{\overline
  {\hbox{$#1\m@th\operator@font lim$}}}}
\def\align{\@verbatim \frenchspacing\@vobeyspaces \@alignverbatim
You are using the "align" environment in a style in which it is not defined.}
\let\csname endalign*\endcsname =\endtrivlist
\def\alignat{\@verbatim \frenchspacing\@vobeyspaces \@alignatverbatim
You are using the "alignat" environment in a style in which it is not defined.}
\let\csname endalignat*\endcsname =\endtrivlist
\def\xalignat{\@verbatim \frenchspacing\@vobeyspaces \@xalignatverbatim
You are using the "xalignat" environment in a style in which it is not defined.}
\let\csname endxalignat*\endcsname =\endtrivlist
\def\gather{\@verbatim \frenchspacing\@vobeyspaces \@gatherverbatim
You are using the "gather" environment in a style in which it is not defined.}
\let\csname endgather*\endcsname =\endtrivlist
\def\multiline{\@verbatim \frenchspacing\@vobeyspaces \@multilineverbatim
You are using the "multiline" environment in a style in which it is not defined.}
\let\csname endmultiline*\endcsname =\endtrivlist
\def\arrax{\@verbatim \frenchspacing\@vobeyspaces \@arraxverbatim
You are using a type of "array" construct that is only allowed in AmS-LaTeX.}
\def\tabulax{\@verbatim \frenchspacing\@vobeyspaces \@tabulaxverbatim
You are using a type of "tabular" construct that is only allowed in AmS-LaTeX.}
\let\csname endarrax*\endcsname =\endtrivlist
\let\csname endtabulax*\endcsname =\endtrivlist
 \def\endequation{%
     \ifmmode\ifinner 
      \iftag@
        \addtocounter{equation}{-1} 
        $\hfil
           \displaywidth\linewidth\@taggnum\egroup \endtrivlist
        \global\tag@false
        \global\@ignoretrue   
      \else
        $\hfil
           \displaywidth\linewidth\@eqnnum\egroup \endtrivlist
        \global\tag@false
        \global\@ignoretrue 
      \fi
     \else   
      \iftag@
        \addtocounter{equation}{-1} 
        \eqno \hbox{\@taggnum}
        \global\tag@false%
        $$\global\@ignoretrue
      \else
        \eqno \hbox{\@eqnnum}
        $$\global\@ignoretrue
      \fi
     \fi\fi
 } 
 \newif\iftag@ \tag@false
 \def\TCItag{\@ifnextchar*{\@TCItagstar}{\@TCItag}}
 \def\@TCItag#1{%
     \global\tag@true
     \global\def\@taggnum{(#1)}%
     \global\def\@currentlabel{#1}}
 \def\@TCItagstar*#1{%
     \global\tag@true
     \global\def\@taggnum{#1}%
     \global\def\@currentlabel{#1}}
     \def\tag{\@ifnextchar*{\@tagstar}{\@tag}}
     \def\@tag#1{%
         \global\tag@true
         \global\def\@taggnum{(#1)}}
     \def\@tagstar*#1{%
         \global\tag@true
         \global\def\@taggnum{#1}}
\begin{document}

\date{}
\title{\textbf{Application of the unified control and detection framework to
detecting stealthy integrity cyber-attacks on feedback control systems }}
\author[1]{\small Steven X. Ding}
\author[2]{\small Linlin Li}
\author[1]{\small Dong Zhao}
\author[1]{\small Chris Louen}
\author[1]{\small Tianyu Liu}

\affil[1]{\small Institute for Automatic Control and Complex Systems, University of Duisburg-Essen, 47057, Duisburg, Germany}
\affil[2]{School of Automation and Electrical Engineering, University of Science and Technology Beijing, Beijing 100083, P. R. China (Corresponding author)}

\maketitle

\bigskip

\textbf{Abstract}: This draft addresses issues of detecting stealthy
integrity cyber-attacks on automatic control systems in the unified control
and detection framework. A general form of integrity cyber-attacks that
cannot be detected using the well-established observer-based technique is
first introduced as kernel attacks. The well-known replay, zero dynamics and
covert attacks are special forms of the kernel attacks. Existence conditions
for the kernel attacks are presented. It is demonstrated, in the unified
framework of control and detection, that all kernel attacks can be
structurally detected when not only the observer-based residual, but also
the control signal based residual signals are generated and used for the
detection purpose. Based on the analytical results, two schemes for
detecting the kernel attacks are then proposed, which allow reliable attack
detection without loss of control performance. While the first scheme is
similar to the well-established moving target method and auxiliary system
aided detection scheme, the second detector is realised with encrypted
transmissions of control and monitoring signals in the feedback control
system that prevents adversary to gain system knowledge by means of
eavesdropping attacks. Both schemes are illustrated by examples of detecting
replay, zero dynamics and covert attacks and an experimental study on a
three-tank control system.

\bigskip

\textbf{Keywords}: Cyber-security of control systems, observer-based
detection of integrity cyber-attacks, unified framework of control and
detection, kernel attacks, residual generation, observer-based detectors.

\section{Introduction}

Automatic control systems are essential system parts of many industrial
cyber-physical systems (CPSs) and their flawless operations are of elemental
importance for optimal system operation and high product quality. It is
therefore not surprising that automatic control systems are often immediate
targets of cyber-attacks on industrial CPSs. Driven by the rapidly
increasing industrial demands for higher cyber-security, detection of
cyber-attacks on automatic control systems has drawn incredible research
attention in the current decade. Excellent reviews of state of the art of
research in this thematic area can be found in the recent surveys published
in \cite%
{DHXGZ2018,Survey-attack-detection2018,DIBAJI2019-survey,YMA2019,TGXHV2020,ZHANG2021,Zhou2021IEEE-Proc}%
.

\bigskip

Among various types of cyber-attacks, integrity attacks are specially
directed to automatic control systems \cite{DIBAJI2019-survey,GWSOM2019}. By
injecting attack signals into system input and output channels, e.g. via I/O
and network interfaces, integrity attacks can lead to remarkable system
performance degradations and even catastrophic damages. An early and
reliable detection of integrity attacks is becoming a vital requirement on
cyber-security of industrial CPSs, for instance, for power control systems 
\cite{MMM2020}. Thanks to its well-established theoretical framework in the
past three decades, observer-based fault detection technique \cite{Ding2013}
is widely accepted as an efficient method, among numerous ones, to deal with
detection of integrity attacks on control systems \cite%
{DIBAJI2019-survey,GWSOM2019,TGXHV2020}. Unfortunately, different from
technical faults, cyber-attacks are artificially created and can be designed
and generated by an adversary. It is particularly insidious, when
cyber-attacks are generated in such a way that they cannot be detected using
the known detection techniques. Such cyber-attacks are called stealthy. This
observation and some real examples with stealthy cyber-attacks have strongly
motivated researchers to improve the existing detection schemes and develop
alternative solutions. In this regard, a great number of results have been
reported about detecting the so-called replay, zero dynamics and covert
attacks, which are stealthy integrity attacks as the standard observer-based
fault detection technique cannot detect them without modifications on the
applied algorithms \cite{DIBAJI2019-survey,GWSOM2019,TGXHV2020}.
Representative solutions are the watermark detection scheme \cite%
{Mo2015-Watermarked-detection}, the moving target method \cite%
{MT-method-CDC2015} and the auxiliary system aided detection scheme \cite%
{Zhang-CDC2017}, just citing the initial works on these methods. Our work is motivated
by the above observation and in particular driven by the questions like: what
is the general form of stealthy integrity attacks? what are the existence
conditions for such stealthy integrity attacks? is it possible to develop a
general observer-based scheme applied to detecting integrity attacks in
automatic control systems? Satisfactory answers to these questions could
help us (i) to reveal possible weakness of observer-based detection
technique by dealing with integrity cyber-attacks, and thus (ii) to prevent
new variations of stealthy integrity attacks, and (iii) to develop new
detection schemes, in particular such ones that are able to detect major
types of integrity attacks. The main objective of our work is to investigate
possible answers to the above questions. Different from the reported
studies, our work will study the issues of stealthy integrity attacks in the
unified framework of control and detection.

\bigskip

Inspired by the work in \cite{ZR2001}, and based on the parameterisations of
observers and observer-based residual generators \cite{Ding2013}, Ding et
al., 2010 \nocite{DYZDJWS2009}proposed an observer-based realisation and
implementation of all stabilising (dynamic output) controllers whose core is
an observer-based residual generator, and demonstrated its successful
applications. In the recent decade, on the basis of this work, a new unified
framework of control and detection has been established, which generalises
the integrated design schemes for control and detection initiated by Nett et
al. (1988) \nocite{NJM88} and further developed in the past decades \cite%
{KRNS96,SGN97,KNS04,HenryAUTO05,WY-LPV-08}. It has been applied to fault
diagnosis in automatic control systems with uncertainties, fault-tolerant
control and, more recently, to control performance degradation monitoring,
detection and recovery \cite{Ding2020}. The basic idea behind the control
and detection unified framework is that any controller is indeed
residual-driven and can be implemented in form of an observer and an
observer-based residual generator. This allows to extend the residual-based
detection space to the overall measurement space spanned by the system
inputs and outputs. As a result, it can be expected that the system
capability for detecting cyber-attacks is (considerably) enhanced.

\bigskip

The intended contributions of our work are summarised as

\begin{itemize}
\item revealing that any attacks lying in the system kernel space cannot be
detected by an observer-based detection system. In this context, the concept
of kernel attacks is introduced, which provides us with a general expression
of all stealthy integrity attacks (with respect to the observer-based
detection technique);

\item presenting existence conditions that integrity attacks are stealthy in
the unified framework of control and detection, and based on them,

\item proposing two schemes for detecting the kernel attacks (thus
including detecting replay, zero dynamics and covert attacks). The first one
is a natural extension of the observer-based detection schemes to a unified
control and detection system, while the second one is dedicated to a
detection scheme with encrypted transmissions of control and monitoring
signals in the feedback control system under consideration. This is helpful
to prevent adversary to gain system knowledge by means of eavesdropping
attacks.
\end{itemize}

\bigskip

The paper is organised as follows. In Section 2, the unified framework of
control and detection is first presented together with the necessary control
theoretical and mathematical preliminaries. It is followed by a short review
of replay, zero dynamics and covert attacks. Section 3 is dedicated to the
study on stealthy integrity attacks and introduction of the concept of
kernel attacks as a general form of stealthy integrity attacks. In Section
4, existence conditions for stealthy integrity attacks are first investigated and
presented. They build the basis for the development of two schemes for
detecting kernel attacks. These two schemes are presented in Sections 4 and
5, respectively. Their capability of detecting the kernel attacks are
illustrated and demonstrated by examples and experimental results in Section
6.

\bigskip

Throughout this paper, standard notations known in linear algebra and
advanced control theory are adopted. In addition, $\mathcal{RH}_{\infty }$
is used to denote the set of all stable systems. In the context of
cyber-attacks, when signal $\xi $ is attacked, it is denoted by $\xi ^{a},$
and the corresponding (injected) attack signal by $a_{\xi },$ i.e. $\xi
^{a}=\xi +a_{\xi }$.

\section{Preliminaries of system models, the unified framework of control
and detection, and stealthy integrity attacks}

As the methodological basis of our work, we first introduce the unified
framework of control and detection. It is followed by a short review of
system descriptions of stealthy integrity cyber-attacks on feedback control
systems.

\subsection{System representations and controller parameterisation}

\subsubsection{System factorisations, observer-based residual generation and
kernel space}

Consider a nominal plant model%
\begin{equation}
y(z)=G_{u}(z)u(z),y(z)\in \mathcal{C}^{m},u(z)\in \mathcal{C}^{p}
\label{eq2-1}
\end{equation}%
with $u$ and $y$ as the plant input and output vectors. It is assumed that $%
G_{u}(z)$ is a proper real-rational matrix and its minimal state space
realisation is given by the following discrete-time linear time invariant
(LTI) system%
\begin{align}
x(k+1)& =Ax(k)+Bu(k),x(0)=x_{0},  \label{eq2-2a} \\
y(k)& =Cx(k)+Du(k),  \label{eq2-2b}
\end{align}%
where $x\in \mathcal{R}^{n}$ is the state vector and $x_{0}$ is the initial
condition of the system. Matrices $A,B,C,D$ are appropriately dimensioned
real constant matrices. A coprime factorisation of a transfer function
matrix over $\mathcal{RH}_{\infty }$ gives a further system representation
form and factorises the transfer matrix into two stable and coprime transfer
matrices. The left and right coprime factorisations (LCF and RCF) of $%
G_{u}(z)$ are given by 
\begin{equation}
G_{u}(z)=\hat{M}^{-1}(z)\hat{N}(z)=N(z)M^{-1}(z),  \label{eq2-3}
\end{equation}%
where the state space realisations of the left and right coprime pairs (LCP
and RCP) $\left( \hat{M}(z),\hat{N}(z)\right) $ and $\left( M(z),N(z)\right) 
$ are 
\begin{align}
\hat{M}(z)& =\left( A-LC,-L,C,I\right) ,\hat{N}(z)=\left(
A-LC,B-LD,C,D\right) ,  \label{eq2-4a} \\
M(z)& =\left( A+BF,B,F,I\right) ,N(z)=\left( A+BF,B,C+DF,D\right) .
\label{eq2-4b}
\end{align}%
Correspondingly, there exist RCP and LCP $\left( \hat{X}(z),\hat{Y}%
(z)\right) $ and $\left( X(z),Y(z)\right) $ so that the so-called Bezout
identity holds%
\begin{equation}
\left[ 
\begin{array}{cc}
X(z) & \text{ }Y(z) \\ 
-\hat{N}(z) & \text{ }\hat{M}(z)%
\end{array}%
\right] \left[ 
\begin{array}{cc}
M(z) & \text{ }-\hat{Y}(z) \\ 
N(z) & \text{ }\hat{X}(z)%
\end{array}%
\right] =\left[ 
\begin{array}{cc}
I\text{ } & 0\text{ } \\ 
0\text{ } & I\text{ }%
\end{array}%
\right] .  \label{eq2-5}
\end{equation}%
The state space computation formulas for $\left( \hat{X}(z),\hat{Y}%
(z)\right) $ and $\left( X(z),Y(z)\right) $ are%
\begin{align}
\hat{X}(z)& =\left( A+BF,L,C+DF,I\right) ,\hat{Y}(z)=\left(
A+BF,-L,F,0\right) ,  \label{eq2-6a} \\
X(z)& =\left( A-LC,-(B-LD),F,I\right) ,Y(z)=\left( A-LC,-L,F,0\right) .
\label{eq2-6b}
\end{align}%
In (\ref{eq2-4a})-(\ref{eq2-6b}), (real) matrices $F$ and $L$ are selected
such that $A+BF$ and $A-LC$ are Schur matrices \cite{Zhou98,Ding2014}.

\bigskip

We now consider an observer-based residual generator 
\begin{align}
\hat{x}(k+1)& =A\hat{x}(k)+Bu(k)+L\left( y(k)-\hat{y}(k)\right) ,
\label{eq2-7a} \\
r_{0}(k)& =y(k)-\hat{y}(k),\hat{y}(k)=C\hat{x}(k)+Du(k)  \label{eq2-7b}
\end{align}%
with $r_{0}(k)$ being the primary form of a residual vector. It can be
equivalently written as%
\begin{gather}
\hat{x}(k+1)=\left( A-LC\right) \hat{x}(k)+\left( B-LD\right) u(k)+Ly(k), 
\notag \\
\Longrightarrow r_{0}(z)=y(z)-\hat{y}(z)=\hat{M}(z)y(z)-\hat{N}(z)u(z).
\label{eq2-7}
\end{gather}%
Note that if there exists no uncertainty in the plant and $x(0)=\hat{x}(0),$
it holds 
\begin{equation*}
r_{0}(z)=0\Longrightarrow y(z)=\hat{M}^{-1}(z)\hat{N}(z)u(z),
\end{equation*}%
which illustrates the interpretation of LCF as an observer-based residual
generator. It is well-known that given plant model (\ref{eq2-1}), all LTI
residual generators can be parameterised by%
\begin{equation}
r(z)=R(z)r_{0}(z)=R(z)\left( y(z)-\hat{y}(z)\right) ,R(z)\in \mathcal{RH}%
_{\infty },  \label{eq2-8}
\end{equation}%
where $R(z)$ is the parameterisation transfer function matrix \cite{Ding2013}%
.

\begin{Rem}
Hereafter, we may drop out the domain variable $z$ or $k$ when there is no
risk of confusion.
\end{Rem}

\subsubsection{Parameterisation of stabilising controllers and basics of the
unified control and detection framework}

Consider the feedback control loop 
\begin{equation*}
y(z)=G_{u}(z)u(z),u(z)=K(z)y(z)
\end{equation*}%
with the plant model $G_{u}(z)$ and controller $K(z).$ It is a well-known
result that all stabilising controllers can be parameterised by%
\begin{align}
K(z)& =-\left( X(z)-Q(z)\hat{N}(z)\right) ^{-1}\left( Y(z)+Q(z)\hat{M}%
(z)\right)  \label{eq2-12a} \\
& =-\left( \hat{Y}(z)+M(z)Q(z)\right) \left( \hat{X}(z)-N(z)Q(z)\right) ^{-1}
\label{eq2-12b}
\end{align}%
with the parameter system $Q(z)\in \mathcal{RH}_{\infty },$ where the four
coprime pairs $\left( \hat{M},\hat{N}\right) ,$ $\left( M,N\right) ,$ $%
\left( \hat{X},\hat{Y}\right) $ and $\left( X,Y\right) $ are given in (\ref%
{eq2-4a})-(\ref{eq2-6b}) and satisfy Bezout identity (\ref{eq2-5}). The
parameterisation expression (\ref{eq2-12a})-(\ref{eq2-12b}) is called Youla
parameterisation \cite{Zhou98}. It follows from (\ref{eq2-4a})-(\ref{eq2-6b}%
) and Bezout identity \cite{DYZDJWS2009,Ding2020} that any (stabilising)
output feedback controller 
\begin{equation}
u(z)=K(z)y(z)+v(z)  \label{eq2-12c}
\end{equation}%
with $v(z)$ being the reference signal can be equivalently written as%
\begin{align}
\hat{x}(k+1)& =A\hat{x}(k)+Bu(k)+Lr_{0}(k),  \label{eq2-13a} \\
u(z)& =F\hat{x}(z)-Q(z)r_{0}(z)+\bar{v}(z),  \label{eq2-13b} \\
\bar{v}(z)& =\left( X(z)-Q(z)\hat{N}(z)\right) v(z).  \label{eq2-13c}
\end{align}%
In other words, any output feedback controller is an observer-based
controller and driven by the residual signal $r_{0}$.

\bigskip

Recall that the basis of an observer-based fault diagnosis is residual
generation and evaluation \cite{Ding2013}. Thus, (\ref{eq2-13a})-(\ref%
{eq2-13b}) reveal that both diagnosis and control are driven by the residual
signal and can be integratedly realised by sharing a common observer-based
residual generator as the information provider. By means of the observer
parameterisation \cite{Ding2013}, we gain a deeper insight into the
information aspect of a feedback controller that the control signal $u(k)$ in (\ref%
{eq2-13b}) is an estimate for $Fx(k)+\bar{v}(k)$ and satisfies 
\begin{equation}
\forall x(0),u(k),\text{ }\lim\limits_{k\rightarrow \infty }\left(
u(k)-Fx(k)-\bar{v}(k)\right) =0,  \label{eq2-15}
\end{equation}%
when there exists no uncertainty in the plant. The observer-based
realisation of stabilising feedback controllers (\ref{eq2-13a})-(\ref%
{eq2-13b}) and the estimator interpretation (\ref{eq2-15}) of (any) output
feedback controllers are the basics of the unified control and detection
framework and build the basis for our study on attack detection schemes
presented in the subsequent work.

\subsection{Integrity attacks under consideration}

The system configuration under consideration in\ the first part of our study
is sketched in Figure 1, in which the controller and attack detection system
are networked with the plant (equipped with sensors, actuators and a
computation system like micro-controllers). Via the communication network,
the plant receives the control signal $u(k)$ and sends the sensor signal $%
y(k)$ to the control and monitoring system.

\begin{figure}[h]
\centering\includegraphics[width=11cm,height=5.5cm]{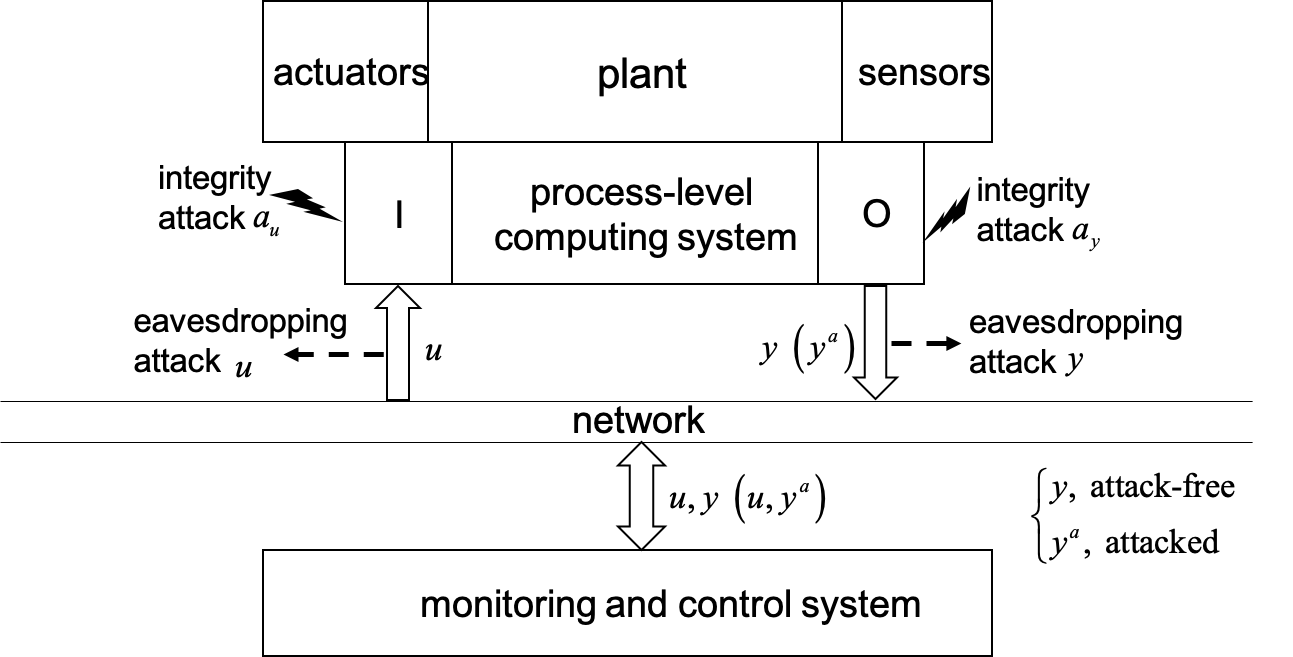}
\caption{System configuration under consideration}
\end{figure}

\bigskip
Recall that our major attention is paid to integrity cyber-attacks that are
injected into the system I/O interface via the network, cause (considerable)
changes in the system dynamics, but cannot be detected by a standard
observer-based detector. As reviewed in \cite{DIBAJI2019-survey}, such
cyber-attacks include zero dynamics, covert and replay attacks. Below is a
short description of these attack types.

\subsubsection{\textbf{Zero dynamics attacks}}

Roughly speaking, a zero dynamics attack is referred to an attack $a_{u}(k)$
on the actuators, which causes no response at the system output over the
detection time interval $\left[ k_{0},k_{0}+N\right] $ and thus cannot be
detected \cite{TEIXEIRA-zero-attack_2015}. The corresponding attack model is 
\begin{align}
x(k+1)& =Ax(k)+B\left( u(k)+a_{u}(k)\right) , \\
y^{a}(k)& =Cx(k)+D\left( u(k)+a_{u}(k)\right)
\end{align}%
with $y^{a}(k)$ satisfying the condition 
\begin{equation}
\forall k\in \left[ k_{0},k_{0}+N\right] ,y^{a}(k)=y(k).  \label{eq2-11}
\end{equation}%
It is obvious that the existence condition of zero dynamics attacks can be
expressed by means of the LCF of the plant as%
\begin{equation}
\hat{N}\left( z\right) a_{u}(z)=0.  \label{eq2-11b}
\end{equation}

\subsubsection{\textbf{Covert attacks}}

Introduced by \cite{Smith2015}, covert attacks\textbf{\ }are modelled by%
\textbf{\ } 
\begin{align}
x(k+1)& =Ax(k)+B\left( u(k)+a_{u}(k)\right) , \\
y^{a}(k)& =Cx(k)+D\left( u(k)+a_{u}(k)\right) +a_{y}(k)
\end{align}%
with $a_{u}(k)$ and $a_{y}(k)$ denoting attacks on the actuators and
sensors, respectively, and satisfying 
\begin{equation}
a_{y}(z)+G_{u}(z)a_{u}(z)=0\Longrightarrow \forall k\in \left[ k_{0},k_{0}+N%
\right] ,y^{a}(k)=y(k).  \label{eq2-17}
\end{equation}%
It is straightforward that the existence condition for covert attacks is%
\begin{equation}
\hat{M}(z)a_{y}(z)+\hat{N}\left( z\right) a_{u}(z)=0.  \label{eq2-17a}
\end{equation}

\subsubsection{\textbf{Replay attacks}}

As described in \cite{Mo2015-Watermarked-detection}, replay attacks are
performed on the assumption that the plant under attacks is operating in the
steady state, which yields%
\begin{equation*}
y(k)\approx y(k-i),i=1,\cdots .
\end{equation*}%
Consequently, the attacker can \textquotedblleft replay", e.g. over the time
interval $\left[ k,k+M\right] ,$ the sensor data collected in the past (for
instance, by means of an eavesdropping attack), and simultaneously inject
signals in the actuators. Denote by $y\left( k_{0}+i\right)
,k_{0}+M<k,i=0,1,\cdots ,M,$ the data collected and saved by the attacker.
Replay attacks can be modelled by\textbf{\ } 
\begin{align*}
x(j+1)& =Ax(j)+B\left( u(j)+a_{u}(j)\right) ,j\in \left[ k,k+M\right] , \\
y^{a}(j)& =Cx(j)+D\left( u(j)+a_{u}(j)\right) +a_{y}(j), \\
a_{y}(j)& =y(k_{0}+j-k)-\left( Cx(j)+D\left( u(j)+a_{u}(j)\right) \right) .
\end{align*}%
As a result, 
\begin{equation}
\forall j\in \left[ k,k+M\right] ,y^{a}(j)=y(j-k+k_{0})\approx
y(j)=Cx(j)+Du(j).  \label{eq2-20}
\end{equation}%
Notice that the attack signal $a_{y}(j)$ depends on the plant state vector $%
x(j)$ and is, therefore, not a pure additive attack signal.

\bigskip

In summary, it can be seen that the above three types of attacks have one
thing in common that they do not cause changes in the measurement output and
hence cannot be traced by the output variables. As a result, these attacks
cannot be detected using an observer-based detection scheme. In this
context, they are called stealthy attacks \cite{DIBAJI2019-survey}.

\subsection{Problem formulation}

The goal of our work is to investigate cyber-attacking issues in the unified
control and detection framework. We will first deal with the following three
problems:

\begin{itemize}
\item study on general system structural conditions, under which the
above-mentioned three types of attacks cannot be detected using an
observer-based detector. Based on the achieved results, a general class of
stealthy integrity cyber-attacks, the so-called kernel cyber-attacks, are
then defined;

\item derivation of system structural conditions, under which any integrity
cyber-attacks, as sketched in Figure 1, can be (structurally) uniquely
detected, and based on them,

\item development of an alternative attack detection scheme that ensures a
reliable detection of the integrity cyber-attacks shown in Figure 1.
\end{itemize}

A major reason why an integrity cyber-attack could be performed stealthily
is that the attacker has knowledge of system dynamics. One potential tool to
gain such knowledge is to collect sufficient plant input and output data by
means of eavesdropping attacks, which enable, for instance, the
identification of the plant model and even controller parameters. Under this
consideration, we will, in the further part of our work, propose an
alternative system configuration that leads to an encrypted data
transmission aiming at preventing attackers to gain system knowledge.

\section{Kernel attacks: a general form of stealthy integrity attacks}

In this section, we investigate the existence conditions for stealthy
attacks and generalise the different types of stealthy integrity attacks,
including the three types of integrity attacks introduced in the previous
section, as the so-called kernel attacks. To this end, we consider, in the
sequel, the system configuration sketched in Figure 1.

\subsection{Observer-based attack detection Strategy}

For our purpose, we extend the nominal model (\ref{eq2-1})-(\ref{eq2-2b}) to
the following attack model, 
\begin{align}
x\left( k+1\right) & =Ax\left( k\right) +B\left( u(k)+a_{u}(k)\right)
+\omega (k),  \label{eq3-1a} \\
y^{a}(k)& =Cx(k)+D\left( u(k)+a_{u}(k)\right) +a_{y}(k)+\nu (k),
\label{eq3-1b}
\end{align}%
where $\omega (k),\nu (k)$ represent the process and measurement noise
vectors, and $a_{y}(k),a_{u}(k)$ denote the attack signals on the actuators
and sensors, respectively. With respect to the system configuration shown in
Figure 1, an observer-based attack detector consists of (i) a residual
generator as given in (\ref{eq2-7a})-(\ref{eq2-7b}) with the generated
residual vector $r_{0}(k),$%
\begin{equation*}
r_{0}(k)=y^{a}(k)-\hat{y}^{a}(k),\hat{y}^{a}(k)=C\hat{x}(k)+Du(k),
\end{equation*}%
(ii) a residual evaluation function 
\begin{equation*}
J(k)=J\left( \left\Vert r_{0}(k)\right\Vert \right)
\end{equation*}%
with $\left\Vert r_{0}(k)\right\Vert $ denoting a certain norm of $r_{0}(k),$
and (iii) detection logic described by%
\begin{equation*}
\left\{ 
\begin{array}{l}
J(k)\leq J_{th}\Longrightarrow \text{attack-free,} \\ 
J(k)>J_{th}\Longrightarrow \text{attack is detected,}%
\end{array}%
\right.
\end{equation*}%
where $J_{th}$ is the threshold. In order to achieve an optimal attack
detection, the observer gain matrix $L$, the evaluation function $J(k)$ and
the threshold $J_{th}$ are designed taking into account of the statistic
properties of $\omega (k),\nu (k).$ Suppose that $\omega (k),\nu (k)$ are
uncorrelated with the state and input vectors and satisfy 
\begin{gather}
\omega (k)\sim \mathcal{N}\left( 0,\Sigma _{\omega }\right) ,\nu (k)\sim 
\mathcal{N}\left( 0,\Sigma _{\nu }\right) ,x\left( 0\right) \sim \mathcal{N}%
\left( 0,\Pi _{0}\right) ,  \label{eq3-2a} \\
\mathcal{E}\left( \left[ 
\begin{array}{c}
\omega (i) \\ 
\nu (i) \\ 
x\left( 0\right)%
\end{array}%
\right] \left[ 
\begin{array}{c}
\omega (j) \\ 
\nu (j) \\ 
x\left( 0\right)%
\end{array}%
\right] ^{T}\right) =\left[ 
\begin{array}{cc}
\left[ 
\begin{array}{cc}
\Sigma _{\omega } & S \\ 
S^{T} & \Sigma _{\nu }%
\end{array}%
\right] \delta _{ij} & 0 \\ 
0 & \Pi _{0}%
\end{array}%
\right] ,\delta _{ij}=\left\{ 
\begin{array}{l}
1,i=j, \\ 
0,i\neq j%
\end{array}%
\right.  \label{eq3-2b}
\end{gather}%
with known matrices $\Sigma _{\omega },\Sigma _{\nu },S.$ In this case, the
observer gain matrix can be determined using the (steady) Kalman filter
algorithm, 
\begin{gather}
L_{K}:=L=\left( APC^{T}+S\right) \Sigma _{r}^{-1},P=APA^{T}+\Sigma _{\omega
}-L_{K}\Sigma _{r}L_{K}^{T},  \label{eq3-3a} \\
\Sigma _{r}=CPC^{T}+\Sigma _{\nu }=\mathcal{E}\left(
r_{0}(k)r_{0}^{T}(k)\right) ,\mathcal{E}\left( r_{0}(i)r_{0}^{T}(j)\right)
=\Sigma _{r}\delta _{ij},  \label{eq3-3b}
\end{gather}%
the $\chi ^{2}$ test statistic is used as the evaluation function, 
\begin{equation*}
J(k)=r_{0}^{T}(k)\Sigma _{r}^{-1}r_{0}(k)\sim \mathcal{\chi }^{2}\left(
m\right) ,
\end{equation*}%
and finally the threshold $J_{th}$ is determined by means of $\mathcal{\chi }%
_{\alpha }^{2}\left( m\right) $ for a given upper-bound of false alarm rate $%
\alpha $ \cite{Ding2014}.

\subsection{Kernel attacks}

We now study the generalisation of stealthy attacks and their existence
conditions. Corresponding to the above described observer-based attack
detection strategy, we introduce the following definition.

\begin{Def}
\label{Def3-1}Given system model (\ref{eq3-1a})-(\ref{eq3-1b}) with $\omega
(k)=0,\nu (k)=0,$ and observer-based attack detector (\ref{eq2-7a})-(\ref%
{eq2-7b}), an integrity attack is stealthy if 
\begin{equation*}
\forall u,r_{0}(z)=y^{a}(z)-\hat{y}^{a}(z)=0.
\end{equation*}
\end{Def}

For our purpose, the following definition of the so-called kernel space is
introduced.

\begin{Def}
Given the plant model (\ref{eq2-1}) and a corresponding LCP $\left( \hat{M}%
(z),\hat{N}(z)\right) ,$ we call the $\mathcal{H}_{2}\times \mathcal{H}_{2}$
subspace $\mathcal{K}_{P}$ defined by 
\begin{equation}
\mathcal{K}_{P}=\left\{ \left[ 
\begin{array}{c}
u \\ 
y%
\end{array}%
\right] :\left[ 
\begin{array}{cc}
-\hat{N} & \hat{M}%
\end{array}%
\right] \left[ 
\begin{array}{c}
u \\ 
y%
\end{array}%
\right] =0,\left[ 
\begin{array}{c}
u \\ 
y%
\end{array}%
\right] \in \mathcal{H}_{2}\hspace{-2pt}\right\}  \label{eq2-9}
\end{equation}%
kernel space of the plant.
\end{Def}

It is evident that the kernel space $\mathcal{K}_{P}$ consists of all
(bounded) input and output pairs $(u,y)$ satisfying 
\begin{equation*}
\left[ 
\begin{array}{cc}
-\hat{N}(z) & \hat{M}(z)%
\end{array}%
\right] \left[ 
\begin{array}{c}
u(z) \\ 
y(z)%
\end{array}%
\right] =0.
\end{equation*}%
$\mathcal{K}_{P}$ is a closed subspace in $\mathcal{H}_{2}$ \cite%
{Vinnicombe-book}.

\bigskip

We are now in the position to present the existence condition of stealthy
attacks defined in Definition \ref{Def3-1}.

\begin{Theo}
\label{Theo3-1}Given plant model (\ref{eq3-1a})-(\ref{eq3-1b}) with $\omega
(k)=0,\nu (k)=0,$ and an observer-based attack detector (\ref{eq2-7a})-(\ref%
{eq2-7b}), an integrity attack is stealthy if and only if%
\begin{equation}
\left[ 
\begin{array}{c}
u(z) \\ 
y^{a}(z)%
\end{array}%
\right] \in \mathcal{K}_{P}.  \label{eq3-11}
\end{equation}
\end{Theo}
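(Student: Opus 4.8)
The plan is to carry out the argument entirely in the $z$-domain and to reduce the claim to a one-line read-off from the definition (\ref{eq2-9}) of $\mathcal{K}_{P}$. The key observation is that the residual generator (\ref{eq2-7a})-(\ref{eq2-7b}) is a fixed LTI system built from the LCP $\left( \hat{M},\hat{N}\right) $; it processes whatever control signal $u$ and measurement signal it is fed, irrespective of whether the latter carries an attack. Hence the same elimination of $\hat{x}$ that produced (\ref{eq2-7}) applies verbatim with the attacked output $y^{a}$ in place of $y$, yielding
\[
r_{0}(z)=\hat{M}(z)y^{a}(z)-\hat{N}(z)u(z)=\left[
\begin{array}{cc}
-\hat{N}(z) & \hat{M}(z)
\end{array}
\right] \left[
\begin{array}{c}
u(z) \\
y^{a}(z)
\end{array}
\right] .
\]
First I would establish this identity and note that, because $A-LC$ is Schur, the map lies in $\mathcal{RH}_{\infty }$ and the effect of any unmatched observer initial state is a decaying $\mathcal{H}_{2}$ transient; thus the relation holds exactly in the $\mathcal{H}_{2}$ setting underlying $\mathcal{K}_{P}$.

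With the displayed identity in hand, the equivalence is immediate in both directions. By Definition \ref{Def3-1} the attack is stealthy exactly when $r_{0}(z)=0$, and by the identity this is the same as $[ -\hat{N}(z)\ \ \hat{M}(z) ]\,[ u(z)\ \ y^{a}(z) ]^{T}=0$, which is precisely the membership condition $[ u\ \ y^{a} ]^{T}\in \mathcal{K}_{P}$ of (\ref{eq2-9}); the $\mathcal{H}_{2}$-regularity required there follows from boundedness of the signals involved.

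To make the universal quantifier ``$\forall u$'' in Definition \ref{Def3-1} transparent, I would additionally substitute the attacked plant relation $y^{a}=G_{u}(u+a_{u})+a_{y}=\hat{M}^{-1}\hat{N}(u+a_{u})+a_{y}$ from (\ref{eq3-1a})-(\ref{eq3-1b}) into the identity, which collapses it to $r_{0}(z)=\hat{N}(z)a_{u}(z)+\hat{M}(z)a_{y}(z)$. This expression is manifestly independent of $u$, so vanishing of the residual for a single $u$ is equivalent to vanishing for all $u$; equivalently, membership in $\mathcal{K}_{P}$ reduces to the attack-signal condition $\hat{N}a_{u}+\hat{M}a_{y}=0$, reconciling the ``$\forall u$'' of the definition with the single pair $[ u\ \ y^{a} ]^{T}$ appearing in the statement.

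The only step demanding genuine care is the transition from the time-domain recursions (\ref{eq2-7a})-(\ref{eq2-7b}) to the transfer-matrix identity, since strictly it presumes zero (or matched) initial conditions. I expect this initial-condition bookkeeping, resolved via the Schur property of $A-LC$, to be the sole obstacle, while the remaining equivalence is a direct consequence of the kernel-space definition (\ref{eq2-9}) and the LCP structure.
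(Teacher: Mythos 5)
Your proof is correct and takes essentially the same route as the paper's: both reduce the claim to the identity $r_{0}(z)=\hat{M}(z)y^{a}(z)-\hat{N}(z)u(z)$ and then read off the membership condition in Definition of $\mathcal{K}_{P}$, the only cosmetic difference being that the paper phrases this through the parameterisation $r=R(z)r_{0}$ with $R$ stable and invertible, which does not change the zero set. Your extra bookkeeping on initial conditions and on the ``$\forall u$'' quantifier (via $r_{0}=\hat{N}a_{u}+\hat{M}a_{y}$) is sound and consistent with what the paper states right after the theorem.
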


\begin{proof}
Without loss of generality, assume that the LCP $\left(\hat{M},\hat{N}\right)$ is
given as described in (\ref{eq2-4a}). Then, it follows from
the well-known
parameterisation of observer-based residual generators \cite%
{Ding2013}
that all observer-based residual generators (attack detectors) of
the form (%
\ref{eq2-7a})-(\ref{eq2-7b}) can be written as%
\begin{equation*}
y^{a}(z)-%
\hat{y}^{a}(z)=R(z)\left[ 
\begin{array}{cc}
-\hat{N}(z) & \hat{M}(z)%
\end{array}%
\right] \left[ 
\begin{array}{c}
u(z) \\ 
y^{a}(z)%
\end{%
array}%
\right] ,
\end{equation*}%
where $R(z)$ is a stable and \textit{%
invertible} dynamic post-filter.
Consequently, $y^{a}(z)=\hat{y}^{a}(z)$ if
and only if%
\begin{equation*}
\left[ 
\begin{array}{cc}
-\hat{N}(z) & 
\hat{M}(z)%
\end{array}%
\right] \left[ 
\begin{array}{c}
u(z) \\

y^{a}(z)%
\end{array}%
\right] =0\Longleftrightarrow \left[ 
\begin{%
array}{c}
u(z) \\ 
y^{a}(z)%
\end{array}%
\right] \in \mathcal{K}_{P}.
\end{equation*}%
The theorem is proved. 
\end{proof}

\bigskip

In this context, we introduce the definition of kernel attacks, which gives
a general form of integrity attacks that cannot be detected using an
observer detector. As will be demonstrated in the example given below, the
zero dynamics, covert and replay attacks are special forms of the kernel
attacks.

\begin{Def}
Given system model (\ref{eq3-1a})-(\ref{eq3-1b}), an integrity attack is
called kernel attack when condition (\ref{eq3-11}) holds.
\end{Def}

\begin{Exp}
We first check a zero dynamics attack. It is evident that for $\omega
(k)=0,\nu (k)=0,$%
\begin{equation*}
r_{0}(z)=y^{a}(z)-\hat{y}^{a}(z)=\left[ 
\begin{array}{cc}
-\hat{N}(z) & \hat{M}(z)%
\end{array}%
\right] \left[ 
\begin{array}{c}
u(z) \\ 
y^{a}(z)%
\end{array}%
\right] =-\hat{N}(z)a_{u}(z).
\end{equation*}%
It follows from (\ref{eq2-11b}) that 
\begin{equation*}
\hat{N}(z)a_{u}(z)=0\Longrightarrow \left[ 
\begin{array}{cc}
-\hat{N}(z) & \hat{M}(z)%
\end{array}%
\right] \left[ 
\begin{array}{c}
u(z) \\ 
y^{a}(z)%
\end{array}%
\right] =0,
\end{equation*}%
i.e. the zero dynamics attack is a kernel attack.

\bigskip

Now, consider the residual dynamics under a covert attack, which is given by%
\begin{equation*}
r_{0}(z)=\left[ 
\begin{array}{cc}
-\hat{N}(z) & \hat{M}(z)%
\end{array}%
\right] \left[ 
\begin{array}{c}
u(z) \\ 
y^{a}(z)%
\end{array}%
\right] =\hat{N}(z)a_{u}(z)+\hat{M}(z)a_{y}(z).
\end{equation*}%
According to (\ref{eq2-17a}), it holds 
\begin{equation*}
\hat{N}(z)a_{u}(z)+\hat{M}(z)a_{y}(z)=0\Longrightarrow \left[ 
\begin{array}{cc}
-\hat{N}(z) & \hat{M}(z)%
\end{array}%
\right] \left[ 
\begin{array}{c}
u(z) \\ 
y^{a}(z)%
\end{array}%
\right] =0.
\end{equation*}%
Therefore, the covert attack is obviously a kernel attack.

\bigskip

Concerning the residual dynamics under a replay attack, recall the relation (%
\ref{eq2-20}). It turns out, for $j\in \left[ k,k+M\right] ,$ 
\begin{align}
r_{0}(j)& =y^{a}(j)-\hat{y}^{a}(j)\approx Cx(j)+Du(j)-\left( C\hat{x}%
(j)+Du(j)\right) =0 \\
& \Longrightarrow \left[ 
\begin{array}{c}
u \\ 
y^{a}%
\end{array}%
\right] \in \mathcal{K}_{P}.
\end{align}%
Thus, the replay attack is a kernel attack.
\end{Exp}

Given the plant dynamics described by (\ref{eq3-1a})-(\ref{eq3-1b}) with $%
\omega (k)=0,\nu (k)=0,$ the dynamics of the observer-based attack detector (%
\ref{eq2-7a})-(\ref{eq2-7b}) is described by%
\begin{equation*}
r_{0}(z)=y^{a}(z)-\hat{y}^{a}(z)=\hat{M}(z)a_{y}(z)+\hat{N}(z)a_{u}(z).
\end{equation*}%
Consequently, if the attack pair $\left( a_{u},a_{y}\right) $ is constructed
satisfying 
\begin{equation}
\hat{M}(z)a_{y}(z)+\hat{N}(z)a_{u}(z)=0,  \label{eq3-13}
\end{equation}%
it cannot be detected. Hence, we have the following theorem.

\begin{Theo}
Given the plant model (\ref{eq3-1a})-(\ref{eq3-1b}) and an observer-based
attack detector (\ref{eq2-7a})-(\ref{eq2-7b}), the pair $\left(
a_{u},a_{y}\right) $ builds a kernel attack if it satisfies (\ref{eq3-13}).
\end{Theo}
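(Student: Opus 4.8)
The plan is to verify directly that condition (\ref{eq3-13}) places the attacked input--output pair $(u,y^{a})$ in the kernel space $\mathcal{K}_{P}$, because by the definition of kernel attack this membership (condition (\ref{eq3-11})) is precisely what must be established. Hence the whole argument reduces to checking the single identity $\left[\begin{array}{cc} -\hat{N}(z) & \hat{M}(z)\end{array}\right]\left[\begin{array}{c} u(z) \\ y^{a}(z)\end{array}\right]=0$ demanded by (\ref{eq2-9}), together with the (tacit) boundedness of the signals.

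First I would compute the residual produced by the detector (\ref{eq2-7a})--(\ref{eq2-7b}) when it is driven by the attacked plant (\ref{eq3-1a})--(\ref{eq3-1b}) with $\omega=0$ and $\nu=0$. By (\ref{eq2-7}) the residual is $r_{0}=\hat{M}y^{a}-\hat{N}u$, which is exactly the left-hand side of the kernel-membership condition, i.e. $r_{0}=\left[\begin{array}{cc} -\hat{N} & \hat{M}\end{array}\right]\left[\begin{array}{c} u \\ y^{a}\end{array}\right]$. The key step is then to substitute the attacked output $y^{a}=G_{u}(u+a_{u})+a_{y}$ and to use the factorisation relation $G_{u}=\hat{M}^{-1}\hat{N}$ from (\ref{eq2-3}), which gives $\hat{M}G_{u}=\hat{N}$; after cancelling the $\hat{N}u$ terms this yields $r_{0}=\hat{N}a_{u}+\hat{M}a_{y}$, the very expression appearing in (\ref{eq3-13}) and already recorded in the paragraph preceding the theorem.

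With this identity the conclusion is immediate: if $(a_{u},a_{y})$ satisfies (\ref{eq3-13}) then $r_{0}=0$, hence $\left[\begin{array}{cc} -\hat{N} & \hat{M}\end{array}\right]\left[\begin{array}{c} u \\ y^{a}\end{array}\right]=0$, so $(u,y^{a})\in\mathcal{K}_{P}$ and the attack is a kernel attack by definition (and stealthy, by Theorem \ref{Theo3-1}). I do not expect a genuine obstacle here, since the statement is a direct computation. The only points deserving a word of care are the technical requirement that the signals lie in $\mathcal{H}_{2}$ for the closed subspace $\mathcal{K}_{P}$ to be the appropriate object, and the observation that admitting the most general post-filtered residual $r=R(z)(\hat{M}y^{a}-\hat{N}u)$ with $R$ stable and invertible leaves the argument untouched, because invertibility of $R$ makes $r$ vanish exactly when the unfiltered residual does.
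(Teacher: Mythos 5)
Your proposal is correct and follows essentially the same route as the paper: the paper's own justification is the paragraph immediately preceding the theorem, which computes $r_{0}(z)=\hat{M}(z)a_{y}(z)+\hat{N}(z)a_{u}(z)$ for the attacked plant and concludes that (\ref{eq3-13}) forces $r_{0}=0$, hence $(u,y^{a})\in\mathcal{K}_{P}$ by Theorem \ref{Theo3-1}. Your version merely makes explicit the substitution $y^{a}=G_{u}(u+a_{u})+a_{y}$ and the cancellation via $\hat{M}G_{u}=\hat{N}$, which the paper leaves implicit.
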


\begin{Rem}
We would like to point out that a replay attack does not satisfy (\ref{eq3-13}),
since it is, in fact, not an additive type of attacks, as remarked in the
previous section.
\end{Rem}

\bigskip

Recall that the kernel space $\mathcal{K}_{P}$ is a structural property of
the plant and determined by the dynamics of the nominal plant. As a result,
if an attacker is in possession of knowledge of the plant dynamics, kernel
attacks could be constructed according to (\ref{eq3-13}) and injected in the
plant without being detected by the observer-based attack detector (\ref%
{eq2-7a})-(\ref{eq2-7b}). In fact, recall that any LTI observer-based
residual generators, including the parity relation based one and diagnosis
observer, can be parameterised by \cite{Ding2013} 
\begin{equation}
r(z)=R(z)r_{0}(z)=R(z)\left( y(z)-\hat{y}(z)\right) ,R(z)\neq 0,R(z)\in 
\mathcal{RH}_{\infty }.  \label{eq3-30}
\end{equation}%
The following corollary is obvious.

\begin{Corol}
\label{Coro3-1}Given plant model (\ref{eq3-1a})-(\ref{eq3-1b}), any attack
cannot be detected by an LTI attack detector of the form (\ref{eq3-30}), if
and only if the signal pair $\left( u,y^{a}\right) $ satisfies (\ref{eq3-11}%
), or if the attack signal pair $\left( a_{u},a_{y}\right) $ is constructed
satisfying (\ref{eq3-13}).
\end{Corol}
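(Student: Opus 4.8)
The plan is to reduce the statement to Theorem~\ref{Theo3-1} by exploiting the structure of the parameterisation (\ref{eq3-30}). First I would observe that every LTI attack detector of the form (\ref{eq3-30}) produces $r(z)=R(z)r_{0}(z)$ with $R(z)\in\mathcal{RH}_{\infty}$, $R(z)\neq 0$; as in the proof of Theorem~\ref{Theo3-1}, the post-filter $R(z)$ is taken stable and invertible. The key elementary fact is then that, because $R(z)$ is invertible, $r(z)=0$ for all admissible inputs if and only if $r_{0}(z)=0$ for all admissible inputs. This collapses the question of undetectability by an arbitrary detector (\ref{eq3-30}) to undetectability by the primary residual $r_{0}$.

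Next I would invoke Theorem~\ref{Theo3-1} directly. Applied to the primary residual, it states that $r_{0}(z)=0$ for all $u$ holds if and only if the pair $(u,y^{a})$ lies in $\mathcal{K}_{P}$, i.e. condition (\ref{eq3-11}). Combined with the previous reduction, this already proves one half of the equivalence: an attack evades every detector of the form (\ref{eq3-30}) exactly when the signal pair $(u,y^{a})$ lies in the kernel space.

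For the alternative characterisation through the injected signals, I would use the explicit residual dynamics recorded just before the statement. In the noise-free case $\omega=\nu=0$ the primary residual satisfies $r_{0}(z)=\hat{M}(z)a_{y}(z)+\hat{N}(z)a_{u}(z)$, so (\ref{eq3-13}) is precisely the condition $r_{0}(z)=0$ expressed through the additive attack pair. Hence an additively injected pair $(a_{u},a_{y})$ satisfying (\ref{eq3-13}) forces $r_{0}=0$ and therefore $r=R\,r_{0}=0$ for every detector (\ref{eq3-30}); conversely, $r_{0}=0$ returns (\ref{eq3-13}). This ties (\ref{eq3-11}), a condition on the resulting I/O pair, to (\ref{eq3-13}), a condition on the injected signals.

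The main subtlety---rather than a genuine obstacle---is the invertibility of $R(z)$. Mere non-vanishing $R(z)\neq 0$ would only yield the implication $r_{0}=0\Rightarrow r=0$; the converse, which is needed for the ``only if'' part (an attack undetectable by \emph{all} detectors must already satisfy $r_{0}=0$), relies on $R(z)$ being invertible so that $r_{0}=R^{-1}r$. I would therefore make explicit, exactly as in the proof of Theorem~\ref{Theo3-1}, that the admissible post-filters are invertible and stable, so that the zero sets of $r$ and $r_{0}$ coincide and the corollary follows at once from Theorem~\ref{Theo3-1}.
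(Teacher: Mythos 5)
Your proposal is correct and follows exactly the route the paper intends: the paper offers no explicit proof (it declares the corollary "obvious"), since it is the immediate combination of the parameterisation (\ref{eq3-30}), Theorem \ref{Theo3-1}, and the identity $r_{0}(z)=\hat{M}(z)a_{y}(z)+\hat{N}(z)a_{u}(z)$ for additive attacks. Your explicit handling of the post-filter (noting that undetectability by \emph{every} admissible $R$ reduces to $r_{0}=0$ because an invertible $R$, e.g.\ $R=I$, is in the class) correctly fills in the one step the paper leaves implicit.
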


At the end of this section, we would like to emphasise that the concept of
the kernel attacks and the associated existence conditions given in Theorems
1 - 2 and Corollary \ref{Coro3-1} are described in terms of the LCF or
kernel space of the system under consideration. They are the system
structural properties and independent of the observer design and the
evaluation schemes adopted by the attack-detector. Our subsequent
investigation on detecting stealthy integrity attacks will be carried out in
this context.

\section{Analysis and detection of kernel attacks}

This section deals with detecting kernel attacks on the feedback control
systems shown in Figure 2. Consider that the observer-based attack detector (%
\ref{eq2-7a})-(\ref{eq2-7b}) performs the online detection by means of the
(online) data $(u(k),y^{a}(k)),$ whose dimension is $p+m.$ On the other
hand, the parameterisation of observer-based residual generators and Theorem %
\ref{Theo3-1} as well as Corollary \ref{Coro3-1} reveal that the residual
signals that belong to the $m$-dimensional kernel space $\mathcal{K}_{P}$
are only effective in detecting attacks which do not belong to $\mathcal{K}%
_{P}.$ In other words, in order to detect kernel attacks successfully,
generating additional signals to cover the overall $\left( p+m\right) $%
-dimensional data space is an effective alternative solution. In fact, the
so-called moving target or auxiliary system schemes reported, for instance,
in \cite%
{MT-method-CDC2015,MT-method-IEEE-TAC2020,Zhang-CDC2017,DIBAJI2019-survey}
for detecting zero dynamics and covert attacks, are special realisations of
this idea.

\begin{figure}[h]
\centering\includegraphics[width=7cm,height=3.5cm]{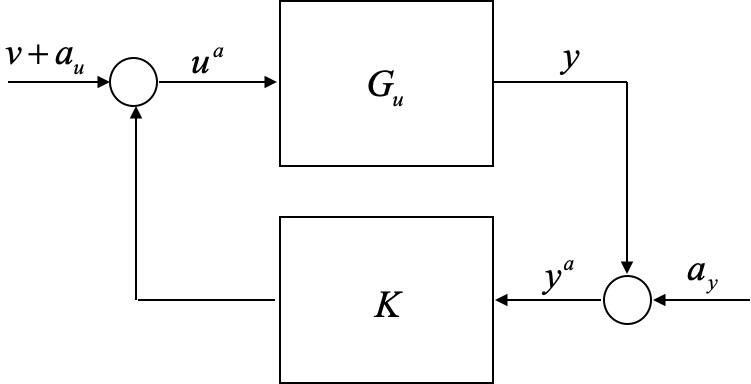}
\caption{Schematic description of the control loop under attack}
\end{figure}

\subsection{Analysis of closed-loop dynamics under kernel attacks}

Consider the feedback control loop with the plant model (\ref{eq3-1a})-(\ref%
{eq3-1b}) and controller described by (\ref{eq2-12c}), where $K(z)$
satisfies (\ref{eq2-12a})-(\ref{eq2-12b}). For the sake of the structural
analysis, $\omega (k)$ and $\nu (k)$ are assumed to be zero. It yields,
under attacks,%
\begin{align*}
y^{a}(z)& =G_{u}(z)u^{a}(z)+a_{y}(z), \\
u^{a}(z)& =K(z)y^{a}(z)+v(z)+a_{u}(z),
\end{align*}%
and the control loop configuration can be equivalently sketched by Figure 2.
It turns out 
\begin{gather}
\left[ 
\begin{array}{c}
u^{a}(z) \\ 
y^{a}(z)%
\end{array}%
\right] =\left[ 
\begin{array}{cc}
I & -K(z) \\ 
-G_{u}(z) & I%
\end{array}%
\right] ^{-1}\left[ 
\begin{array}{c}
a_{u}(z)+v(z) \\ 
a_{y}(z)%
\end{array}%
\right]  \label{eq4-1a} \\
=\left[ 
\begin{array}{cc}
\hat{V}(z) & -\hat{U}(z) \\ 
-\hat{N}(z) & \hat{M}(z)%
\end{array}%
\right] ^{-1}\left[ 
\begin{array}{c}
\hat{V}(z)\left( a_{u}(z)+v(z)\right) \\ 
\hat{M}(z)a_{y}(z)%
\end{array}%
\right] ,  \label{eq4-1b}
\end{gather}%
where 
\begin{equation*}
\hat{V}=X-Q\hat{N}\in \mathcal{RH}_{\infty },\hat{U}=-Y-Q\hat{M}\in \mathcal{%
RH}_{\infty },
\end{equation*}%
and $\hat{V},\hat{M}$ are invertible. Recall the Bezout identity (\ref{eq2-5}%
) and extend it to%
\begin{gather}
\left[ 
\begin{array}{cc}
X-Q\hat{N} & \text{ }Y+Q\hat{M} \\ 
-\hat{N} & \text{ }\hat{M}%
\end{array}%
\right] \left[ 
\begin{array}{cc}
M & \text{ }-\hat{Y}-MQ \\ 
N & \text{ }\hat{X}-NQ%
\end{array}%
\right] =\left[ 
\begin{array}{cc}
I\text{ } & 0\text{ } \\ 
0\text{ } & I\text{ }%
\end{array}%
\right] \Longleftrightarrow  \label{eq2-5a} \\
\left[ 
\begin{array}{cc}
X-Q\hat{N} & \text{ }Y+Q\hat{M} \\ 
-\hat{N} & \text{ }\hat{M}%
\end{array}%
\right] ^{-1}=\left[ 
\begin{array}{cc}
M & \text{ }-\hat{Y}-MQ \\ 
N & \text{ }\hat{X}-NQ%
\end{array}%
\right] \in \mathcal{RH}_{\infty }.  \label{eq2-5b}
\end{gather}%
As a result, we have

\begin{Theo}
\label{Theo4-1}Given the plant model (\ref{eq3-1a})-(\ref{eq3-1b}) and
controller (\ref{eq2-12c}) with $K(z)$ satisfying (\ref{eq2-12a})-(\ref%
{eq2-12b}), it holds%
\begin{gather}
\left[ 
\begin{array}{c}
\bar{a}_{u} \\ 
\bar{a}_{y}%
\end{array}%
\right] =\left[ 
\begin{array}{cc}
X-Q\hat{N} & \text{ }Y+Q\hat{M} \\ 
-\hat{N} & \text{ }\hat{M}%
\end{array}%
\right] \left[ 
\begin{array}{c}
u^{a} \\ 
y^{a}%
\end{array}%
\right] -\left[ 
\begin{array}{c}
\bar{v} \\ 
0%
\end{array}%
\right] ,\text{ }  \label{eq3-12} \\
\bar{v}=\hat{V}v,\bar{a}_{u}=\hat{V}a_{u},\bar{a}_{y}=\hat{M}a_{y}.  \notag
\end{gather}
\end{Theo}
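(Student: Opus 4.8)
The plan is to read the identity off directly from the closed-loop expression (\ref{eq4-1a})--(\ref{eq4-1b}) already derived, by invoking the invertibility of the coprime-factor matrix supplied by the extended Bezout identity (\ref{eq2-5a})--(\ref{eq2-5b}).

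First I would note that, with $\hat{V}=X-Q\hat{N}$ and $\hat{U}=-Y-Q\hat{M}$, the left factor appearing in (\ref{eq4-1b}) coincides with the matrix in the theorem statement, since $-\hat{U}=Y+Q\hat{M}$. Moreover this matrix is invertible over $\mathcal{RH}_{\infty}$, its inverse being displayed explicitly in (\ref{eq2-5b}). The transition from (\ref{eq4-1a}) to (\ref{eq4-1b}) itself rests on the factorisations $G_{u}=\hat{M}^{-1}\hat{N}$ and $K=\hat{V}^{-1}\hat{U}$ (the latter being just a rewriting of (\ref{eq2-12a})) together with the invertibility of $\hat{V}$ and $\hat{M}$, so that $\left[\begin{array}{cc} I & -K \\ -G_{u} & I \end{array}\right]$ factors as $\left[\begin{array}{cc} \hat{V}^{-1} & 0 \\ 0 & \hat{M}^{-1} \end{array}\right]\left[\begin{array}{cc} \hat{V} & -\hat{U} \\ -\hat{N} & \hat{M} \end{array}\right]$; all of these facts are already established above.

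The core of the argument is then merely to move the coprime-factor matrix onto the left-hand side of (\ref{eq4-1b}). Multiplying (\ref{eq4-1b}) on the left by that matrix yields
\[
\left[\begin{array}{cc} X-Q\hat{N} & Y+Q\hat{M} \\ -\hat{N} & \hat{M} \end{array}\right]\left[\begin{array}{c} u^{a} \\ y^{a} \end{array}\right]=\left[\begin{array}{c} \hat{V}(a_{u}+v) \\ \hat{M}a_{y} \end{array}\right].
\]
Expanding the right-hand side by linearity as $\hat{V}(a_{u}+v)=\hat{V}a_{u}+\hat{V}v=\bar{a}_{u}+\bar{v}$ and $\hat{M}a_{y}=\bar{a}_{y}$, and then subtracting the reference contribution $\bar{v}$ in the first block row, isolates $\bar{a}_{u}$ and $\bar{a}_{y}$ and reproduces (\ref{eq3-12}) exactly.

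I expect no genuine obstacle here: the statement is an algebraic rearrangement of an expression already in hand, and the only point demanding care is the invertibility of the coprime-factor matrix, which is precisely the content of the extended Bezout identity (\ref{eq2-5a})--(\ref{eq2-5b}). The remaining steps are linearity of the stable factors and a block-row subtraction.
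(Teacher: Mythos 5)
Your proposal is correct and follows essentially the same route as the paper: the paper also obtains (\ref{eq3-12}) by left-multiplying (\ref{eq4-1b}) by the coprime-factor matrix, whose invertibility over $\mathcal{RH}_{\infty}$ is exactly what the extended Bezout identity (\ref{eq2-5a})--(\ref{eq2-5b}) supplies. The identifications $-\hat{U}=Y+Q\hat{M}$, $\bar{a}_{u}=\hat{V}a_{u}$, $\bar{a}_{y}=\hat{M}a_{y}$ and the block-row subtraction of $\bar{v}$ are handled as in the text.
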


It follows immediately from Theorem \ref{Theo4-1} that, using signals $%
y^{a}(k),u^{a}(k)$ and $v(k),$

\begin{itemize}
\item the attack signals $a_{y}(k),a_{u}(k)$ could be structurally detected
in the sense that 
\begin{gather*}
\left[ 
\begin{array}{c}
a_{u}(z) \\ 
a_{y}(z)%
\end{array}%
\right] \neq 0\Longleftrightarrow \left[ 
\begin{array}{c}
\bar{a}_{u}(z) \\ 
\bar{a}_{y}(z)%
\end{array}%
\right] \neq 0\Longleftrightarrow \\
\left[ 
\begin{array}{cc}
X-Q\hat{N} & \text{ }Y+Q\hat{M} \\ 
-\hat{N} & \text{ }\hat{M}%
\end{array}%
\right] \left[ 
\begin{array}{c}
u^{a} \\ 
y^{a}%
\end{array}%
\right] -\left[ 
\begin{array}{c}
\bar{v} \\ 
0%
\end{array}%
\right] \neq 0,\text{ and}
\end{gather*}

\item if $\hat{V}^{-1}\in \mathcal{RH}_{\infty },\hat{M}^{-1}\in \mathcal{RH}%
_{\infty },$ i.e. both the plant and controller are stable, the attack pair $%
\left( a_{y},a_{u}\right) $ could also be (structurally) uniquely identified
according to%
\begin{equation*}
\left[ 
\begin{array}{c}
a_{u} \\ 
a_{y}%
\end{array}%
\right] =\left[ 
\begin{array}{cc}
I & -K \\ 
-G_{u} & \text{ }I%
\end{array}%
\right] \left[ 
\begin{array}{c}
u^{a} \\ 
y^{a}%
\end{array}%
\right] +\left[ 
\begin{array}{c}
-v \\ 
0%
\end{array}%
\right] ,
\end{equation*}%
except that the transfer matrix 
\begin{equation}
\left[ 
\begin{array}{cc}
M & \text{ }-\hat{Y}-MQ \\ 
N & \text{ }\hat{X}-NQ%
\end{array}%
\right]  \label{eq3-14}
\end{equation}%
has a transmission zero at $z=z_{0},$ i.e. 
\begin{equation}
rank\left[ 
\begin{array}{cc}
M(z_{0}) & \text{ }-\hat{Y}(z_{0})-M(z_{0})Q(z_{0}) \\ 
N(z_{0}) & \text{ }\hat{X}(z_{0})-N(z_{0})Q(z_{0})%
\end{array}%
\right] <m+p,  \label{eq3-13a}
\end{equation}%
and 
\begin{equation}
\left[ 
\begin{array}{c}
a_{u}(z) \\ 
a_{y}(z)%
\end{array}%
\right] =\left[ 
\begin{array}{c}
a_{u}(z_{0}) \\ 
a_{y}(z_{0})%
\end{array}%
\right] .  \label{eq3-13b}
\end{equation}%
Considering that transmission zeros of transfer matrix (\ref{eq3-14}) are
structural properties of the plant and the controller, and whose number is
limited, we will not address this class of possible attacks whose
realisation requires not only full knowledge of the plant and controller,
but also very special forms of attack signals.
\end{itemize}

\bigskip

It is worth noting that, according to the relations given in (\ref{eq2-5a})-(%
\ref{eq2-5b}), attacks $a_{y},a_{u}$ can also be (structurally) detected
using the relation%
\begin{gather}
\left[ 
\begin{array}{c}
u^{a} \\ 
y^{a}%
\end{array}%
\right] -\left[ 
\begin{array}{c}
M \\ 
N%
\end{array}%
\right] \bar{v}=\left[ 
\begin{array}{cc}
M & \text{ }-\hat{Y}-MQ \\ 
N & \text{ }\hat{X}-NQ%
\end{array}%
\right] \left[ 
\begin{array}{c}
\bar{a}_{u} \\ 
\bar{a}_{y}%
\end{array}%
\right]  \label{eq3-16} \\
=\left[ 
\begin{array}{cc}
I & -\hat{Y}-MQ \\ 
0 & \hat{X}-NQ%
\end{array}%
\right] \left[ 
\begin{array}{c}
a_{u} \\ 
\hat{N}a_{u}+\hat{M}a_{y}%
\end{array}%
\right] .  \notag
\end{gather}%
Before we continue our study on applying signals $v(k),y^{a}(k)$ and $%
u^{a}(k)$ for attack detection, we would like to discuss about relations (%
\ref{eq3-12}) and (\ref{eq3-16}), which is helpful to gain a deep insight
into our solutions and two different implementation forms of attack
detectors. To this end, we first check the transfer function matrix 
\begin{equation*}
\left[ 
\begin{array}{cc}
X-Q\hat{N} & \text{ }Y+Q\hat{M} \\ 
-\hat{N} & \text{ }\hat{M}%
\end{array}%
\right] =\left[ 
\begin{array}{cc}
\hat{V} & \text{ }-\hat{U} \\ 
-\hat{N} & \text{ }\hat{M}%
\end{array}%
\right]
\end{equation*}%
on the right-hand side of (\ref{eq3-12}). While the LCP $\left( \hat{M},\hat{%
N}\right) $ builds the kernel space of the plant, the pair $\left( \hat{V},%
\hat{U}\right) $ is left coprime and spans the kernel space of the
controller (\ref{eq2-12c}). In other words, the signal $r_{u}(z)$ defined by%
\begin{equation*}
u(z)-v(z)=K(z)y(z)\Longrightarrow \hat{V}(z)\left( u(z)-v(z)\right) -\hat{U}%
(z)y(z)=:r_{u}(z)
\end{equation*}%
can be viewed as a residual vector generated based on the controller
configuration. Since 
\begin{equation*}
\dim \left[ 
\begin{array}{c}
r_{0} \\ 
r_{u}%
\end{array}%
\right] =m+p,
\end{equation*}%
any changes (caused, for instance, by attacks) in the space spanned by the
plant input and output vectors, $\left( u,y\right) ,$ can be (structurally)
uniquely detected. It is of interest to notice that the residual vector $%
r_{u}(k)$ can be generated as well using an observer of the form%
\begin{align}
\hat{x}_{u}(k+1)& =\left( A-LC\right) \hat{x}_{u}(k)+(B-LD)\left(
u(k)-v(k)\right) +Ly(k),  \label{eq3-22a} \\
\left[ 
\begin{array}{c}
r_{u,1}(k) \\ 
r_{u,2}(k)%
\end{array}%
\right] & =\left[ 
\begin{array}{c}
u(k)-v(k)-F\hat{x}_{u}(k) \\ 
y(k)-D\left( u(k)-v(k)\right) -C\hat{x}_{u}(k)%
\end{array}%
\right] , \\
r_{u}(z)& =r_{u,1}(z)+Q(z)r_{u,2}(z).  \label{eq3-22b}
\end{align}%
We now consider the left-hand side of (\ref{eq3-16}). In the attack-free
case, it is indeed a residual generator based on the closed-loop dynamics, 
\begin{equation}
\left[ 
\begin{array}{c}
r_{u,c}(z) \\ 
r_{y,c}(z)%
\end{array}%
\right] :=\left[ 
\begin{array}{c}
u^{a}(z) \\ 
y^{a}(z)%
\end{array}%
\right] -\left[ 
\begin{array}{c}
M(z) \\ 
N(z)%
\end{array}%
\right] \bar{v}(z),  \label{eq3-23}
\end{equation}%
whose state space representation is given by%
\begin{align*}
\hat{x}_{v}(k+1)& =\left( A-LC\right) \hat{x}_{v}(k)+(B-LD)v(k), \\
\bar{v}(k)& =v(k)-F\hat{x}_{v}(k)-q(k),q(z)=Q(z)\left( C\hat{x}%
_{v}(z)+Dv(z)\right) , \\
\hat{x}_{c}(k+1)& =\left( A+BF\right) \hat{x}_{c}(k)+B\bar{v}(k), \\
\left[ 
\begin{array}{c}
r_{u,c}(k) \\ 
r_{y,c}(k)%
\end{array}%
\right] & =\left[ 
\begin{array}{c}
u^{a}(k) \\ 
y^{a}(k)%
\end{array}%
\right] -\left[ 
\begin{array}{c}
F \\ 
C+DF%
\end{array}%
\right] \hat{x}_{c}(k)-\left[ 
\begin{array}{c}
I \\ 
D%
\end{array}%
\right] \bar{v}(k).
\end{align*}%
Hence, using the closed-loop dynamics based residual vectors, $r_{u,c}$ and $%
r_{y,c},$ with 
\begin{equation*}
\dim \left[ 
\begin{array}{c}
r_{u,c} \\ 
r_{y,c}%
\end{array}%
\right] =m+p,
\end{equation*}%
it is possible to detect attacks uniquely as well.

\bigskip

In summary, in order to detect all kernel attacks uniquely, we can use the
(online) data $v(k),y^{a}(k)$ and $u^{a}(k)$ to generate either the
observer-based residuals $r_{0}$ and $r_{u}$ or the closed-loop dynamics
based residuals $r_{u,c}$ and $r_{y,c}.$ Unfortunately, $u^{a}(k)$ is only
available on the side of the plant, as shown in Figures 1 and 2. \ This
motivates us to propose a detection scheme described in the next sub-section.

\subsection{A conceptual scheme for detecting kernel attacks}

For the realisation of the detection solution based on (\ref{eq3-12}) given
in Theorem \ref{Theo4-1}, we propose the following conceptual detection
scheme.

\bigskip

It follows from the relation 
\begin{equation*}
\left( X(z)-Q(z)\hat{N}(z)\right) \left( u(z)-v(z)\right) =-\left( Y(z)+Q(z)%
\hat{M}(z)\right) y(z)
\end{equation*}%
in attack-free case that signal%
\begin{equation}
r_{u,0}(z):=X(z)u(z)+Y(z)y(z)-\left( X(z)-Q(z)\hat{N}(z)\right) v(z)
\label{eq4-10}
\end{equation}%
builds a residual signal satisfying 
\begin{equation*}
r_{u,0}(z)=r_{u}(z)+Q(z)r_{0}(z).
\end{equation*}%
Recall that $v$ is available at the monitoring side, while the signals $y,u$
exist on the plant side with $u$ being corrupted by $a_{u}.$ For our
purpose, we propose to generate the residual signal $r_{u,0}$ using the
following algorithm:

\begin{itemize}
\item compute 
\begin{equation}
r_{en}(z):=X(z)u^{a}(z)+Y(z)y(z);  \label{eq4-9}
\end{equation}

\item transmit $r_{en}(k)$ to the monitoring and control side;

\item compute, on the monitoring and control side,%
\begin{equation}
r_{u,0}(z)=r_{en}^{a}(z)-\left( X(z)-Q(z)\hat{N}(z)\right) v(z).
\label{eq4-9a}
\end{equation}%
Here, it is supposed that $r_{en}(z)$ is attacked by the attack signal $%
a_{r_{en}}(k)$, i.e.%
\begin{equation*}
r_{en}^{a}(k)=r_{en}(k)+a_{r_{en}}(k).
\end{equation*}
\end{itemize}

Figure 3 shows the corresponding system configuration.

\begin{figure}[h]
\centering\includegraphics[width=11cm,height=6cm]{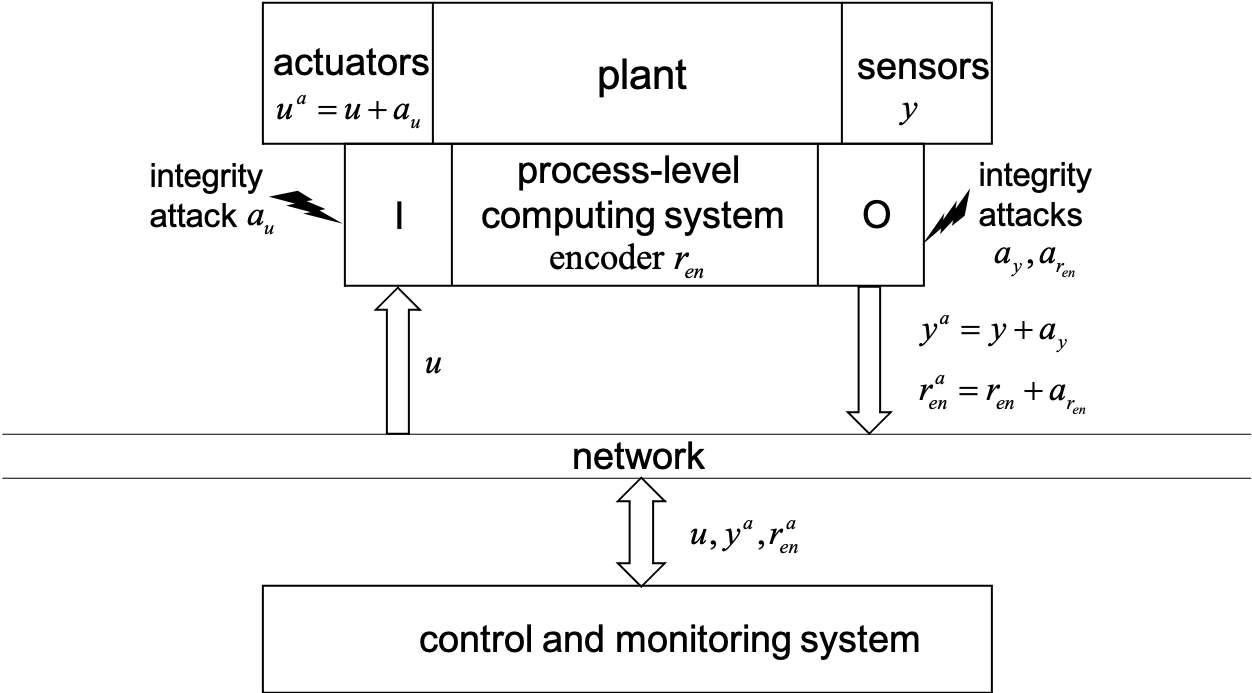}
\caption{Schematic description of kernel attack detection scheme}
\end{figure}
\bigskip
Next, we check the dynamics of $r_{u,0},r_{0}$ without considering noises.
Recall that%
\begin{gather*}
u(z)=K(z)y^{a}(z)+v(z)\Longleftrightarrow \\
\left( X(z)-Q(z)\hat{N}(z)\right) \left( u(z)-v(z)\right) =-\left( Y(z)+Q(z)%
\hat{M}(z)\right) y^{a}(z).
\end{gather*}%
It yields 
\begin{gather}
\left[ 
\begin{array}{c}
r_{u,0} \\ 
r_{0}%
\end{array}%
\right] =\left[ 
\begin{array}{ccc}
0 & 0 & I \\ 
-\hat{N} & \hat{M} & 0%
\end{array}%
\right] \left[ 
\begin{array}{c}
u \\ 
y^{a} \\ 
r_{en}^{a}%
\end{array}%
\right] -\left[ 
\begin{array}{c}
\bar{v} \\ 
0%
\end{array}%
\right]  \notag \\
=\left[ 
\begin{array}{ccc}
X-Q\hat{N} & -Y-Q\hat{M} & I \\ 
\hat{N} & \hat{M} & 0%
\end{array}%
\right] \left[ 
\begin{array}{c}
a_{u} \\ 
a_{y} \\ 
a_{r_{en}}%
\end{array}%
\right] .  \label{eq4-6}
\end{gather}%
Consequently, 
\begin{equation*}
\left[ 
\begin{array}{c}
r_{u,0}(z) \\ 
r_{0}(z)%
\end{array}%
\right] =0
\end{equation*}%
if and only if $a_{r_{en}},a_{u},a_{y}$ solve 
\begin{gather}
\hat{N}(z)a_{u}(z)+\hat{M}(z)a_{y}(z)=0,  \label{eq4-7} \\
a_{r_{en}}(z)=Y(z)a_{y}(z)-X(z)a_{u}(z).  \label{eq4-8}
\end{gather}%
We summary the above results in the following theorem.

\begin{Theo}
\label{Theo4-2}Given the plant model (\ref{eq3-1a})-(\ref{eq3-1b}), the
controller%
\begin{equation*}
u(z)=K(z)y^{a}(z)+v(z),
\end{equation*}%
with $K(z)$ satisfying (\ref{eq2-12a})-(\ref{eq2-12b}) and the system
configuration shown in Figure 3, where $a_{r_{en}},$ $a_{u},$ $a_{y}$ are
the attack signals, $r_{en}^{a},y^{a},u,v$ are the system signals being
available at the monitoring side and used for the attack detection purpose,
then attacks $\left( a_{r_{en}},a_{u},a_{y}\right) $ are stealthy, i.e. they
cannot be detected using the available system signals, if and only if the
conditions (\ref{eq4-7})-(\ref{eq4-8}) are satisfied.
\end{Theo}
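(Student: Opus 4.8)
The plan is to reduce the statement to the single algebraic identity (\ref{eq4-6}) and then read the two conditions (\ref{eq4-7})--(\ref{eq4-8}) directly off it. First I would pin down what ``stealthy'' means in this configuration. The data physically available at the monitoring side are $r_{en}^{a},y^{a},u,v$, and from them the detector forms precisely the observer-based residual $r_{0}$ of (\ref{eq2-7}) together with the control-signal-based residual $r_{u,0}$ of (\ref{eq4-9a}). As in Theorem~\ref{Theo3-1} and the discussion around (\ref{eq3-23}), these two residuals jointly exhaust the information an LTI detector can extract from the available data, their stacked dimension being $m+p$. Hence the triple $(a_{r_{en}},a_{u},a_{y})$ is stealthy if and only if $r_{u,0}$ and $r_{0}$ vanish identically for every nominal operating signal.

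The core step is to prove (\ref{eq4-6}), i.e.\ to write the stacked residual purely in terms of the attack signals. For the lower block I substitute the plant relation $y=\hat{M}^{-1}\hat{N}u^{a}$ with $u^{a}=u+a_{u}$, together with $y^{a}=y+a_{y}$, into $r_{0}=\hat{M}y^{a}-\hat{N}u$, which collapses to $r_{0}=\hat{N}a_{u}+\hat{M}a_{y}$. For the upper block I insert $r_{en}^{a}=Xu^{a}+Yy+a_{r_{en}}$ into (\ref{eq4-9a}) and eliminate the nominal terms using the controller identity $(X-Q\hat{N})(u-v)=-(Y+Q\hat{M})y^{a}$, which follows from $u=Ky^{a}+v$ and the parameterisation (\ref{eq2-12a}). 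After replacing $y^{a}=y+a_{y}$ and $\hat{M}y=\hat{N}u^{a}$ and collecting the $Q$-terms, the nominal signals $u,v,y$ cancel completely and leave $r_{u,0}=(X-Q\hat{N})a_{u}-(Y+Q\hat{M})a_{y}+a_{r_{en}}$; the Bezout identity (\ref{eq2-5}) is what makes these cancellations exact. This is exactly (\ref{eq4-6}).

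With (\ref{eq4-6}) in hand the equivalence is immediate. Setting the lower block to zero yields $\hat{N}a_{u}+\hat{M}a_{y}=0$, which is (\ref{eq4-7}). Feeding this back into the upper block annihilates the $Q$-contributions, so $r_{u,0}=0$ reduces to $a_{r_{en}}=(Y+Q\hat{M})a_{y}-(X-Q\hat{N})a_{u}=Ya_{y}-Xa_{u}$, which is (\ref{eq4-8}). Conversely, substituting (\ref{eq4-7})--(\ref{eq4-8}) into (\ref{eq4-6}) makes both blocks vanish. Because (\ref{eq4-6}) exhibits the residuals as functions of the attack signals alone, the requirement ``identically zero for all $u$'' is equivalent to this purely attack-side condition, which closes the iff.

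The main obstacle I expect is the bookkeeping in the second step: one must keep straight which signal each attack corrupts ($u^{a}=u+a_{u}$ at the actuator, $y^{a}=y+a_{y}$ on the sensor channel, $r_{en}^{a}=r_{en}+a_{r_{en}}$ in transmission) and apply the controller and Bezout relations in the correct order so that every nominal signal cancels and only the clean attack map of (\ref{eq4-6}) survives. The terms carrying the Youla parameter $Q$ are the most error-prone part of that cancellation, and verifying that they drop out (rather than leaving a residual $Q$-dependence) is the real content of the proof.
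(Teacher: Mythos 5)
Your proposal is correct and follows essentially the same route as the paper: derive the attack-to-residual map (\ref{eq4-6}) by substituting $u^{a}=u+a_{u}$, $y^{a}=y+a_{y}$, $r_{en}^{a}=r_{en}+a_{r_{en}}$ and cancelling the nominal signals via the controller identity $\left( X-Q\hat{N}\right) (u-v)=-\left( Y+Q\hat{M}\right) y^{a}$ together with the plant relation $\hat{M}y=\hat{N}u^{a}$, then read off (\ref{eq4-7})--(\ref{eq4-8}) by setting the two residual blocks to zero. The only small imprecision is attributing the cancellation to the Bezout identity (\ref{eq2-5}); the computation needs only the Youla form of $K$ and the LCF of the plant, with Bezout entering merely in the background of the parameterisation.
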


\subsection{\textbf{Design and construction of }residual generator $r_{u,0}$%
\label{Sec4-3}}

It follows from Theorem \ref{Theo4-2} that an attacker could design attack
signals $a_{r_{en}},a_{u},$ $a_{y}$ so that conditions (\ref{eq4-7})-(\ref%
{eq4-8}) are satisfied when the attacker is in possession of knowledge of
the plant dynamics (regarding to (\ref{eq4-7})) and the construction of
system (\ref{eq4-9}) (regarding to (\ref{eq4-8})). This demands that
knowledge of the residual generator (\ref{eq4-9}) and (\ref{eq4-10}) should
be protected from the attacker so that the attacker could not be able to
construct the attack signal $a_{r_{en}}$ according to (\ref{eq4-8}). To this
end, encryption of the concerning system dynamics is the major task of
designing and constructing the residual generator $r_{u,0}$ described by (%
\ref{eq4-9}) and (\ref{eq4-10}). This will be realised in two steps.

\bigskip

At first, the residual generator (\ref{eq4-10}) is constructed at two
different sides of the networked control system, as shown in (\ref{eq4-9})
and (\ref{eq4-9a}). In a certain sense, the dynamic system 
\begin{equation*}
r_{en}(z)=\left[ 
\begin{array}{cc}
X(z) & \text{ }Y(z)%
\end{array}%
\right] \left[ 
\begin{array}{c}
u^{a}(z) \\ 
y(z)%
\end{array}%
\right]
\end{equation*}%
can be interpreted as an encoding algorithm and thus is called encoder. The
residual signal $r_{u,0}$ is then generated by a decoding algorithm in the
form 
\begin{equation*}
r_{u,0}(z)=r_{en}(z)-\left( X(z)-Q(z)\hat{N}(z)\right) v(z).
\end{equation*}%
Note that even if knowledge of the (encoding) system (\ref{eq4-9}) is
protected, the attacker could identify the system dynamics using possibly
eavesdropped signals $r_{en},u^{a}$ and $y.$ In order to protect the system
dynamics from being identified, the involved system $(X, Y)$ is further
encrypted in the next step.

\bigskip

Recall that the state space form of the encoder (\ref{eq4-9}) is given by 
\begin{align}
\varsigma (k+1)& =\left( A-LC\right) \varsigma (k)+Ly(k)+(B-LD)u^{a}(k), \\
r_{en}(k)& =u^{a}(k)-F\varsigma (k).
\end{align}%
Moreover, the following lemma can be proved.

\begin{Le}
\label{Le4-1}Given $\left( \hat{M}_{i},\hat{N}_{i}\right) ,\left(
X_{i},Y_{i}\right) ,i=1,2,$ subject to 
\begin{align*}
\hat{M}_{i}& =\left( A-L_{i}C,-L_{i},C,I\right) ,\hat{N}_{i}=\left(
A-L_{i}C,B-L_{i}D,C,D\right) , \\
X_{i}& =\left( A-L_{i}C,-(B-L_{i}D),F_{i},I\right) ,Y_{i}=\left(
A-L_{i}C,-L_{i},F_{i},0\right) ,
\end{align*}%
it holds%
\begin{align}
\left[ 
\begin{array}{cc}
X_{1}(z) & Y_{1}(z)%
\end{array}%
\right] & =R_{12}(z)\left[ 
\begin{array}{cc}
X_{2}(z) & Y_{2}(z)%
\end{array}%
\right] +\bar{Q}_{11}(z)\left[ 
\begin{array}{cc}
-\hat{N}_{1}(z) & \hat{M}_{1}(z)%
\end{array}%
\right]  \label{eq4-20} \\
& =R_{12}(z)\left[ 
\begin{array}{cc}
X_{2}(z) & Y_{2}(z)%
\end{array}%
\right] +\bar{Q}_{12}(z)\left[ 
\begin{array}{cc}
-\hat{N}_{2}(z) & \hat{M}_{2}(z)%
\end{array}%
\right] ,  \label{eq4-20a} \\
R_{12}(z)& =R_{21}^{-1}(z)=I+\left( F_{2}-F_{1}\right) \left(
zI-A_{F_{2}}\right) ^{-1}B\in \mathcal{RH}_{\infty },A_{F_{i}}=A+BF_{i}, 
\notag \\
\bar{Q}_{11}(z)& =F_{1}\left( zI-A_{L_{2}}\right) ^{-1}\left(
L_{2}-L_{1}\right) -\bar{R}_{12}(z)Q_{21}(z)\in \mathcal{RH}_{\infty
},A_{L_{i}}=A-L_{i}C,  \notag \\
\bar{Q}_{12}(z)& =F_{1}\left( zI-A_{L_{1}}\right) ^{-1}\left(
L_{2}-L_{1}\right) -\left( F_{1}-F_{2}\right) \left( zI-A_{F_{2}}\right)
^{-1}L_{2}\in \mathcal{RH}_{\infty },  \notag \\
\bar{R}_{12}(z)& =\left( F_{1}-F_{2}\right) \left( zI-A_{F_{2}}\right)
^{-1}L_{2}\in \mathcal{RH}_{\infty }, \\
Q_{21}(z)& =Q_{12}^{-1}(z)=I+C\left( zI-A_{L_{2}}\right) ^{-1}\left(
L_{1}-L_{2}\right) \in \mathcal{RH}_{\infty }.
\end{align}
\end{Le}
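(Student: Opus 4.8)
The plan is to prove (\ref{eq4-20}) by exploiting the Bezout identity (\ref{eq2-5}), which for each $i\in\{1,2\}$ yields the biorthogonality relations
$$[X_i\;\,Y_i]\begin{bmatrix}M_i\\N_i\end{bmatrix}=I,\quad [-\hat N_i\;\,\hat M_i]\begin{bmatrix}M_i\\N_i\end{bmatrix}=0,\quad [X_i\;\,Y_i]\begin{bmatrix}-\hat Y_i\\\hat X_i\end{bmatrix}=0,\quad [-\hat N_i\;\,\hat M_i]\begin{bmatrix}-\hat Y_i\\\hat X_i\end{bmatrix}=I,$$
and makes $\begin{bmatrix}M_1&-\hat Y_1\\N_1&\hat X_1\end{bmatrix}$ a unit in $\mathcal{RH}_\infty$ with inverse $\begin{bmatrix}X_1&Y_1\\-\hat N_1&\hat M_1\end{bmatrix}$. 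Reading the product $\begin{bmatrix}M_1&-\hat Y_1\\N_1&\hat X_1\end{bmatrix}\begin{bmatrix}X_1&Y_1\\-\hat N_1&\hat M_1\end{bmatrix}=I$ as a column-times-row expansion gives $I=\begin{bmatrix}M_1\\N_1\end{bmatrix}[X_1\;Y_1]+\begin{bmatrix}-\hat Y_1\\\hat X_1\end{bmatrix}[-\hat N_1\;\hat M_1]$, so every $W\in\mathcal{RH}_\infty^{p\times(p+m)}$ is uniquely recovered from its two coordinates $W\begin{bmatrix}M_1\\N_1\end{bmatrix}$ and $W\begin{bmatrix}-\hat Y_1\\\hat X_1\end{bmatrix}$. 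It therefore suffices to verify (\ref{eq4-20}) after right-multiplication by each of the two columns.

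First I would post-multiply (\ref{eq4-20}) by $\begin{bmatrix}M_1\\N_1\end{bmatrix}$: the left side is $I$, the kernel term drops by the second relation above, and the identity collapses to $R_{12}\,[X_2\;Y_2]\begin{bmatrix}M_1\\N_1\end{bmatrix}=I$. Here I would use that $\begin{bmatrix}M_1\\N_1\end{bmatrix}$ and $\begin{bmatrix}M_2\\N_2\end{bmatrix}$ are right coprime factorizations of the same plant, hence related by a right unit $\begin{bmatrix}M_1\\N_1\end{bmatrix}=\begin{bmatrix}M_2\\N_2\end{bmatrix}R_{21}$, with $R_{21}=I+(F_1-F_2)(zI-A_{F_1})^{-1}B$ obtained from the change of state-feedback gain $F_2\to F_1$; combined with the index-$2$ Bezout relation $[X_2\;Y_2]\begin{bmatrix}M_2\\N_2\end{bmatrix}=I$ this gives $[X_2\;Y_2]\begin{bmatrix}M_1\\N_1\end{bmatrix}=R_{21}$, whence $R_{12}=R_{21}^{-1}$, and inverting the realization $(A_{F_1},B,F_1-F_2,I)$ produces exactly $R_{12}=I+(F_2-F_1)(zI-A_{F_2})^{-1}B$.

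Next I would post-multiply (\ref{eq4-20}) by $\begin{bmatrix}-\hat Y_1\\\hat X_1\end{bmatrix}$: the left side vanishes, the kernel term contributes $\bar Q_{11}$ by the fourth relation, and we are forced to take $\bar Q_{11}=-R_{12}\,[X_2\;Y_2]\begin{bmatrix}-\hat Y_1\\\hat X_1\end{bmatrix}=-R_{12}\big(Y_2\hat X_1-X_2\hat Y_1\big)$. Substituting the realizations of $X_2,Y_2$ (state matrix $A_{L_2}$) and of $\hat X_1,\hat Y_1$ (state matrix $A_{F_1}$) and reducing the resulting cascade is the computation that must return the stated $\bar Q_{11}=F_1(zI-A_{L_2})^{-1}(L_2-L_1)-\bar R_{12}Q_{21}$. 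Finally, (\ref{eq4-20a}) follows from (\ref{eq4-20}): because $[-\hat N_1\;\hat M_1]$ and $[-\hat N_2\;\hat M_2]$ are left coprime factorizations of the same plant, they are related by the left unit induced by the observer-gain change $L_2\to L_1$, which a short realization computation identifies as $[-\hat N_1\;\hat M_1]=Q_{12}[-\hat N_2\;\hat M_2]$ with $Q_{12}=I+C(zI-A_{L_1})^{-1}(L_2-L_1)=Q_{21}^{-1}$; substituting this into (\ref{eq4-20}) and setting $\bar Q_{12}=\bar Q_{11}Q_{12}$ yields the second representation after simplification.

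The main obstacle is the realization bookkeeping in this last reduction, where $[X_2\;Y_2]$ carries the observer state matrix $A_{L_2}$ while the Bezout column $\begin{bmatrix}-\hat Y_1\\\hat X_1\end{bmatrix}$ carries the feedback state matrix $A_{F_1}$. Matching the cascade to the claimed closed form requires a state-coordinate transformation that block-diagonalizes the two state matrices and cancels the unobservable and uncontrollable modes, so that the two gain differences $F_1-F_2$ and $L_2-L_1$ survive with exactly the coefficients and state matrices given in the statement; by contrast, the biorthogonality argument renders the remaining steps routine.
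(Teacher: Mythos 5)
Your proposal is correct in outline but follows a genuinely different route from the paper. The paper proves the lemma ``forwards'': it computes $R_{12}(z)X_{2}(z)$ and $R_{12}(z)Y_{2}(z)$ directly in state space, using the identity $\left( zI-A\right) \left( zI-A_{L}\right) ^{-1}L=L\hat{M}(z)$ and the shift relations between $\left( zI-A_{L_{2}}\right) ^{-1}L_{2}$ and $\left( zI-A_{L_{1}}\right) ^{-1}L_{1}$ (and likewise for $B-L_{i}D$) to peel off $X_{1},Y_{1}$ plus explicit multiples of $\left[ -\hat{N}_{1}\ \ \hat{M}_{1}\right] $; the passage to (\ref{eq4-20a}) via $\left[ -\hat{N}_{1}\ \ \hat{M}_{1}\right] =Q_{12}\left[ -\hat{N}_{2}\ \ \hat{M}_{2}\right] $ is common to both arguments. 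You instead exploit the doubly coprime structure: the reversed Bezout product gives the resolution of the identity $I=\left[ \begin{array}{c}M_{1} \\ N_{1}\end{array}\right] \left[ X_{1}\ \ Y_{1}\right] +\left[ \begin{array}{c}-\hat{Y}_{1} \\ \hat{X}_{1}\end{array}\right] \left[ -\hat{N}_{1}\ \ \hat{M}_{1}\right] $, so (\ref{eq4-20}) holds if and only if both sides agree after right multiplication by the two columns; this forces $R_{12}=\left( \left[ X_{2}\ \ Y_{2}\right] \left[ \begin{array}{c}M_{1} \\ N_{1}\end{array}\right] \right) ^{-1}=R_{21}^{-1}$ and $\bar{Q}_{11}=R_{12}\left( X_{2}\hat{Y}_{1}-Y_{2}\hat{X}_{1}\right) $, and your identification of $R_{21}$ through the RCF relation $\left[ \begin{array}{c}M_{1} \\ N_{1}\end{array}\right] =\left[ \begin{array}{c}M_{2} \\ N_{2}\end{array}\right] R_{21}$ is carried out correctly. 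What your route buys is a structural existence-and-uniqueness argument that makes the formula for $R_{12}$ essentially free; what it costs is that the stated closed form of $\bar{Q}_{11}$ still has to be extracted from the cascade $R_{12}\left( X_{2}\hat{Y}_{1}-Y_{2}\hat{X}_{1}\right) $, whose factors carry different state matrices ($A_{L_{2}}$ versus $A_{F_{1}}$). You flag this as the main obstacle but do not execute it, and it is precisely where the paper spends the bulk of its effort (its direct computation of $R_{12}X_{2}$ and $R_{12}Y_{2}$ avoids ever forming that $2n$-dimensional cascade). So the plan is sound and no step would fail, but to be a complete proof the reduction of $R_{12}\left( X_{2}\hat{Y}_{1}-Y_{2}\hat{X}_{1}\right) $ to $F_{1}\left( zI-A_{L_{2}}\right) ^{-1}\left( L_{2}-L_{1}\right) -\bar{R}_{12}(z)Q_{21}(z)$, and the analogous simplification showing that $\bar{Q}_{11}Q_{12}$ equals the stated $\bar{Q}_{12}$, must still be written out.
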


The proof is given in Appendix.

\bigskip

Lemma \ref{Le4-1} reveals that varying the gain matrices $F_{2}$ and $L_{2}$
in $\left( X,Y\right) $ to $F_{1}$ and $L_{1}$ is equivalent to adding (i) a
(stable and invertible) post-filter to $r_{en}(z)$ and (ii) additional
residual signal $r_{0}$. On the basis of this result, we propose to switch $%
F $ and $L,$ denoted by $F_{\sigma }$ and $L_{\sigma },$ among a set of
values as follows:%
\begin{align*}
F_{\sigma }& \in \mathcal{F}:=\left\{ F_{i}\in \mathcal{R}^{p\times
n},A+BF_{i}\text{ is Schur},i\in \mathcal{I}\right\} ,\mathcal{I}=\left\{
1,\ldots ,\kappa \right\} , \\
L_{\sigma }& \in \mathcal{L}:=\left\{ L_{i}\in \mathcal{R}^{n\times
m},A-L_{i}C\text{ is Schur},i\in \mathcal{I}\right\} ,
\end{align*}%
where $\sigma \in \mathcal{I}$ is the switching law that is to be protected
so that it is unknown for the attacker. Let $F_{0}$ and $L_{0}$ denote the
gain matrices $F$ and $L$ adopted in the control law (\ref{eq2-12a})-(\ref%
{eq2-12b}), and 
\begin{align*}
r_{0,\sigma }(z)& =\hat{M}_{\sigma }(z)y(z)-\hat{N}_{\sigma
}(z)u^{a}(z),\sigma \in \mathcal{I}, \\
r_{en,\sigma }(z)& =X_{\sigma }(z)u^{a}(z)+Y_{\sigma }(z)y(z),
\end{align*}%
where%
\begin{align*}
\hat{M}_{\sigma }& =\left( A-L_{\sigma }C,-L_{\sigma },C,I\right) ,\hat{N}%
_{\sigma }=\left( A-L_{\sigma }C,B-L_{\sigma }D,C,D\right) , \\
X_{\sigma }& =\left( A-L_{\sigma }C,-(B-L_{\sigma }D),F_{\sigma },I\right)
,Y_{\sigma }=\left( A-L_{\sigma }C,-L_{\sigma },F_{\sigma },0\right) .
\end{align*}%
Then, we have

\begin{Theo}
\label{Theo4-3}Given the plant model (\ref{eq3-1a})-(\ref{eq3-1b}), the
control law $K(z)$ satisfying (\ref{eq2-12a})-(\ref{eq2-12b}) and the system
configuration shown in Figure 3, it holds%
\begin{align}
r_{0,\sigma }(z)& =P_{0,\sigma }(z)r_{0,p}(z),r_{0,p}(z)=\hat{M}_{0}(z)y(z)-%
\hat{N}_{0}(z)u^{a}(z), \\
P_{0,\sigma }(z)& =I+C\left( zI-A_{L_{\sigma }}\right) ^{-1}\left(
L_{0}-L_{\sigma }\right) \in \mathcal{RH}_{\infty },A_{L_{\sigma
}}=A-L_{\sigma }C, \\
r_{en,\sigma }(z)& =P_{u,\sigma }(z)r_{en,0}(z)+Q_{\sigma
}(z)r_{0,p}(z),r_{en,0}(z)=X_{0}(z)u^{a}(z)+Y_{0}(z)y(z), \\
P_{u,\sigma }(z)& =I+\left( F_{0}-F_{\sigma }\right) \left(
zI-A_{F_{0}}\right) ^{-1}B\in \mathcal{RH}_{\infty },A_{F_{0}}=A+BF_{0}, \\
Q_{\sigma }(z)& =F_{\sigma }\left( zI-A_{L_{\sigma }}\right) ^{-1}\left(
L_{0}-L_{\sigma }\right) -\left( F_{\sigma }-F_{0}\right) \left(
zI-A_{F_{0}}\right) ^{-1}L_{0}\in \mathcal{RH}_{\infty },
\end{align}%
where%
\begin{align*}
\hat{M}_{0}& =\left( A-L_{0}C,-L_{0},C,I\right) ,\hat{N}_{0}=\left(
A-L_{0}C,B-L_{0}D,C,D\right) , \\
X_{0}& =\left( A-L_{0}C,-(B-L_{0}D),F_{0},I\right) ,Y_{0}=\left(
A-L_{0}C,-L_{0},F_{0},0\right) .
\end{align*}
\end{Theo}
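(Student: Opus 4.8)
The plan is to observe that both assertions of Theorem~\ref{Theo4-3} are identities between transfer matrices built from the switched gains $(F_\sigma,L_\sigma)$ and the nominal gains $(F_0,L_0)$, evaluated on the common signal pair $(u^a,y)$. By the definitions preceding the theorem, $r_{0,\sigma}=\hat M_\sigma y-\hat N_\sigma u^a$, $r_{0,p}=\hat M_0y-\hat N_0u^a$, $r_{en,\sigma}=X_\sigma u^a+Y_\sigma y$ and $r_{en,0}=X_0u^a+Y_0y$. Hence it suffices to prove the two row-vector identities
\begin{gather*}
\left[\,-\hat N_\sigma\ \ \hat M_\sigma\,\right]=P_{0,\sigma}\left[\,-\hat N_0\ \ \hat M_0\,\right],\\
\left[\,X_\sigma\ \ Y_\sigma\,\right]=P_{u,\sigma}\left[\,X_0\ \ Y_0\,\right]+Q_\sigma\left[\,-\hat N_0\ \ \hat M_0\,\right],
\end{gather*}
and then apply both sides to $(u^a,y)$; the second identity then yields $r_{en,\sigma}=P_{u,\sigma}r_{en,0}+Q_\sigma r_{0,p}$, because the trailing block $\left[\,-\hat N_0\ \ \hat M_0\,\right]$ applied to $(u^a,y)$ is exactly $r_{0,p}$.

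The second identity is nothing but Lemma~\ref{Le4-1}, relation~(\ref{eq4-20a}), under the index substitution $1\mapsto\sigma$, $2\mapsto0$. Indeed, with this substitution the factor $R_{12}$ of the lemma becomes $I+(F_0-F_\sigma)(zI-A_{F_0})^{-1}B=P_{u,\sigma}$, and $\bar Q_{12}$ becomes $F_\sigma(zI-A_{L_\sigma})^{-1}(L_0-L_\sigma)-(F_\sigma-F_0)(zI-A_{F_0})^{-1}L_0=Q_\sigma$. Thus this part of the theorem is immediate once the substitution is carried out and the formulas are matched term by term; nothing beyond what the appendix already supplies for Lemma~\ref{Le4-1} is required.

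For the first identity I would invoke the standard fact that two left coprime pairs of one and the same plant that differ only through the output-injection gain are related by a stably invertible left factor. Writing $\hat M_i(z)=I-C(zI-A_{L_i})^{-1}L_i$ and $\hat N_i(z)=D+C(zI-A_{L_i})^{-1}(B-L_iD)$, I would form the series connection of $P_{0,\sigma}$ with $\left[\,-\hat N_0\ \ \hat M_0\,\right]$ and reduce the resulting stacked realisation via a state coordinate change that decouples the two copies of the dynamics, using the identity $A_{L_\sigma}-A_{L_0}=(L_0-L_\sigma)C$. This recovers $\left[\,-\hat N_\sigma\ \ \hat M_\sigma\,\right]$ with the claimed factor $P_{0,\sigma}=I+C(zI-A_{L_\sigma})^{-1}(L_0-L_\sigma)$. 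In fact this factor is exactly the element $Q_{12}$ (again under $1\mapsto\sigma$, $2\mapsto0$) already produced inside the proof of Lemma~\ref{Le4-1}, so the computation is not genuinely new; one merely records it as a standalone relation and applies it to $(u^a,y)$ to obtain $r_{0,\sigma}=P_{0,\sigma}r_{0,p}$.

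The main obstacle is bookkeeping rather than depth: one must keep the two gain families ($\sigma$ versus $0$) and the two gain types ($F$ versus $L$) consistent while matching the Lemma~\ref{Le4-1} expressions, and the realisation reduction for $P_{0,\sigma}$ requires choosing the correct similarity transformation to collapse the doubled state. Since Lemma~\ref{Le4-1} is already established, I expect the whole proof to amount to specialising that lemma and performing one routine series-connection simplification. I would close by noting that $P_{0,\sigma},P_{u,\sigma},Q_\sigma\in\mathcal{RH}_{\infty}$ because $A_{L_\sigma}=A-L_\sigma C$ and $A_{F_0}=A+BF_0$ are Schur by the standing requirement that every $F_i\in\mathcal{F}$ and $L_i\in\mathcal{L}$ stabilise the respective closed loops.
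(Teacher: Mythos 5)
Your proposal is correct and follows essentially the same route as the paper, which simply states that Theorem~\ref{Theo4-3} follows immediately from Lemma~\ref{Le4-1}: your index substitution $1\mapsto\sigma$, $2\mapsto 0$ correctly turns $R_{12}$ into $P_{u,\sigma}$, $\bar Q_{12}$ into $Q_{\sigma}$, and the coprime-pair relation $[\,-\hat N_{1}\ \hat M_{1}\,]=Q_{12}[\,-\hat N_{2}\ \hat M_{2}\,]$ from the lemma's proof into $r_{0,\sigma}=P_{0,\sigma}r_{0,p}$. The only thing you add beyond the paper is spelling out these substitutions and the realisation check for $P_{0,\sigma}$, which is exactly the bookkeeping the paper leaves implicit.
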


The proof of this theorem follows immediately from Lemma \ref{Le4-1}.

\begin{Rem}
Although in the attack-free case 
\begin{equation*}
r_{0}(z)=r_{0,p}(z)=\hat{M}_{0}(z)y(z)-\hat{N}_{0}(z)u(z)=r_{0,n}(z),
\end{equation*}%
we would like to call the reader's attention to the differences between the
residual signals $r_{0,p}$ and $r_{0}.$ While $r_{0}$ is realised on the
monitoring and control side, $r_{0,p}$ is generated on the plant side. Here, $r_{0,n}$ denotes the influence of the noises on the residual vector.
Moreover, in case of attacks, 
\begin{align*}
r_{0,p}(z)& =\hat{M}_{0}(z)y(z)-\hat{N}_{0}(z)u^{a}(z)+r_{0,n}(z)=r_{0,n}(z),
\\
r_{0}(z)& =\hat{M}_{0}(z)y^{a}(z)-\hat{N}_{0}(z)u(z)=\hat{N}_{0}(z)a_{u}(z)+%
\hat{M}_{0}(z)a_{y}(z)+r_{0,n}(z).
\end{align*}%
\end{Rem}

Theorem \ref{Theo4-3} demonstrates that switching the gain matrices $\left(
F_{\sigma },L_{\sigma }\right) $ can be equivalently interpreted as (i)
switching post-filters $P_{u,\sigma }(z)$ and $P_{0,\sigma }(z)$ to $%
r_{en,0}(z)$ and $r_{0,p}(z),$ 
\begin{align*}
P_{u,\sigma }(z)& \in \left\{ P_{u,i}(z)\in \mathcal{RH}_{\infty
},P_{u,i}=I+\left( F_{0}-F_{i}\right) \left( zI-A_{F_{0}}\right) ^{-1}B,i\in 
\mathcal{I}\right\} , \\
P_{0,\sigma }(z)& \in \left\{ P_{0,i}(z)\in \mathcal{RH}_{\infty
},P_{0,i}=I+C\left( zI-A_{L_{i}}\right) ^{-1}\left( L_{0}-L_{i}\right) ,i\in 
\mathcal{I}\right\} ,
\end{align*}%
and (ii) adding additional residual signal $Q_{\sigma }(z)r_{0,p}(z)$ with a
switching post-filter $Q_{\sigma }(z),$%
\begin{equation}
Q_{\sigma }(z)\in \left\{ 
\begin{array}{c}
Q_{i}(z)\in \mathcal{RH}_{\infty },i\in \mathcal{I}, \\ 
Q_{i}=F_{0}\left( zI-A_{L_{i}}\right) ^{-1}\left( L_{0}-L_{i}\right)
P_{0,i}(z)-\left( F_{i}-F_{0}\right) \left( zI-A_{F_{0}}\right) ^{-1}L_{0}%
\end{array}%
\right\} .
\end{equation}%
Note that $Q_{\sigma }(z)r_{0,p}(z)$ is noise. In the remaining part of this
work, $F_{0}$ and $L_{0}$ are used to denote the gain matrices $F$ and $L$
adopted in the control law (\ref{eq2-12a})-(\ref{eq2-12b}), and
correspondingly the LCP $\left( \hat{M},\hat{N}\right) ,\left( X,Y\right) $
are denoted by $\left( \hat{M}_{0},\hat{N}_{0}\right) ,\left(
X_{0},Y_{0}\right) ,$ respectively.

\bigskip

\begin{Rem}
The encrypting effect of adding switched post filters and noises by
switching the gain matrices among different values is analogues to the
existing approaches, for instance, reported in \cite%
{MT-method-CDC2015,MT-method-IEEE-TAC2020,Zhang-CDC2017,DIBAJI2019-survey},
although in our proposed method both the design and (online) computations
are considerably less demanding. In addition, the signal $r_{en}$ is encoded
on the plant side before the transmission and the real residual signal $%
r_{u,0}$ is recovered by a decoding algorithm on the monitoring and control
side.
\end{Rem}

The switched encoder system plays a central role for detecting the kernel
attacks successfully. Their use is to prevent an attacker from identifying
the dynamics of encoding system (\ref{eq4-9}) so that the attack signal $%
a_{r_{en}}(k)$ is set to be%
\begin{equation*}
a_{r_{en}}(z)=Y_{\sigma }(z)a_{y}(z)-X_{\sigma }(z)a_{u}(z).
\end{equation*}%
On the other hand, the needed online computations for the implementation of 
\begin{equation}
r_{en,\sigma }(z)=X_{\sigma }(z)u^{a}(z)+Y_{\sigma }(z)y(z)  \label{eq4-30}
\end{equation}%
that is to be performed on the plant side should be considered and kept as
less as possible.

\bigskip

Let $\sigma \left( k_{s}\right) $ denote the switching law with $k_{s}$ as
switching time instant, $F_{\sigma \left( k_{s}\right) }$ and $L_{\sigma
\left( k_{s}\right) }$ be the operating mode of the gain matrices between
two successive switching time instants $k_{s}=k_{0},k_{1}$. On the
assumption that

\begin{itemize}
\item the attacker could access $y(k),u^{a}(k)$, even

\item have knowledge of $F_{i}$ and $L_{i}$ and so that $\left(
X_{i},Y_{i}\right) ,i=1,\cdots ,\kappa ,$ are known,

\item the switching law $\sigma \left( k_{s}\right) $ is shared only by the
monitoring system and the plant system but kept hidden from the attacker,
\end{itemize}

the LCP $\left( X_{\sigma \left( k_{0}\right) },Y_{\sigma \left(
k_{0}\right) }\right) $ (i.e. the encoder (\ref{eq4-30}) running over the
time interval $[k_{0},k_{1})$) should not be detected or identified by the
attacker using the data collected over $[k_{0},k_{1}).$ This can be
formulated as an inverse problem of fault isolation or identification. It is
well-known that if the time interval $[k_{0},k_{1})$ is sufficiently short
with respect to the complexity (e.g. the order) of $\left( X_{\sigma \left(
k_{0}\right) },Y_{\sigma \left( k_{0}\right) }\right) $ and the mode number $%
\kappa ,$ with high confidential $\left( X_{\sigma \left( k_{0}\right)
},Y_{\sigma \left( k_{0}\right) }\right) $ cannot be detected or identified.
On the other hand, in order to guarantee the stability of the switched
system, the switching law $\sigma \left( k_{s}\right) $ is to be designed to
satisfy the so-called average dwell time (ADT) condition \cite%
{HM-CDC1999,ZZSL-IEEE-TAC2012}. Recall that $r_{en,\sigma }$ is only used
for the detection purpose and $\left( X_{\sigma \left( k_{0}\right)
},Y_{\sigma \left( k_{0}\right) }\right) $ has, different from the existing
approaches, no influence on the system control performance. As a result, $%
\left( X_{\sigma \left( k_{0}\right) },Y_{\sigma \left( k_{0}\right)
}\right) $ together with the switching law $\sigma \left( k_{s}\right) $ can
be designed so that (i) $\left( X_{\sigma \left( k_{0}\right) },Y_{\sigma
\left( k_{0}\right) }\right) $ is not identifiable over the time interval,
(ii) the ADT condition is satisfied. Since the major focus of this work is
on detecting kernel attacks, we will not discuss about the design of the
switching issues for $F_{\sigma }$ and $L_{\sigma }$ in more details. The
reader can refer to, for instance, the approach of cryptographically secure
pseudo random number generator (PRNG) described in \cite%
{MT-method-IEEE-TAC2020} or the approach proposed by \cite{Zhang-CDC2017}.

\subsection{Realisation of the detection scheme\label{sub-sec4-4}}

In this sub-section, we describe the realisation of the detection scheme
proposed in the previous sub-section. To this end, two issues are to be
addressed: (i) real-time implementation of the residual generators, and (ii)
design of test statistic and threshold setting. Concerning the first issue,
the major tasks consist of

\begin{itemize}
\item computation on the plant side:%
\begin{equation}
r_{en,\sigma }(z)=X_{\sigma }(z)u^{a}(z)+Y_{\sigma }(z)y(z),  \label{eq4-21}
\end{equation}
\item signal transmissions from the plant side to the monitoring side:%
\begin{equation*}
r_{en,\sigma }^{a}(k)=r_{en,\sigma }(k)+a_{r_{en}}(k),y^{a}=y(k)+a_{y}(k),
\end{equation*}
\item computation on the monitoring and control side:%
\begin{align}
r_{u,0}(z)& =r_{en,\sigma }^{a}(z)-P_{u,\sigma }(z)\bar{v}_{0}(z),\bar{v}%
_{0}(z)=\left( X_{0}(z)-Q(z)\hat{N}_{0}(z)\right) v(z),  \label{eq4-22} \\
r_{0}(z)& =\hat{M}_{0}(z)y^{a}(z)-\hat{N}_{0}(z)u(z),  \label{eq4-23}
\end{align}
\end{itemize}

and under consideration of the plant model (\ref{eq3-1a})-(\ref{eq3-1b})
with the process and sensor noises satisfying (\ref{eq3-2a})-(\ref{eq3-2b}).
It follows from (\ref{eq2-4a})-(\ref{eq2-6b}) that the state space
realisations of (\ref{eq4-21})-(\ref{eq4-23}) are described respectively by%
\begin{align}
\varsigma (k+1)& =\left( A-L_{\sigma }C\right) \varsigma (k)+L_{\sigma
}y(k)+(B-L_{\sigma }D)u^{a}(k), \\
r_{en,\sigma }(k)& =u^{a}(k)-F_{\sigma }\varsigma (k)
\end{align}%
as well as%
\begin{align}
\hat{x}(k+1)& =\left( A-L_{0}C\right) \hat{x}%
(k)+(B-L_{0}D)u(k)+L_{0}y^{a}(k),  \label{eq3-19a} \\
x_{v}(k+1)& =\left( A-L_{0}C\right) x_{v}(k)+(B-L_{0}D)v(k),  \label{eq3-19b}
\\
r_{0}(k)& =y^{a}(k)-\left( C\hat{x}(k)+Du(k)\right) , \\
\bar{v}_{0}(z)& =v(z)-Fx_{v}(z)-Q(z)\left( Cx_{v}(z)+Dv(z)\right) ,
\label{eq4-19c} \\
r_{u,0}(z)& =r_{en,\sigma }^{a}(z)-P_{u,\sigma }(z)\bar{v}_{0}(z).
\label{eq4-19d}
\end{align}%
Next, the influences of $\omega (k),\nu (k)$ on $r_{u,0}(k)$ and $r_{0}(k)$
during attack-free operations are analysed aiming at setting an optimal
threshold. It turns out%
\begin{gather}
e(k+1)=\left( A-L_{0}C\right) e(k)+\omega (k)-L_{0}\nu (k),e(k)=x(k)-\hat{x}%
(k),  \label{eq4-2a} \\
r_{0}(k)=Ce(k)+\nu (k), \\
r_{u,0}(z)=r_{en,\sigma }^{a}(z)-P_{u,\sigma }(z)\bar{v}_{0}(z)=P_{u,\sigma
}(z)\left( r_{en,0}(z)-\bar{v}_{0}(z)\right) +Q_{\sigma }(z)r_{0}(z), \\
r_{en,0}(z)-\bar{v}_{0}(z)=-Q(z)r_{0}(z)\Longrightarrow r_{u,0}(z)=\left(
Q_{\sigma }(z)-P_{u,\sigma }(z)Q(z)\right) r_{0}(z),  \label{eq4-2b}
\end{gather}%
which implies that the residual vector%
\begin{equation*}
\left[ 
\begin{array}{c}
r_{u,0}(z) \\ 
r_{0}(z)%
\end{array}%
\right] =\left[ 
\begin{array}{c}
\bar{Q}_{\sigma }(z) \\ 
I%
\end{array}%
\right] r_{0}(z),\bar{Q}_{\sigma }=Q_{\sigma }-P_{u,\sigma }Q
\end{equation*}%
is a normally distributed color noise vector. In order to achieve an optimal
attack detection, a post-filter $P(z)$ is added as follows%
\begin{gather}
r(z)=\left[ 
\begin{array}{c}
r_{u}(z) \\ 
r_{0,K}(z)%
\end{array}%
\right] :=P(z)\left[ 
\begin{array}{c}
r_{u,0}(z) \\ 
r_{0}(z)%
\end{array}%
\right] ,  \label{eq4-24} \\
P(z)=\left[ 
\begin{array}{cc}
I & -\bar{Q}_{\sigma }(z) \\ 
0 & Q_{K0}(z)%
\end{array}%
\right] ,Q_{K0}(z)=I+C\left( zI-A_{L_{K}}\right) ^{-1}\left(
L_{0}-L_{K}\right)  \label{eq4-24a} \\
\Longrightarrow r_{u}(z):=r_{u,0}(z)-\bar{Q}_{\sigma }(z)r_{0}(z)=0,  \notag
\\
e(k+1)=A_{L_{K}}e(k)+\omega (k)-L_{K}\nu (k),A_{L_{K}}=A-L_{K}C,  \notag \\
r_{0,K}(k)=r_{K}(k)=Ce(k)+\nu (k),
\end{gather}%
where $L_{K}$ is the Kalman filter gain matrix satisfying (\ref{eq3-3a}).
Correspondingly, $r_{K}(k)\sim \mathcal{N}\left( 0,\Sigma _{r}\right) $ and
is white with $\Sigma _{r}$ given in (\ref{eq3-3b}). It is remarkable that
the residual vector $r_{u}$ is fully decoupled from the noises $\omega
(k),\nu (k).$ In order to define a practical and easily computing (scale)
test statistic, $r_{u}(z)$ is treated as a (quasi-) random vector with a
covariance matrix whose inverse is approximated by $\lambda I$, where $%
\lambda >0$ is a sufficiently large number. As a result, we set the test
statistic equal to 
\begin{equation}
J(k)=\lambda r_{u}^{T}(k)r_{u}(k)+r_{0,K}^{T}(k)\Sigma
_{r}^{-1}r_{0,K}(k)\sim \mathcal{\chi }^{2}\left( m\right) ,  \label{eq4-4}
\end{equation}%
which is subject to $\mathcal{\chi }^{2}$ distribution with $m$ degrees of
freedom in the attack-free operation, and the threshold 
\begin{equation}
J_{th}=\mathcal{\chi }_{\alpha }^{2}\left( m\right)  \label{eq4-5}
\end{equation}%
for a given upper-bound of false alarm rate $\alpha $.

\begin{Rem}
It is noteworthy that detecting kernel attacks is in the foreground of our
study. In order to highlight the basic ideas and major results in this
regard clearly, only process and measurement noises are taken into account.
The above simplified handling of $r_{u}$ follows from the geometric
interpretation of the $\mathcal{\chi }^{2}$ text statistic \cite{Ding2020}.
If unknown inputs and model parameter variations are to be considered,
advanced fault detection methods could be applied \cite{Ding2020}.
\end{Rem}

When the control loop is attacked, the dynamics of the observer-based attack
detector (\ref{eq4-22})-(\ref{eq4-23}) is governed by%
\begin{gather}
r_{u,0}(z)=r_{en,\sigma }^{a}(z)-P_{u,\sigma }(z)\bar{v}_{0}(z)
\label{eq4-3a} \\
=X_{\sigma }(z)u^{a}(z)+Y_{\sigma }(z)y(z)+a_{r_{en}}(z)-P_{u,\sigma }(z)%
\bar{v}_{0}(z)  \notag \\
=P_{u,\sigma }(z)\left( X_{0}(z)u^{a}(z)+Y_{0}(z)y(z)\right) +Q_{\sigma
}(z)r_{0,p}(z)+a_{r_{en}}(z)-P_{u,\sigma }(z)\bar{v}_{0}(z)  \notag \\
=a_{1}(z)+\bar{Q}_{\sigma }(z)r_{0,n}(z),  \label{eq4-25} \\
a_{1}(z)=P_{u,\sigma }(z)\left( X_{0}(z)a_{u}(z)-Y_{0}(z)a_{y}(z)\right)
+a_{r_{en}}(z),  \notag \\
r_{0}(z)=\hat{M}_{0}(z)y^{a}(z)-\hat{N}_{0}(z)u(z)=a_{2}(z)+r_{0,n}(z),
\label{eq4-26} \\
a_{2}(z)=\hat{M}_{0}(z)a_{y}(z)+\hat{N}_{0}(z)a_{u}(z),  \notag
\end{gather}%
where $r_{0,n}(z)$ describes the influence of the noises on the residuals $%
r_{0,p}(z)$ and $r_{0}(z)$ and is given by%
\begin{equation}
e(k+1)=\left( A-L_{0}C\right) e(k)+\omega (k)-L_{0}\nu
(k),r_{0,n}(k)=Ce(k)+\nu (k).
\end{equation}%
Hence, 
\begin{gather}
r(z)=P(z)\left[ 
\begin{array}{c}
r_{u,0}(z) \\ 
r_{0}(z)%
\end{array}%
\right] =\left[ 
\begin{array}{c}
r_{u}(z) \\ 
r_{0,K}(z)%
\end{array}%
\right] =\left[ 
\begin{array}{c}
a_{1}(z)-\bar{Q}_{\sigma }(z)a_{2}(z) \\ 
Q_{K0}(z)\left( a_{2}(z)+r_{0,n}(z)\right)%
\end{array}%
\right] \Longrightarrow  \notag \\
J(k)=\lambda r_{u}^{T}(k)r_{u}(k)+r_{0,K}^{T}(k)\Sigma _{r}^{-1}r_{0,K}(k) 
\notag \\
=\lambda \bar{a}_{1}^{T}(k)\bar{a}_{1}(k)+\left( \bar{a}_{2}(k)+r_{K}(k)%
\right) ^{T}\Sigma _{r}^{-1}\left( \bar{a}_{2}(k)+r_{K}(k)\right) \sim 
\mathcal{\chi }^{2}\left( \delta ,m\right) ,  \label{eq4-27} \\
\bar{a}_{1}(z)=a_{1}(z)-\bar{Q}_{\sigma }(z)a_{2}(z),\bar{a}%
_{2}(z)=Q_{K0}(z)a_{2}(z),r_{K}(z)=Q_{K0}(z)r_{0,n}(z),  \notag
\end{gather}%
where $\mathcal{\chi }^{2}\left( \delta ,m\right) $ denotes a noncentral $%
\mathcal{\chi }^{2}$ distribution with%
\begin{equation*}
\delta =\lambda \bar{a}_{1}^{T}(k)\bar{a}_{1}(k)+\bar{a}_{2}^{T}(k)\Sigma
_{r}^{-1}\bar{a}_{2}(k)
\end{equation*}%
as the noncentrality parameter and $m$ the degree of freedom. As well-known 
\cite{Ding2014}, the test statistic (\ref{eq4-4}) and the threshold (\ref%
{eq4-5}) lead to the maximal fault detectability and guarantee the FAR
bounded by $\alpha .$ Moreover, from (\ref{eq4-25}), (\ref{eq4-26}) and (\ref%
{eq4-27}) it can be evidently seen that all attacks, $a_{r_{en}},$ $a_{u},$ $%
a_{y},$ can be well detected as far as the dynamics of the encoded signal $%
r_{en,\sigma }$ or equivalently $\left( X_{\sigma },Y_{\sigma }\right) $ is
not identified.

\bigskip

As summary of the proposed detection scheme, the configuration of the
detection system including data transmissions is sketched in Figure 4.

\begin{figure}[h]
\centering\includegraphics[width=11cm,height=8cm]{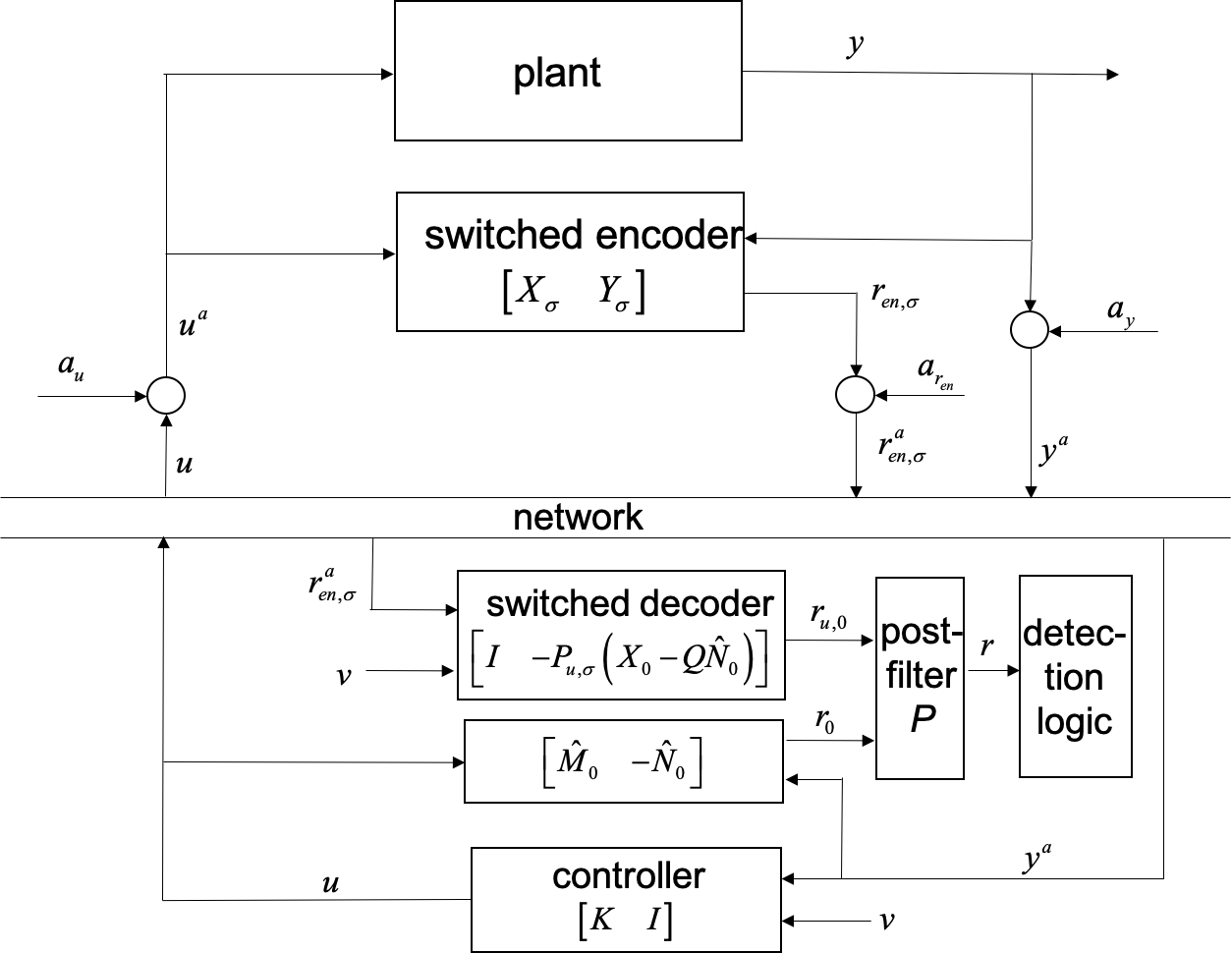}
\caption{Schematic description of the proposed attack detection system}
\end{figure}

At the end of this section, we would like to underline the following points:

\begin{itemize}
\item the detection scheme proposed in this section and based on the
residual signals $r_{0},r_{0,u}$ can be analogously realised as well using
the alternative residual signals $r_{u,c},r_{y,c}$ defined in (\ref{eq3-23});

\item the test statistic (\ref{eq4-4}) and the threshold (\ref{eq4-5})
deliver the optimal attack detection only on the assumptions of (i) the
statistic features of the noises being specified by (\ref{eq3-2a})-(\ref%
{eq3-2b}) , and (ii) the additive character of the kernel attacks being
under consideration \cite{Ding2020}, and

\item in case that the noises cannot be described by (\ref{eq3-2a})-(\ref%
{eq3-2b}) or/and the cyber-attacks are presented e.g. in multiplicative form
like false data injection attacks \cite{LZLWD2017,GWSOM2019}, sophisticated
detection schemes are needed. Some of these methods are reported in \cite%
{LD-Automatica-2020,Ding2020}.
\end{itemize}

\section{An encrypted configuration of feedback control and detection systems%
}

The basis for the execution of kernel attacks is that attackers have
knowledge of plant dynamics. Among numerous possibilities to gain such
information, eavesdropping attacks enable collecting sufficient amount of
process data which can then be used for identifying the plant dynamics. It
is state of the art that in real industrial applications plant input and
output data, $u(k)$ and $y(k),$ are often transmitted between the control
and monitoring station and the plant via networks. Such system
configurations make an identification of the plant dynamics considerably
easy. In this section, we propose an encrypted configuration scheme of
feedback control systems. The core of the alternatively configured control
systems consists in the transmission of encoded system signals, instead of $%
u(k)$ and $y(k),$ from which a direct identification of the plant dynamics
without \textit{a priori} knowledge becomes almost impossible. The basis for
this encrypted configuration is the so-called functionalisation of dynamic
controllers introduced in the unified framework of control and detection.

\subsection{Functionalisation of all stabilising feedback controllers}

Recall the observer-based realisation of all stabilising controllers given
in (\ref{eq2-13a})-(\ref{eq2-13c}). It can be divided into several
functional modules:

\begin{itemize}
\item an observer and an observer-based residual generator, 
\begin{gather*}
\hat{x}(k+1)=A\hat{x}(k)+Bu(k)+L_{0}r_{0}(k), \\
r_{0}(k)=y(k)-\hat{y}(k),\hat{y}(k)=C\hat{x}(k)+Du(k),
\end{gather*}%
which serve as an information provider for the controller and diagnostic
system, and deliver a state estimation, $\hat{x},$ and the primary residual, 
$r_{0}=y-\hat{y},$

\item control law%
\begin{equation*}
u(z)=F_{0}\hat{x}(z)-Q(z)r_{0}(z)+\hat{V}(z)v(z),
\end{equation*}%
including

\begin{itemize}
\item a feedback controller: $F_{0}\hat{x}(z)-Q(z)r_{0}(z)$ and

\item a feed-forward controller: $\hat{V}(z)v(z),\hat{V}=X_{0}-Q\hat{N}_{0},$
and in addition, for the detection purpose,
\end{itemize}

\item detector $R(z)r_{0}(z)$ with $R(z)$ as a stable post-filter.
\end{itemize}

This modular structure provides us with a clear parameterisation of the
functional modules:

\begin{itemize}
\item the state observer is parameterised by $L_{0},$

\item the feedback controller by $F_{0},Q,$

\item the feed-forward controller by $\hat{V},$ and

\item the detector by $R.$
\end{itemize}

Although all five parameters listed above are available for the design and
online optimisation objectives, they have evidently different
functionalities, as summarised below:

\begin{itemize}
\item $F_{0},L_{0}$ determine the stability and eigen-dynamics of the
closed-loop,

\item $R,\hat{V}$ have no influence on the system stability, and $R$ serves
for the optimisation of the detectability, while $\hat{V}$ for the tracking
behavior, and

\item $Q$ is used to enhance the system robustness and control performance.
The design and update of $Q$ will have influence on the system dynamics and
stability, when parameter uncertainties or degradations are present in the
system.
\end{itemize}

It is evident that the above five parameters have to be, due to their
different functionalities, treated with different priorities. Recall that
system stability and eigen-dynamics are the fundamental requirement on an
automatic control system. This requires that the system stability should be
guaranteed, also in case of cyber-attacks. Differently, $Q,R,\hat{V}$ are
used to optimise control or detection performance. In case that a temporary
system performance degradation is tolerable, the real-time demand and the
priority for an online optimisation of $Q,R,\hat{V}$ are relatively lower.
Under these considerations, we propose in the next sub-section an encrypted
control system configuration based on the above controller functionalisation.

\subsection{An encrypted system configuration scheme}

To begin with, we would like to emphasise that the objective of the system
configuration proposed in the sequel is to prevent system knowledge from
attackers in the manner that the plant model cannot be identified using the
data possibly collected by attackers by means of eavesdropping attacks.
Moreover, the basic requirements on the system control performance like the
stability are to be met.

\bigskip

The proposed encrypted system configuration mainly consists of

\begin{itemize}
\item on the plant side, an observer-based state feedback controller and
residual generator,%
\begin{align}
\hat{x}(k+1)& =A\hat{x}(k)+Bu(k)+L_{0}r_{0,p}(k),r_{0,p}(k)=y(k)-\hat{y}(k),
\notag \\
u(k)& =F_{0}\hat{x}(k)+\gamma (k)\Longrightarrow  \notag \\
\hat{x}(k+1)& =\left( A+BF_{0}\right) \hat{x}(k)+B\gamma (k)+L_{0}r_{0,p}(k)
\label{eq5-3a} \\
& =\left( A-L_{0}C\right) \hat{x}(k)+\left( B-L_{0}D\right) u(k)+L_{0}y(k),
\label{eq5-3b}
\end{align}%
where $\gamma $ is the signal (vector) received from the monitoring and
control side,

\item on the monitoring and control side,%
\begin{equation*}
\gamma (z)=\hat{V}(z)v(z)-Q(z)r_{0,p}(z),
\end{equation*}%
where $r_{0,p}$ is received from the plant side and $v$ is the reference
vector,

\item transmission from the plant side to the monitoring and control side, $%
r_{0,p}(k),$

\item transmission from the monitoring and control side to the plant side, $%
\gamma (k).$
\end{itemize}

Depending on applications, the following functional modules can be further
realised and integrated on the monitoring and control side, for instance,

\begin{itemize}
\item reconstructing $y(k),$%
\begin{gather}
\hat{x}(k+1)=\left( A+BF_{0}\right) \hat{x}(k)+B\gamma (k)+L_{0}r_{0,p}(k),
\label{eq5-1a} \\
y(k)=r_{0,p}(k)+\hat{y}(k)=\left( C+DF_{0}\right) \hat{x}(k)+D\gamma
(k)+r_{0,p}(k)  \label{eq5-1b}
\end{gather}%
with $r_{0,p}(k)$ received from the plant side,

\item tuning $Q$ using $r_{0,p}(k)$ and $v(k)$ to enhance the stability
margin, as reported in \cite{LLDYP-2019}, or

\item recovering control performance degradation using $y(k)$ and $u(k),$ as
described in \cite{Ding2020}.
\end{itemize}

It is evident that, according to the observer-based realisation of all
stabilising controllers, the control input $u(k)$ acted on the actuators
(located on the plant side) is given by%
\begin{equation*}
u(k)=F_{0}\hat{x}(k)+\gamma (k)\Longleftrightarrow u(z)=K(z)y(z)+v(z)
\end{equation*}%
with $K$ satisfying (\ref{eq2-12a})-(\ref{eq2-12b}). Different from the
standard system configuration, for instance the one shown in Figure 1, the
observer-based state feedback controller and residual generator (\ref{eq5-3a}%
)-(\ref{eq5-3b}) running on the plant side serve as

\begin{itemize}
\item an encoder for an encrypted transmission of the plant measurement $%
y(k),$ i.e. $r_{0,p}(k)$ instead of $y(k),$

\item a decoder for control input $u(k)=F_{0}\hat{x}(k)+\gamma (k),$ and

\item a local controller guaranteeing the basic control performance like the
stability even if the communication between the both sides of the control
system is considerably attacked.
\end{itemize}

Simultaneously, the recovering algorithm (\ref{eq5-1a})-(\ref{eq5-1b})
running on the monitoring and control side acts (i) as a decorder for $y$
and (ii) $\gamma =\hat{V}v-Qr_{0,p}$ as an encoder for an encrypted
transmission of the control signal from the monitoring and control side to
the plant.

\bigskip

Considering that, during the attack-free operation, $r_{0,p}$ is noise (and
even white noise when $L_{0}$ is set to be the Kalman filter gain matrix),
it is obviously impossible to identify the plant model $G_{u}$ by means of $%
r_{0,p}$ and $\gamma $ that could be eavesdropped during their transmission.
As a result, it can be claimed that the encrypted control system
configuration proposed in this sub-section fully fulfills the design
requirements.

\subsection{The associated attack detection scheme}

Figure 5 sketches schematically the proposed encrypted system configuration. On the assumptions that

\begin{itemize}
\item the control loop under consideration is configurated as sketched in
Figure 5,

\item the attacker has no knowledge about the plant model $G_{u},$ and

\item both $\gamma $ and $r_{0,p}$ are corrupted by the attack signals $%
a_{\gamma }$ and $a_{r_{0}}$ respectively, i.e. 
\begin{align*}
\gamma ^{a}(k)& =\gamma (k)+a_{\gamma }(k)\Longrightarrow u^{a}(k)=\gamma
(k)+a_{\gamma }(k)+F_{0}\hat{x}(k), \\
r_{0}^{a}(k)& =r_{0,p}(k)+a_{r_{0}}(k),
\end{align*}
\end{itemize}

we propose the following attack detection scheme performed on the monitoring
and control side. Similar to the controller, the attack detector is also
distributedly realised on the both sides of the control system. 

\begin{figure}[h]
\centering\includegraphics[width=14cm,height=10cm]{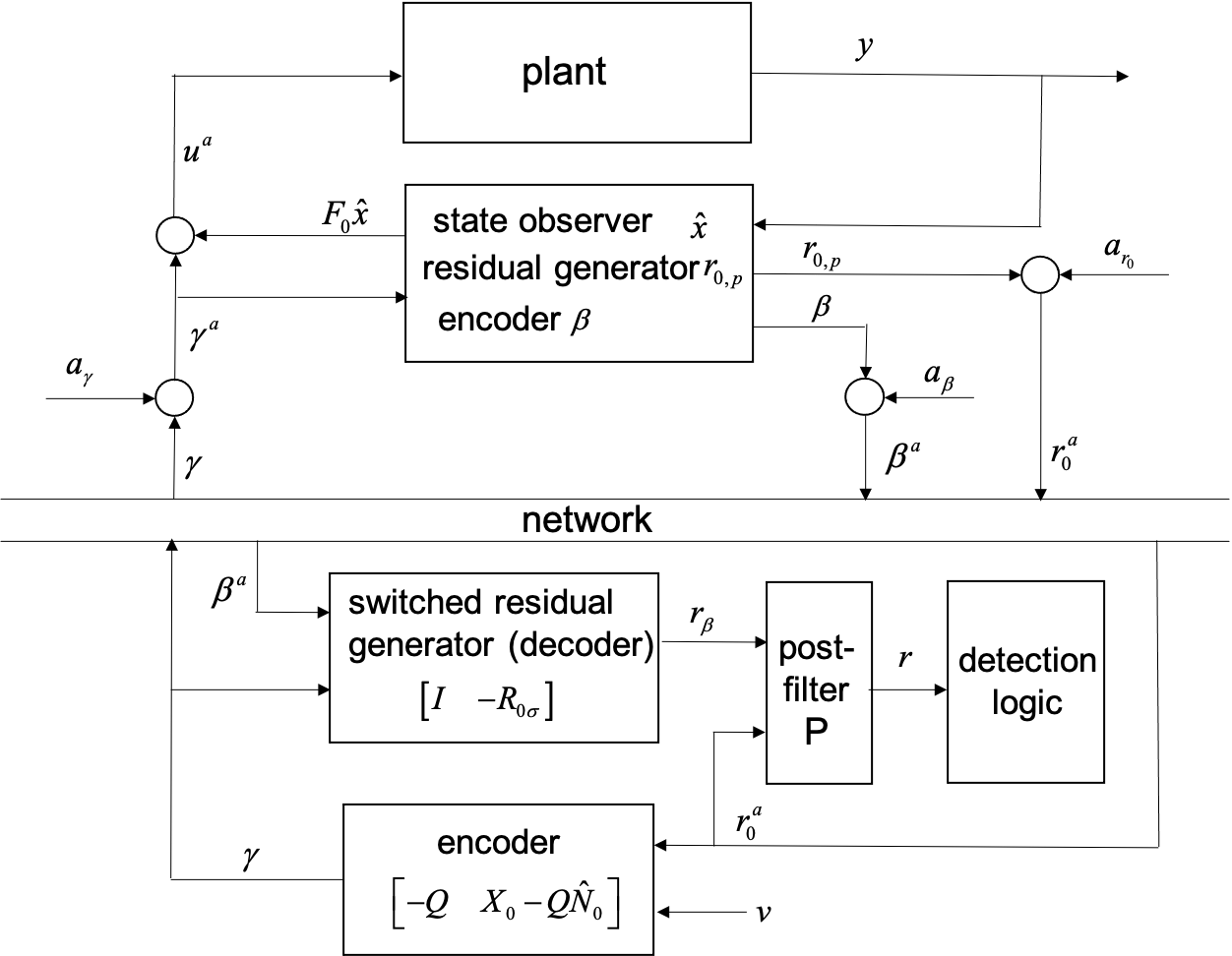}
\caption{Schematic description of the encrypted control and detection system
configuration}
\end{figure}

\bigskip

Remember that in the attack-free case%
\begin{gather}
X_{0}(z)u(z)+Y_{0}(z)y(z)-\gamma (z)  \notag \\
=X_{0}(z)u(z)+Y_{0}(z)y(z)-\left( \bar{v}(z)-Q(z)r_{0,p}(z)\right)  \notag \\
=u(z)-\left( F_{0}\hat{x}(z)+\gamma (z)\right) =0.  \label{eq5-5}
\end{gather}%
It motivates us to encrypt the detector as follows. At first, the encoded
signal $\beta (k)$ is generated on the plant side,%
\begin{equation}
\beta (k)=F_{0}\hat{x}(k)-F_{\sigma }\hat{x}(k),  \label{eq5-7e}
\end{equation}%
where $F_{\sigma }$ is a switched feedback gain introduced in the previous
section. It follows from Lemma \ref{Le4-1} and Theorem \ref{Theo4-3} that 
\begin{gather}
\beta (z)=\left( F_{0}-F_{\sigma }\right) \hat{x}(z)=u^{a}(z)-F_{\sigma }%
\hat{x}(z)-\left( u^{a}(z)-F_{0}\hat{x}(z)\right)  \notag \\
=R_{0\sigma }(z)\left( X_{0}(z)u^{a}(z)+Y_{0}(z)y(z)\right) +Q_{0\sigma
}(z)r_{0,p}(z),  \label{eq5-7} \\
R_{0\sigma }(z)=P_{u,\sigma }(z)-I=\left( F_{0}-F_{\sigma }\right) \left(
zI-A_{F_{0}}\right) ^{-1}B,  \label{eq5-7d} \\
Q_{0\sigma }(z)=\left( F_{0}-F_{\sigma }\right) \left( zI-A_{F_{0}}\right)
^{-1}L_{0}.  \notag
\end{gather}%
Here, $P_{u,\sigma }(z)$ is given in Theorem \ref{Theo4-3}. The encoded
signal $\beta $ is then sent to the monitoring and control side, at which a
residual signal is generated by decoding $\beta $ as follows%
\begin{equation}
r_{\beta }(z)=\beta ^{a}(z)-R_{0\sigma }(z)\gamma (z) ,
\label{eq5-8}
\end{equation}%
where 
\begin{equation*}
\beta ^{a}(k)=\beta (k)+a_{\beta }(k)
\end{equation*}%
denotes the corrupted signal $\beta $ due to the cyber-attack $a_{\beta }.$
It turns out, remembering (\ref{eq5-5}), 
\begin{align}
r_{\beta }(z)& =a_{\beta }(z)+R_{0\sigma }(z)\left(
X_{0}(z)u^{a}(z)+Y_{0}(z)y(z)-\gamma (z)\right) +Q_{0\sigma
}(z)r_{0,p}(z)  \notag \\
& =a_{\beta }(z)+R_{0\sigma }(z)X_{0}(z)a_{\gamma }(z)+Q_{0\sigma
}(z)r_{0,n}(z).
\end{align}%
with $r_{0,n}$ denoting the influence of the noises on the residual vector.
Therefore, it holds, on the monitoring and control side, 
\begin{equation}
\left[ 
\begin{array}{c}
r_{\beta }(z) \\ 
r_{0}^{a}(z)%
\end{array}%
\right] =\left[ 
\begin{array}{ccc}
I & R_{0\sigma }(z)X_{0}(z) & 0 \\ 
0 & 0 & I%
\end{array}%
\right] \left[ 
\begin{array}{c}
a_{\beta }(z) \\ 
a_{\gamma }(z) \\ 
a_{r_{0}}(z)%
\end{array}%
\right] +\left[ 
\begin{array}{c}
Q_{0\sigma }(z) \\ 
I%
\end{array}%
\right] r_{0,n}(z).  \label{eq5-8a}
\end{equation}%
As a result, we have

\begin{Theo}
\label{Theo4-4}Given the plant model (\ref{eq3-1a})-(\ref{eq3-1b}), the
control law $K$ satisfying (\ref{eq2-12a})-(\ref{eq2-12b}) and residuals $%
r_{\beta }$ and $r_{0}^{a}$ that are realised in the encrypted system
configuration shown in Figure 5, the attacks $a_{\beta },a_{\gamma }$ and $%
a_{r_{0}}$ are stealthy, if and only if the conditions,%
\begin{equation}
a_{r_{0}}(k)=0,a_{\beta }(z)+R_{0\sigma }(z)X_{0}(z)a_{\gamma }(z)=0,
\label{eq5-10}
\end{equation}%
are satisfied.
\end{Theo}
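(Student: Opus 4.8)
The plan is to read the equivalence directly off the residual relation (\ref{eq5-8a}), which has already been established just above the statement. First I would invoke Definition \ref{Def3-1}, adapted to the residual pair $\left( r_{\beta },r_{0}^{a}\right) $ generated on the monitoring and control side: in the noise-free setting ($\omega =0,\nu =0$, hence $r_{0,n}=0$) an attack is stealthy precisely when the attack-induced contribution to the stacked residual vector $\left[ r_{\beta }\;\;r_{0}^{a}\right] ^{T}$ vanishes identically for all admissible inputs. Setting $r_{0,n}(z)=0$ in (\ref{eq5-8a}) therefore reduces the stealthiness requirement to
\begin{equation*}
\left[
\begin{array}{ccc}
I & R_{0\sigma }(z)X_{0}(z) & 0 \\
0 & 0 & I
\end{array}
\right] \left[
\begin{array}{c}
a_{\beta }(z) \\
a_{\gamma }(z) \\
a_{r_{0}}(z)
\end{array}
\right] =0 .
\end{equation*}

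For the sufficiency direction I would substitute the conditions (\ref{eq5-10}) into this relation: the first block row becomes $a_{\beta }+R_{0\sigma }X_{0}a_{\gamma }=0$ and the second becomes $a_{r_{0}}=0$, so both entries of the attack part are zero and the residual vector collapses to the pure-noise term $\left[ Q_{0\sigma }\;\;I\right] ^{T}r_{0,n}$, which coincides with the attack-free residual behaviour; hence the attacks are stealthy. For the necessity direction I would argue from the block structure. Stealthiness forces both rows of the displayed attack map to vanish. The second block row reads $a_{r_{0}}(z)=0$ immediately, since the coefficient of $a_{r_{0}}$ is the identity; substituting this back, the first block row reads $a_{\beta }(z)+R_{0\sigma }(z)X_{0}(z)a_{\gamma }(z)=0$. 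These are exactly the conditions (\ref{eq5-10}), completing the equivalence.

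The only point that requires care -- and the natural place to stumble -- is the passage from ``stealthy'' in the sense of Definition \ref{Def3-1} to ``the attack contribution to the residuals is zero''. Because the coefficient of $a_{\beta }$ in $r_{\beta }$ and the coefficient of $a_{r_{0}}$ in $r_{0}^{a}$ are both the identity, the two residual channels decouple cleanly, and no cross-cancellation between the noise term and the attack term is possible; this is what guarantees that the attack part must vanish on its own and that the two stated conditions are \emph{simultaneously} necessary and sufficient rather than merely jointly implied. Everything else is a direct reading of (\ref{eq5-8a}), whose derivation via Lemma \ref{Le4-1} and Theorem \ref{Theo4-3}, together with the encoding/decoding relations (\ref{eq5-7})--(\ref{eq5-8}) and the attack-free identity (\ref{eq5-5}), is already in place.
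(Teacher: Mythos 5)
Your proposal is correct and follows exactly the paper's route: the paper derives relation (\ref{eq5-8a}) and then states the theorem with ``As a result, we have \ldots'', so the intended proof is precisely the block-row reading of (\ref{eq5-8a}) that you carry out. Your added observation that the identity coefficients of $a_{\beta }$ and $a_{r_{0}}$ decouple the two residual channels is a sound (if implicit in the paper) justification for why both conditions in (\ref{eq5-10}) are simultaneously necessary.
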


Theorem \ref{Theo4-4} reveals that

\begin{itemize}
\item an (additive) attack on the residual signal $r_{0,p}$ can be
(structurally) directly detected, and

\item keeping $a_{\beta },a_{\gamma }$ stealthy is almost impossible, since
condition (\ref{eq5-8}) can hardly be satisfied, (i) without system
knowledge, (ii) without knowing the purpose of using and transmissing $\beta 
$ and $\gamma ,$ and (iii) in particular when $R_{0\sigma }(z)$ is a
switched system.
\end{itemize}

\subsection{Implementation of the control and detection systems}

Now, we summarise the implementation issues of the proposed control and
detection systems.

\bigskip

On the plant side, the state observer (\ref{eq5-3a}) (equivalently (\ref%
{eq5-3b})) builds the core of the system implementation. Based on the state
estimate $\hat{x}(k),$ the control input $u^{a}(k),$ the residual signal $%
r_{0,p}(k)$ as well as the encoded signal $\beta (k)$ are formed, 
\begin{align}
u^{a}(k)& =F_{0}\hat{x}(k)+\gamma ^{a}(k),  \label{eq5-7a} \\
r_{0,p}(k)& =y(k)-C\hat{x}(k)-Du^{a}(k),  \label{eq5-7b} \\
\beta (k)& =\left( F_{0}-F_{\sigma }\right) \hat{x}(k).  \label{eq5-7c}
\end{align}%
For running the realisation algorithms, the system on the plant side
receives the signal $\gamma ^{a}$ from the monitoring and control side. It
sends the residual signal $r_{0,p}$ and encoded signal $\beta $ to the
system running on the monitoring and control side. It is of considerable
interest to remark that the state observer (\ref{eq5-3a}) serves both as a
decoder for the control signal, as given in (\ref{eq5-7a}), and as an
encoder for the controller and for the generation of residual signal $%
r_{\beta }$ (that are implemented on the monitoring and control side), as
described by (\ref{eq5-7b}) and (\ref{eq5-7c}).

\bigskip

On the monitoring and control side, $\gamma (k)$ is first computed as
follows 
\begin{gather}
x_{v}(k+1)=\left( A-L_{0}C\right) x_{v}(k)+(B-L_{0}D)v(k),  \label{eq5-9a} \\
\gamma (z)=v(z)-F_{0}x_{v}(z)-Q(z)\left( Cx_{v}(z)+Dv(z)-r_{0}^{a}(z)\right)
.  \label{eq5-9b}
\end{gather}%
Then, $r_{\beta }$ is generated as%
\begin{align}
x_{\beta }(k+1)& =\left( A+BF_{0}\right) x_{\beta }(k)+B\gamma
(k) ,  \label{eq5-10a} \\
r_{\beta }(k)& =\beta ^{a}(k)-\left( F_{\sigma }-F_{0}\right) x_{\beta }(k).
\label{eq5-10b}
\end{align}%
It is worth emphasising that computation (\ref{eq5-9a})-(\ref{eq5-9b})
serves as an encoder for the control signal, while the system (\ref{eq5-10a}%
)-(\ref{eq5-10b}) acts as a decoder.

\bigskip

Next, for detecting attacks $a_{\beta },a_{\gamma }$ and $a_{r_{0}}$
optimally, the residual vector 
\begin{equation}
r(z)=\left[ 
\begin{array}{cc}
I & -Q_{0\sigma }(z) \\ 
0 & Q_{K0}(z)%
\end{array}%
\right] \left[ 
\begin{array}{c}
r_{\beta }(z) \\ 
r_{0}^{a}(z)%
\end{array}%
\right] =:\left[ 
\begin{array}{c}
r_{u}(z) \\ 
r_{0,K}(z)%
\end{array}%
\right]  \label{eq5-6}
\end{equation}%
and the test statistic 
\begin{equation*}
J(k)=\lambda r_{u}^{T}(k)r_{u}(k)+r_{0,K}^{T}(k)\Sigma _{r}^{-1}r_{0,K}(k)
\end{equation*}%
are built with $Q_{K0}$ as given in (\ref{eq4-24a}), which is analogue to
the result described in Sub-section \ref{sub-sec4-4}. We have 
\begin{equation}
J(k)=\lambda r_{u}^{T}(k)r_{u}(k)+r_{0,K}^{T}(k)\Sigma
_{r}^{-1}r_{0,K}(k)\sim \mathcal{\chi }^{2}\left( m\right) ,  \label{eq4-4a}
\end{equation}%
and thus the threshold is set to be%
\begin{equation}
J_{th}=\mathcal{\chi }_{\alpha }^{2}\left( m\right)  \label{eq4-5a}
\end{equation}%
for a given upper-bound of false alarm rate $\alpha $. In case of attacks, 
\begin{gather}
J(k)=\lambda r_{u}^{T}(k)r_{u}(k)+r_{0,K}^{T}(k)\Sigma _{r}^{-1}r_{0,K}(k) 
\notag \\
=\lambda a_{1}^{T}(k)a_{1}(k)+\left( a_{2}(k)+r_{K}(k)\right) ^{T}\Sigma
_{r}^{-1}\left( a_{2}(k)+r_{K}(k)\right) \sim \mathcal{\chi }^{2}\left(
\delta ,m\right) \\
a_{1}(z)=a_{\beta }(z)+R_{0\sigma }(z)X_{0}(z)a_{\gamma }(z)-Q_{0\sigma
}(z)a_{r_{0}}(z),  \notag \\
a_{2}(z)=Q_{K0}(z)a_{r_{0}}(z),r_{K}(k)\sim \mathcal{N}\left( 0,\Sigma
_{r}\right) ,
\end{gather}%
where $\mathcal{\chi }^{2}\left( \delta ,m\right) $ denotes a noncentral $%
\mathcal{\chi }^{2}$ distribution with%
\begin{equation*}
\delta =\lambda a_{1}^{T}(k)a_{1}(k)+a_{2}^{T}(k)\Sigma _{r}^{-1}a_{2}(k)
\end{equation*}%
as the noncentrality parameter and $m$ the degree of freedom.

\section{Examples and experimental study}

\subsection{Examples of detecting typical kernel attacks}

As examples, we will demonstrate that the zero dynamics, covert and replay
attacks as kernel attacks\ can be well detected using the detection schemes
proposed in Sections 4 and 5.

\begin{Exp}
Consider a zero dynamics attack satisfying (\ref{eq2-11b}). For our purpose
of detecting $a_{u},$ applying both detection schemes presented in
Sub-sections \ref{Sec4-3}-\ref{sub-sec4-4} and Section 5 results in

\begin{itemize}
\item by detector (\ref{eq4-24}) whose dynamics with respect to the
(possible) attack signals is described by (\ref{eq4-27}):%
\begin{equation}
r(z)=\left[ 
\begin{array}{c}
r_{u}(z) \\ 
r_{0,K}(z)%
\end{array}%
\right] =\left[ 
\begin{array}{c}
P_{u,\sigma }(z)X_{0}(z)a_{u}(z)+a_{r_{en}}(z) \\ 
r_{K}(z)%
\end{array}%
\right] ,  \label{eq6-1}
\end{equation}%
where $a_{r_{en}}$ denotes the (possible) attack signal on the transmitted
signal $r_{en}$ that builds an (encoded) part of $r_{u},$

\item by detector (\ref{eq5-6}) whose dynamics with respect to the
(possible) attack signals is described by (\ref{eq5-8a}): 
\begin{equation}
r(z)=\left[ 
\begin{array}{c}
r_{u}(z) \\ 
r_{0,K}(z)%
\end{array}%
\right] =\left[ 
\begin{array}{c}
a_{\beta }(z)+R_{0\sigma }(z)X_{0}(z)a_{\gamma }(z) \\ 
r_{K}(z)%
\end{array}%
\right] ,a_{u}(z)=a_{\gamma }(z)  \label{eq6-2}
\end{equation}%
with the (additional) attack signal $a_{\beta }$ on the encoded signal $%
\beta .$
\end{itemize}

It is evident that in the former case, $a_{u}$ can be detected using $r$ as
far as the attacker could not identify $P_{u,\sigma }$ or equivalently $%
X_{\sigma }$ and thus set $a_{r_{en}}$ equal to $-P_{u,\sigma }X_{0}a_{u}.$
For the latter case, as long as the switched system $R_{0\sigma }X_{0}$
could not be identified, it is impossible for the attacker to construct $%
a_{\beta }$ equal to $-R_{0\sigma }X_{0}a_{u}.$ Consequently, both $a_{\beta
}$ and $a_{u}$ can be detected. We would like to emphasise that in this case
it is impossible to identify the plant dynamics $\left( \hat{M}_{0},\hat{N}%
_{0}\right) $ using eavesdropped data $\gamma (k),r_{0,p}(k).$
\end{Exp}

\begin{Exp}
Now, consider the both detection systems under a covert attack satisfying (%
\ref{eq3-13}). It holds,

\begin{itemize}
\item by detector (\ref{eq4-24}):%
\begin{equation}
r(z)=\left[ 
\begin{array}{c}
r_{u}(z) \\ 
r_{0,K}(z)%
\end{array}%
\right] =\left[ 
\begin{array}{c}
P_{u,\sigma }(z)\left( X_{0}(z)a_{u}(z)-Y_{0}(z)a_{y}(z)\right)
+a_{r_{en}}(z) \\ 
r_{K}(z)%
\end{array}%
\right] ,  \label{eq6-3}
\end{equation}%
with the (possible) additional attack $a_{r_{en}}$ on the transmitted signal 
$r_{en},$

\item by detector (\ref{eq5-6}): 
\begin{align}
r(z)& =\left[ 
\begin{array}{c}
r_{u}(z) \\ 
r_{0,K}(z)%
\end{array}%
\right] =\left[ 
\begin{array}{c}
a_{\beta }(z)+R_{0\sigma }(z)X_{0}(z)a_{\gamma }(z)-Q_{0\sigma
}(z)a_{r_{0}}(z) \\ 
Q_{K0}(z)a_{r_{0}}(z)%
\end{array}%
\right] ,  \label{eq6-4} \\
a_{u}(z)& =a_{\gamma }(z),a_{y}(z)=a_{r_{0}}(z)  \notag
\end{align}%
where it is assumed that the attack signal $a_{y}$ is added to the
transmitted signal $r_{0,p},$ since $r_{0,p}$ instead of $y$ is transmitted
from the plant side to the monitoring and control side.
\end{itemize}

It is clear that in the first case, $a_{u}$ and $a_{y}$ can be detected as
far as the attacker could not identify $\left( X_{\sigma },Y_{\sigma
}\right) $. It is of considerable interest to notice the results in the
second case. Using the detector (\ref{eq5-6}), we can identify the attack $%
a_{y}$ ($a_{r_{0}}$)$,$%
\begin{equation*}
a_{r_{0}}(z)=a_{y}(z)=Q_{K0}^{-1}(z)r_{0,K}(z),Q_{K0}^{-1}(z)=I+C\left(
zI-A+L_{0}C\right) ^{-1}\left( L_{K}-L_{0}\right) ,
\end{equation*}%
and moreover estimate $a_{u}$ based on%
\begin{equation*}
\hat{M}_{0}(z)a_{y}(z)+\hat{N}_{0}(z)a_{u}(z)=0\Longleftrightarrow \hat{N}%
_{0}(z)a_{u}(z)=-\hat{M}_{0}(z)Q_{K0}^{-1}(z)r_{0,K}(z),
\end{equation*}%
when $\left( a_{y},a_{u}\right) $ is a covert attack. In this case, $%
a_{\beta }$ can also be estimated in terms of 
\begin{equation*}
a_{\beta }(z)=-R_{0\sigma }(z)a_{u}(z)+Q_{0\sigma }(z)a_{y}(z).
\end{equation*}%
Finally, as far as $R_{0\sigma }(z),Q_{0\sigma }(z)$ are not identified, any
attacks of $a_{\beta },a_{y},a_{u}$ can be detected. This example clearly
demonstrates the advantage of the detector (\ref{eq5-6}) over the detector (%
\ref{eq4-24}) and other reported attack detectors.
\end{Exp}

\begin{Exp}
We now address the detection issue of replay attacks under the assumption of
steady operation, i.e. 
\begin{equation}
y(k)\approx y(k-i),u(k)=u(k-i),i=1,\cdots .  \label{eq6-5}
\end{equation}%
Since in our detection schemes proposed in the last two sections additional
signals, $r_{en}$ and $\beta ,$ are transmitted from the plant side to the
monitoring and control side, it is assumed that the attacker has collected
and saved the (attack-free) data $r_{en}(j),\beta (j),j\in \left[
k_{0},k_{0}+M\right] .$ When the data are replayed over the time interval $%
\left[ k,k+M\right] ,k>k_{0}+M,$ it holds 
\begin{equation}
r_{en}^{a}(i)=r_{en}(i-(k-k_{0})),\beta ^{a}(i)=\beta (i-(k-k_{0})),i\in 
\left[ k,k+M\right] .  \label{eq6-6}
\end{equation}%
Moreover, an attack signal on the actuators is injected, for instance,%
\begin{equation*}
a_{u}(i)=a_{\gamma }(i),i\in \left[ k,k+M\right] .
\end{equation*}%
It turns out

\begin{itemize}
\item by detector (\ref{eq4-24}):%
\begin{gather}
r(z)=\left[ 
\begin{array}{c}
r_{u}(z) \\ 
r_{0,K}(z)%
\end{array}%
\right] \approx \left[ 
\begin{array}{c}
\Delta r_{en}(z)-\Delta r_{K}(z) \\ 
r_{K}(z)%
\end{array}%
\right] ,  \label{eq6-7} \\
\Delta r_{en}(i)=r_{en,\sigma \left( i-(k-k_{0})\right)
}(i-(k-k_{0}))-r_{en,\sigma \left( i\right) }(i),  \notag \\
r_{en,\sigma \left( i-(k-k_{0})\right) }(z)=X_{\sigma \left(
i-(k-k_{0})\right) }(z)u(z^{-(k-k_{0})})+Y_{\sigma \left( i-(k-k_{0})\right)
}(z)y(z^{-(k-k_{0})}),  \notag \\
r_{en,\sigma \left( i\right) }(z)=X_{\sigma \left( i\right)
}(z)u(z)+Y_{\sigma \left( i\right) }(z)y(z),  \notag \\
\Delta r_{K}(i)=r_{K,\sigma \left( i-(k-k_{0})\right)
}(i-(k-k_{0}))-r_{K,\sigma \left( i\right) }(i),  \label{eq6-7a} \\
r_{K,\sigma \left( i-(k-k_{0})\right) }(z)=\bar{Q}_{\sigma \left(
i-(k-k_{0})\right) }(z)r_{K}(z^{-(k-k_{0})}),r_{K,\sigma \left( i\right)
}(z)=\bar{Q}_{\sigma \left( i\right) }(z)r_{K}(z),  \label{eq6-7b}
\end{gather}%
due to assumptions (\ref{eq6-5}) and (\ref{eq6-6}),

\item by detector (\ref{eq5-6}): 
\begin{gather}
r(z)=\left[ 
\begin{array}{c}
r_{u}(z) \\ 
r_{0,K}(z)%
\end{array}%
\right] \approx \left[ 
\begin{array}{c}
\Delta _{\beta }(z)-\Delta r_{0,K}(z) \\ 
r_{K}(z)%
\end{array}%
\right] ,  \label{eq6-8} \\
\Delta _{\beta }(i)=\beta (i-(k-k_{0}))-\beta (i),  \notag \\
\Delta _{\beta }(z)=R_{0\sigma (i-(k-k_{0}))}(z)\left( X_{0}(z)u(z^{-\left(
k-k_{0}\right) })+Y_{0}(z)y(z^{-\left( k-k_{0}\right) })\right)  \notag \\
-R_{0\sigma (i)}(z)\left( X_{0}(z)u(z)+Y_{0}(z)y(z)\right) ,  \label{eq6-8a}
\\
\Delta r_{0,K}(z)=Q_{0\sigma (i-(k-k_{0}))}(z)r_{K}(z^{-\left(
k-k_{0}\right) })-Q_{0\sigma (i)}(z)r_{K}(z).  \label{eq6-8b}
\end{gather}
\end{itemize}

Now, we study dynamics (\ref{eq6-7}) and (\ref{eq6-8}). In the first case,
since $\left( X_{\sigma \left( i-(k-k_{0})\right) },Y_{\sigma \left(
i-(k-k_{0})\right) }\right) $ and $\left( X_{\sigma \left( i\right)
},Y_{\sigma \left( i\right) }\right) $ as well as $\bar{Q}_{\sigma \left(
i-(k-k_{0})\right) }$ and $\bar{Q}_{\sigma \left( i\right) }$ are generally
different, which leads to 
\begin{eqnarray*}
\Delta r_{en} &\approx &\left( X_{\sigma \left( i-(k-k_{0})\right)
}-X_{\sigma \left( i\right) }\right) u+\left( Y_{\sigma \left(
i-(k-k_{0})\right) }-Y_{\sigma \left( i\right) }\right) y\neq 0, \\
\Delta r_{K} &\approx &\left( \bar{Q}_{\sigma \left( i-(k-k_{0})\right) }-%
\bar{Q}_{\sigma \left( i\right) }\right) r_{K}\neq 0.
\end{eqnarray*}%
Consequently, both the mean and co-variance matrix of $r_{u}(z)$ will
change, which can be well detected using the generalised likelihood ratio
(GLR) method \cite{Ding2020}. The second case is similar to the first one so
that the replay attack can be detected in general, thanks to the fact that 
\begin{align*}
R_{0\sigma (i-(k-k_{0}))}\left[ 
\begin{array}{cc}
X_{\sigma \left( i-(k-k_{0})\right) } & \text{ }Y_{\sigma \left(
i-(k-k_{0})\right) }%
\end{array}%
\right] & \neq R_{0\sigma (i)}\left[ 
\begin{array}{cc}
X_{\sigma \left( i\right) } & \text{ }Y_{\sigma \left( i\right) }%
\end{array}%
\right] , \\
Q_{0\sigma (i-(k-k_{0}))}& \neq Q_{0\sigma (i)}.
\end{align*}
\end{Exp}

In comparison with the existing detection methods, it is clear that

\begin{itemize}
\item the two detection schemes proposed in this work guarantee structural
detection of any kernel attacks, while the most existing methods can be
generally applied to detecting a special type of kernel attacks;

\item in particular, both methods deliver reliable detection of replay
attacks without adding (additional) signals like a watermark in $u$ \cite%
{Mo2015-Watermarked-detection}. In fact, $\Delta r_{en}-\Delta r_{K}$ and $%
\Delta _{\beta }-\Delta r_{0,K}$ delivered by the detectors (\ref{eq4-24})
and (\ref{eq5-6}), respectively, act like a watermark but without any
influence on the control performance;

\item the design of both detectors are straightforward without complicated
computations, and

\item the required online computations are less demanding.
\end{itemize}

\subsection{Experimental study}
Experimental study on detecting cyber-attacks on a real three-tank control system is running and the achieved results will be reported.

\section{Conclusions}

In this work, we have studied issues of detecting stealthy integrity
cyber-attacks in the unified control and detection framework. The first
effort has been dedicated to the general form of integrity cyber-attacks
that cannot be detected using the well-established observer-based detection
technique. It has been demonstrated that any attacks lying in the system
kernel space cannot be detected by an observer-based detection system.
Correspondingly, the concept of kernel attacks has been introduced. The
replay, zero dynamics and covert attacks which are widely investigated in
the literature are the examples of kernel attacks. Our further effort has
been focused on the existence conditions of stealthy integrity attacks. To
this end, the unified framework of control and detection has applied. It has
been revealed that all kernel attacks can be structurally detected when
residual generation is extended to the space spanned by the control signal.
In other words, not only the observer-based residual, but also the control
signal based residual signals are needed for a reliable detection of kernel
attacks. As a result of this work, the necessary and sufficient conditions
for detecting kernel attacks are given.

\bigskip

Based on the analytical results in the first part of our study, we have
proposed two schemes for detecting kernel attacks. Using the known results
and methods of the unified control and detection framework, both schemes
result in reliable detection of kernel attacks without any loss of control
performance. While the first detector is configured similar to the existing
methods like the moving target method and auxiliary system aided detection
scheme \cite{MT-method-CDC2015,Zhang-CDC2017,DIBAJI2019-survey,GWSOM2019},
the second detector is realised with the encrypted transmissions of control
and monitoring signals in the feedback control system that prevent adversary
to gain system knowledge by means of eavesdropping attacks. The theoretical
basis for such detector configurations is the observer-based,
residual-driven realisation of all stabilising feedback controllers. In
particular, the functionalisation of controllers in the unified control and
detection framework plays an essential role in developing the second
detection scheme.

\bigskip

It should be remarked that our study in this work has been performed on the
assumptions that (i) the LTI system models are not corrupted with model
uncertainties, and (ii) the kernel attacks are presented in the additive
form (although the replay attack is a multiplicative signal, it is handled
as an additive one). In this context, the concept of kernel attacks and the
derived existence conditions are in fact the expressions of structural
properties of the feedback control system under consideration. So far, the
proposed detection schemes would work well in laboratory conditions, but
cannot be directly applied in real industrial applications without
modifications. This fact motivates our future work to deal with
cyber-attacks in the multiplicative form, for instance, false data injection
attacks \cite{LZLWD2017,GWSOM2019}, and on automatic control systems with
uncertainties. The unified control and detection framework and the
associated detection methods developed recently \cite%
{LD-Automatica-2020,Ding2020} could serve as efficient tools.

\bigskip

\textbf{Appendix Proof of Lemma 1}

\bigskip

Since 
\begin{align*}
R_{12}(z)F_{2}& =F_{2}+\left( F_{2}-F_{1}\right) \left( zI-A_{F_{2}}\right)
^{-1}BF_{2} \\
& =F_{1}+\left( F_{2}-F_{1}\right) +\left( F_{2}-F_{1}\right) \left(
zI-A_{F_{2}}\right) ^{-1}BF_{2} \\
& =F_{1}+\left( F_{2}-F_{1}\right) \left( zI-A_{F_{2}}\right) ^{-1}\left(
zI-A\right) ,
\end{align*}%
it turns out%
\begin{align*}
R_{12}(z)Y_{2}(z)& =-\left( F_{1}+\left( F_{2}-F_{1}\right) \left(
zI-A_{F_{2}}\right) ^{-1}\left( zI-A\right) \right) \left(
zI-A_{L_{2}}\right) ^{-1}L_{2}, \\
R_{12}(z)X_{2}(z)& =R_{12}(z)-\left( F_{1}+\left( F_{2}-F_{1}\right) \left(
zI-A_{F_{2}}\right) ^{-1}\left( zI-A\right) \right) \left(
zI-A_{L_{2}}\right) ^{-1}(B-L_{2}D).
\end{align*}%
Moreover, the relation%
\begin{gather*}
\left( zI-A\right) \left( zI-A_{L}\right) ^{-1}L=\left( I+LC\left(
zI-A\right) ^{-1}\right) ^{-1}L \\
=L\left( I-C\left( zI-A+LC\right) ^{-1}L\right) =L\hat{M}(z)
\end{gather*}%
leads to%
\begin{gather}
R_{12}(z)Y_{2}(z)=-F_{1}\left( zI-A_{L_{2}}\right) ^{-1}L_{2}+\bar{R}_{12}(z)%
\hat{M}_{2}(z)  \label{eq4-15a} \\
R_{12}(z)X_{2}(z)=R_{12}(z)-F_{1}\left( zI-A_{L_{2}}\right) ^{-1}(B-L_{2}D)-
\notag \\
\left( F_{2}-F_{1}\right) \left( zI-A_{F_{2}}\right) ^{-1}\left(
I-L_{2}C\left( zI-A+L_{2}C\right) ^{-1}\right) (B-L_{2}D)  \notag \\
=I-F_{1}\left( zI-A_{L_{2}}\right) ^{-1}(B-L_{2}D)+\left( F_{2}-F_{1}\right)
\left( zI-A_{F_{2}}\right) ^{-1}L_{2}D  \notag \\
+\left( F_{2}-F_{1}\right) \left( zI-A_{F_{2}}\right) ^{-1}L_{2}C\left(
zI-A+L_{2}C\right) ^{-1}(B-L_{2}D)  \notag \\
=I-F_{1}\left( zI-A_{L_{2}}\right) ^{-1}(B-L_{2}D)-\bar{R}_{12}(z)\hat{N}%
_{2}(z).  \label{eq4-15b}
\end{gather}%
Next, we consider $\left( zI-A_{L_{2}}\right) ^{-1}L_{2}$ and $\left(
zI-A_{L_{2}}\right) ^{-1}(B-L_{2}D).$ It is straightforward that 
\begin{gather}
\left( zI-A_{L_{2}}\right) ^{-1}L_{2}-\left( zI-A_{L_{1}}\right) ^{-1}L_{1} 
\notag \\
=\left( zI-A_{L_{2}}\right) ^{-1}\left( L_{2}-\left( zI-A+L_{2}C\right)
\left( zI-A_{L_{1}}\right) ^{-1}L_{1}\right)  \notag \\
=\left( zI-A_{L_{2}}\right) ^{-1}\left( L_{2}\left( I-C\left(
zI-A_{L_{1}}\right) ^{-1}L_{1}\right) -\left( zI-A\right) \left(
zI-A_{L_{1}}\right) ^{-1}L_{1}\right)  \notag \\
=\left( zI-A_{L_{2}}\right) ^{-1}L_{2}\hat{M}_{1}(z)-\left(
zI-A_{L_{2}}\right) ^{-1}L_{1}\hat{M}_{1}(z)\Longrightarrow  \notag \\
\left( zI-A_{L_{2}}\right) ^{-1}L_{2}=\left( zI-A_{L_{1}}\right)
^{-1}L_{1}+\left( zI-A_{L_{2}}\right) ^{-1}\left( L_{2}-L_{1}\right) \hat{M}%
_{1}(z)  \label{eq4-16}
\end{gather}%
as well as%
\begin{gather}
\left( zI-A_{L_{2}}\right) ^{-1}(B-L_{2}D)-\left( zI-A_{L_{1}}\right)
^{-1}(B-L_{1}D)=  \notag \\
\left( zI-A_{L_{2}}\right) ^{-1}\left( I-\left( zI-A+L_{2}C\right) \left(
zI-A_{L_{1}}\right) ^{-1}\right) B-\left( \left( zI-A_{L_{2}}\right)
^{-1}L_{2}-\left( zI-A_{L_{1}}\right) ^{-1}L_{1}\right) D  \notag \\
=\left( zI-A_{L_{2}}\right) ^{-1}\left( L_{1}-L_{2}\right) C\left(
zI-A_{L_{1}}\right) ^{-1}B-\left( zI-A_{L_{2}}\right) ^{-1}\left(
L_{2}-L_{1}\right) \hat{M}_{1}(z)D=  \notag \\
\left( zI-A_{L_{2}}\right) ^{-1}\left( L_{1}-L_{2}\right) \left( \hat{M}%
_{1}(z)D+C\left( zI-A_{L_{1}}\right) ^{-1}B\right) =-\left(
zI-A_{L_{2}}\right) ^{-1}\left( L_{2}-L_{1}\right) \hat{N}_{1}(z)  \notag \\
\Longrightarrow \left( zI-A_{L_{2}}\right) ^{-1}(B-L_{2}D)=\left(
zI-A_{L_{1}}\right) ^{-1}(B-L_{1}D)-\left( zI-A_{L_{2}}\right) ^{-1}\left(
L_{2}-L_{1}\right) \hat{N}_{1}(z).  \label{eq4-17}
\end{gather}%
Furthermore, it is well-known that 
\begin{align}
\left[ 
\begin{array}{cc}
-\hat{N}_{2}(z) & \hat{M}_{2}(z)%
\end{array}%
\right] & =Q_{21}(z)\left[ 
\begin{array}{cc}
-\hat{N}_{1}(z) & \hat{M}_{1}(z)%
\end{array}%
\right] \Longleftrightarrow  \label{eq4-18} \\
\left[ 
\begin{array}{cc}
-\hat{N}_{1}(z) & \hat{M}_{1}(z)%
\end{array}%
\right] & =Q_{12}(z)\left[ 
\begin{array}{cc}
-\hat{N}_{2}(z) & \hat{M}_{2}(z)%
\end{array}%
\right] .  \label{eq4-18a}
\end{align}%
Summarising (\ref{eq4-15a})-(\ref{eq4-18a}) leads to 
\begin{gather*}
R_{12}(z)Y_{2}(z)=Y_{1}(z)-\left( F_{1}\left( zI-A_{L_{2}}\right)
^{-1}\left( L_{2}-L_{1}\right) -\bar{R}_{12}(z)Q_{21}(z)\right) \hat{M}%
_{1}(z), \\
R_{12}(z)X_{2}(z)=X_{1}(z)+\left( F_{1}\left( zI-A_{L_{2}}\right)
^{-1}\left( L_{2}-L_{1}\right) -\bar{R}_{12}(z)Q_{21}(z)\right) \hat{N}%
_{1}(z)\Longrightarrow \\
\left[ 
\begin{array}{cc}
X_{1}(z) & Y_{1}(z)%
\end{array}%
\right] =R_{12}(z)\left[ 
\begin{array}{cc}
X_{2}(z) & Y_{2}(z)%
\end{array}%
\right] +\bar{Q}_{11}(z)\left[ 
\begin{array}{cc}
-\hat{N}_{1}(z) & \hat{M}_{1}(z)%
\end{array}%
\right]
\end{gather*}%
as well as%
\begin{gather*}
R_{12}(z)Y_{2}(z)=Y_{1}(z)-F_{1}\left( zI-A_{L_{2}}\right) ^{-1}\left(
L_{2}-L_{1}\right) \hat{M}_{1}(z)+\bar{R}_{12}(z)\hat{M}_{2}(z), \\
R_{12}(z)X_{2}(z)=X_{1}(z)+F_{1}\left( zI-A_{L_{2}}\right) ^{-1}\left(
L_{2}-L_{1}\right) \hat{N}_{1}(z)-\bar{R}_{12}(z)\hat{N}_{2}(z)%
\Longrightarrow \\
\left[ 
\begin{array}{cc}
X_{1}(z) & Y_{1}(z)%
\end{array}%
\right] =R_{12}(z)\left[ 
\begin{array}{cc}
X_{2}(z) & Y_{2}(z)%
\end{array}%
\right] +\bar{Q}_{12}(z)\left[ 
\begin{array}{cc}
-\hat{N}_{2}(z) & \hat{M}_{2}(z)%
\end{array}%
\right] , \\
\bar{Q}_{12}(z)=F_{1}\left( zI-A_{L_{2}}\right) ^{-1}\left(
L_{2}-L_{1}\right) Q_{12}(z)-\bar{R}_{12}(z).
\end{gather*}%
Since 
\begin{gather*}
\left( zI-A_{L_{2}}\right) ^{-1}\left( L_{2}-L_{1}\right) Q_{12}(z)=\left(
zI-A_{L_{2}}\right) ^{-1}\left( L_{2}-L_{1}\right) \left( I+C\left(
zI-A_{L_{1}}\right) ^{-1}\left( L_{2}-L_{1}\right) \right) \\
=\left( zI-A_{L_{2}}\right) ^{-1}\left( I+\left( L_{2}-L_{1}\right) C\left(
zI-A_{L_{1}}\right) ^{-1}\right) \left( L_{2}-L_{1}\right) =\left(
zI-A_{L_{1}}\right) ^{-1}\left( L_{2}-L_{1}\right) ,
\end{gather*}%
we finally have%
\begin{equation*}
\bar{Q}_{12}(z)=F_{1}\left( zI-A_{L_{1}}\right) ^{-1}\left(
L_{2}-L_{1}\right) -\left( F_{1}-F_{2}\right) \left( zI-A_{F_{2}}\right)
^{-1}L_{2}.
\end{equation*}%
The lemma is proved.

\bigskip


\end{document}